\newif\iftr
\declaretheorem[name=Theorem]{theorem}
\newtheorem{definition}{Definition}
\newtheorem{lemma}[theorem]{Lemma}
\newtheorem{example}{Example}
\newtheorem*{remark}{Remark}
\newcommand{\added}[1]{#1}
\newcommand{\ttodo}[4]{\ifthenelse{\equal{#1}{inline}}{\todo[inline, author=#2, color = 
#3]{#4}}{\todo[color=#3]{#2: #4}}}
\newcommand{\hide}[1]{}
\newcommand{\Wlog}{W.l.o.g.\ }
\newcommand{\wrt}{w.r.t.\ }
\newcommand{\st}{s.t.\ }
\newcommand{\ie}{i.e.\ }
\newcommand{\eg}{e.g.\ }
\newcommand{\cf}{cf.\ }
\newcommand{\vs}{vs.\ }
\newlength{\myl}
\newcommand{\longsquigarrow}[1]{
    \settowidth{\myl}{$~_{#1}$}
    \raisebox{-0.01cm}{\xymatrix@C=\myl{
            {}\ar@{~>}[r]^{~_{#1}}&{}
        }
    }
}
\newcommand{\ACzero}{\ensuremath{\mathsf{AC^0}}\xspace}
\newcommand{\DL}{\textsl{DL-Lite}}
\newcommand{\OP}{\ensuremath{\mathsf{OP}}\xspace}
\newcommand{\sk}{\ensuremath{_\mathsf{sk}}\xspace}
\newcommand{\cq}{\ensuremath{_\mathsf{cq}}\xspace}
\newcommand{\leaf}{\ensuremath{\mathsf{leaf}}\xspace}
\newcommand{\sink}{\ensuremath{\mathsf{sink}}\xspace}
\newcommand{\card}[1]{\lvert #1\rvert}
\newcommand{\PTime}{\text{\upshape{\textsc{P}}}\xspace}
\newcommand{\PSpace}{\text{\upshape{\textsc{PSpace}}}\xspace}
\newcommand{\NP}{\text{\upshape{\textsc{NP}}}\xspace}
\newcommand{\ExpTime}{\text{\upshape{\textsc{ExpTime}}}\xspace}
\newcommand{\NExpTime}{\text{\upshape{\textsc{NExpTime}}}\xspace}
\newcommand{\LogSpace}{\text{\upshape{\textsc{LogSpace}}}\xspace}
\newcommand{\Amc}{\ensuremath{\mathcal{A}}\xspace}
\newcommand{\Kmc}{\ensuremath{\mathcal{K}}\xspace}
\newcommand{\Lmc}{\ensuremath{\mathcal{L}}\xspace}
\newcommand{\Tmc}{\ensuremath{\mathcal{T}}\xspace}
\newcommand{\ALC}{\ensuremath{\mathcal{ALC}}\xspace}
\newcommand{\EL}{\ensuremath{\mathcal{E}\hspace{-0.1em}\mathcal{L}}\xspace}
\newcommand{\ALCHOI}{\ensuremath{\mathcal{ALCHOI}}\xspace}
\newcommand{\ALCHOIQ}{\ensuremath{\mathcal{ALCHOIQ}}\xspace}
\newcommand{\Horn}[1]{\textsl{Horn-}#1}
\newcommand{\HornALC}{\Horn{\ALC}}
\newcommand{\HornALCHOI}{\Horn{\ALCHOI}}
\newcommand{\HornALCHOIQ}{\Horn{\ALCHOIQ}}
\renewcommand{\L}{\ensuremath{\mathcal{L}}\xspace}
\newcommand{\Lcq}{\ensuremath{\L_{\textit{cq}}}\xspace}
\newcommand{\DLLite}{\ensuremath{\textsl{DL-Lite}}\xspace}
\newcommand{\DLLiteR}{\ensuremath{\textsl{DL-Lite}_R}\xspace}
\newcommand{\q}{\ensuremath{\mathbf{q}}\xspace} 
\newcommand{\x}{\ensuremath{\vec{x}}\xspace}
\newcommand{\y}{\ensuremath{\vec{y}}\xspace}
\renewcommand{\a}{\ensuremath{\vec{a}}\xspace}
\newcommand{\tup}[1]{\langle #1 \rangle}
\newcommand{\NC}{\ensuremath{\textsf{N}_\textsf{C}}\xspace}
\newcommand{\NR}{\ensuremath{\textsf{N}_\textsf{R}}\xspace}
\newcommand{\NI}{\ensuremath{\textsf{N}_\textsf{I}}\xspace}
\newcommand{\sig}{\ensuremath{\textsf{\upshape{sig}}}\xspace}
\newcommand{\answer}{\ensuremath{\vec{a}}\xspace}
\newcommand{\ex}[1]{\ensuremath{\textsf{\upshape{#1}}}\xspace}
\newcommand{\p}{\ensuremath{\mathcal{P}}\xspace}
\newcommand{\R}{\ensuremath{\mathfrak{D}}\xspace}
\newcommand{\Rcq}{\ensuremath{\R_\textsl{cq}}\xspace}
\newcommand{\Rsk}{\ensuremath{\R_\textsl{sk}}\xspace}
\newcommand{\el}{\ensuremath{\ell}\xspace}
\newcommand{\ds}{\ensuremath{\mathcal{D}}\xspace}
\newcommand{\MOPO}{\upshape{(\ensuremath{\mathbf{MP_e}})}\xspace}
\newcommand{\CONJ}{\upshape{(\ensuremath{\mathbf{C_e}})}\xspace}
\newcommand{\EQUAL}{\upshape{(\ensuremath{\mathbf{E_e}})}\xspace}
\newcommand{\TAUT}{\upshape{(\ensuremath{\mathbf{T_e}})}\xspace}
\newcommand{\EXISTS}{\upshape{(\ensuremath{\mathbf{G_e}})}\xspace}
\newcommand{\gMOPO}{\upshape{(\ensuremath{\mathbf{MP}})}\xspace}
\newcommand{\gCONJ}{\upshape{(\ensuremath{\mathbf{C}})}\xspace}
\newcommand{\gEQUAL}{\upshape{(\ensuremath{\mathbf{E}})}\xspace}
\newcommand{\gEXISTS}{\upshape{(\ensuremath{\mathbf{G}})}\xspace}
\newcommand{\m}{\ensuremath{\mathfrak{m}}\xspace}
\newcommand{\mx}{\ensuremath{\mathfrak{m}_{\mathsf{x}}}\xspace}
\newcommand{\mtree}{\ensuremath{\m_{\mathsf{t}}}\xspace}
\newcommand{\msize}{\ensuremath{\m_{\mathsf{s}}}\xspace}
\newcommand{\mdomain}{\ensuremath{\m_{\mathsf{d}}}\xspace}
\newcommand{\dom}{\textsf{dom}\xspace}
\newcommand{\In}{\text{\upshape{\textsc{In}}}\xspace}
\newcommand{\Out}{\text{\upshape{\textsc{Out}}}\xspace}
\newcommand{\Size}{\text{\upshape{\textsc{Size}}}\xspace}
\newcommand{\Inds}[1]{\textsf{ind}(#1)\xspace}
\begin{document}
\title{Explaining Ontology-Mediated Query Answers using Proofs over Universal
Models \\(Technical Report)}
%
%
\author{Christian Alrabbaa
\and
Stefan Borgwardt
\and
Patrick Koopmann
\and
Alisa Kovtunova
}
%
%
%
\date{}
\maketitle

\begin{abstract}
In ontology-mediated query answering, access to incomplete data sources is
mediated by a conceptual layer constituted by an ontology, which can be
formulated in a description logic (DL) or using existential rules.
In the literature, there exists a
multitude of complex techniques for incorporating ontological knowledge into
queries.
However, few of these approaches were designed for explainability of the
query answers. We tackle this challenge by adapting
an existing proof framework toward conjunctive query
answering, based on the notion of universal models.
We investigate the data and combined complexity of determining the
existence of a proof below a given quality threshold, which can be measured in
different ways. By distinguishing various parameters such as the shape of the
query, we obtain an overview of the complexity of this problem for several Horn
DLs.
\end{abstract}

\section{Introduction}

Description logics (DLs) are a family of knowledge representation formalisms that can be seen as decidable fragments of first-order logic using only unary and binary predicates~\cite{BHLS-17}.
This family contains very expressive DLs like $\mathcal{SROIQ}$, which underlies the standardized Web Ontology Language OWL~2,\footnote{\url{https://www.w3.org/TR/owl2-overview/}} as well as the light-weight DLs \DLLiteR and \EL, corresponding to the OWL~2 profiles QL and EL, respectively.
%
We focus here on \emph{Horn DLs} up to \Horn\ALCHOI~\cite{DBLP:conf/ijcai/OrtizRS11,DBLP:journals/tocl/KrotzschRH13}, whose axioms can be expressed as \emph{existential rules} (with equality)~\cite{DBLP:journals/ws/CaliGL12}.
%
%
The complexity of standard reasoning problems such as entailment of axioms or
facts (ground atoms) from an ontology (a finite set of axioms) has been studied for decades and
is well-understood by now~\cite{DBLP:conf/kr/OrtizRS10,BHLS-17}.
%
Another popular reasoning problem for DLs is that of \emph{ontology-mediated query answering (OMQA)}, which generalizes query answering over databases by allowing to query implicit knowledge that is inferred by the ontology~\cite{JAR-2007,DBLP:conf/ijcai/OrtizRS11}.
%
%

\emph{Explaining} DL reasoning has a long tradition, starting with the first works on \emph{proofs} for standard DL entailments~\cite{DeMc-96,DBLP:conf/ecai/BorgidaFH00}.
A popular and very effective method is to compute \emph{justifications}, which simply point out the axioms 
from the ontology that are responsible for an 
entailment~\cite{ScCo03,DBLP:conf/ki/BaaderPS07,Pena-09,Horr-11}.
More recently, work has resumed on techniques to find proofs for explaining more complex logical consequences~\cite{DBLP:conf/semweb/HorridgePS10,KaKS-DL17,LPAR23:Finding_Small_Proofs_for,ABB+-DL20,ABB+-CADE21}.
On the other hand, if a desired entailment does \emph{not} hold, one needs different explanation 
techniques such as
abduction~\cite{DBLP:conf/ijcai/Koopmann21,EX_RULES_ABDUCTION,%
DBLP:conf/kr/CalvaneseOSS12} or
counterinterpretations~\cite{DBLP:conf/ki/AlrabbaaHT21}. 
Explaining answers to conjunctive queries (CQs) has also been investigated before, in the form of abduction for missing 
answers over \DLLite ontologies~\cite{DBLP:conf/kr/CalvaneseOSS12}, provenance for positive answers in \DLLite and 
\EL~\cite{DBLP:conf/ijcai/CalvaneseLOP019,DBLP:conf/ijcai/BourgauxOPP20}, as proofs for \DLLite query 
answering~\cite{DBLP:conf/otm/BorgidaCR08,Stefanoni-11,DBLP:conf/ekaw/CroceL18}, as well as proofs and provenance for rule reasoning~\cite{ElKM-RuleMLRR22,DBLP:conf/grades/RamusatMS22}.
Inspired by the latter,
we also investigate proofs for CQ answers, but consider more expressive DLs and want to 
find \emph{good} proofs according to different quality measures.
We focus on Horn DLs, for which every ontology has a \emph{universal model} that captures
exactly the query answers over the ontology~\cite{DBLP:journals/ws/CaliGL12}. 
While classically models are used for explaining missing entailments~\cite{DBLP:conf/ki/AlrabbaaHT21}, 
this property allows us to use universal models also to explain positive query answers. 
%
%

\begin{figure}[tb]
\centering\begin{footnotesize}
  \begin{tikzpicture}[scale=0.8,
  block/.style={
  draw, minimum width=20pt, inner sep = 5pt,
  rounded rectangle,
  },
  tbox/.style={
  draw=gray, text=gray,
  circle,
  },
  he/.style={rounded corners=10pt,->},
  match/.style={green!50!black}]
  \path  (0,0) node[block] (x0) {$\ex{a}\smash{\textcolor{gray}{{}=x'}}$}
   ([shift={(0:1.2)}]x0) node[rectangle,draw=blue] (y0) {$A$}
   ([shift={(0:-1.1)}]x0) node[match] (y5) {$D$}
     (-1,1.8) node[block] (x1) {{\vphantom{$a$}\smash{$f(\ex{a})\textcolor{gray}{{}=y}$}}}
      ([shift={(0:1.3)}]x1) node (y2) {$B$}
      ([shift={(0:-1.3)}]x1) node (y4) {$E$}
     (-2,3.8) node[block] (x2) {{\vphantom{$a$}\smash{$g(f(\ex{a}))$}}};
       \draw[he] (x0)--  ($(x0)!0.5!(x1)+(-0.5,-0.3)$) node[match,above left=-2pt] (y1) {$r$} --(x1);
       \draw[he] (x1)-- ($(x2)!0.5!(x1)+(-0.5,-0.3)$) node[above left=-2pt] (y3) {$s$} --(x2);
       \draw[he,dashed, gray] (y0)--  ($(y0|-y1)+(-0.3,0)$) -- ($(y1)+(0.4,0)$);
       \draw[he,dashed, gray] (y0)|-(y2);
       \draw[he,dashed, gray] (y2)--  ($(y2|-y3)+(-0.3,0)$) -- ($(y3)+(0.4,0)$);
        \draw[he,dashed, gray] (y1)-- ($(y1)-(1.7,0)$) -- ($(y4)-(1,0)$) -- (y4);
        \draw[he,dashed, gray] (y3) -- ($(y3)-(0.8,0)$) --  ($(y4)-(1,0)$) -- (y4);
        \draw[he,dashed, gray] (y4)-- (y4|-y5) -- (y5);
        \draw[he,dashed, gray] (y1) -- ($(y1)-(.9,0)$) --  ($(y5)-(.9,0)$) -- (y5);
        \path  (-6.2,0) node[block, fill=gray!20] (z3) {$\ex{a}$}
        (-4.7,0) node[block,fill=gray!20] (z0) {\vphantom{$a$}\smash{$x'$}}
   ([shift={(0:0.6)}]z0) node[match] (z1) {$D$}
     (-5.45,1.8) node[block,fill=gray!20] (z2) {\vphantom{$a$}\smash{$y$}};
     \draw[he] (z0) edge node[match,right] {$r$} (z2);
     \draw[he] (z3) edge node[match,left] {$r$} (z2);
    \end{tikzpicture}
  \end{footnotesize}
\caption{\added{The query (on the left) and the relevant part of the universal model (on the right) 
from 
Example~\ref{ex}.}}
 \label{fig:uni-model}
\end{figure}
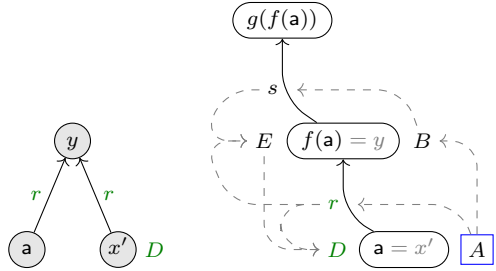
%
 \begin{example}\label{ex}
  Consider the fact $A(\ex{a})$, the existential rules (which can be expressed in \Horn\ALCHOI)
  %
  \begin{align*}
  A(x)&\to\exists y.\,r(x,y) \land B(y), &
  s(x,y) \land  r(z,x) &\to E(x), \\
  B(x)&\to\exists z.\,s(x,z) \land A(z), &
  E(x) \land r(y,x) &\to D(y),&
  \end{align*}
  and the \emph{conjunctive query} $\q(x) = \exists x',y.\, r (x, y) \,\land\,  r (x', y) \,\land\, D(x').$ 
  Individual $\ex{a}$ is an \emph{answer} to $\q$ in this ontology. The query instantiated with this answer 
  is depicted on the left in Fig.~\ref{fig:uni-model}, using edges for binary predicates
  and node labels for unary predicates. 
%
  To explain the answer, we show on the right of the figure the relevant part of the universal model of the
ontology, where unary and binary predicates are represented similarly. The nodes represent objects in the 
model and are identified by \emph{Skolem terms}, together with the assignments to the variables in the 
query. 
For example, 
$f(\ex{a})$ can be described as \enquote{the $r$-successor of~$\ex{a}$}, which has to be present 
in any model of the ontology due to the first rule. 
\added{The Skolem functions like~$f$ and~$g$ are created uniquely for each existentially quantified variable in the rules}.
In addition to explaining how the query is matched to the universal model, the 
dashed gray edges 
%
  indicate a \emph{proof} of $\q(\ex{a})$. For instance, $A(\ex{a})$, together with the first rule, implies 
  the existence of the $r$-successor satisfying $B$,
  and $D(\ex{a})$ follows from $E(f(\ex{a}))$ and $r(\ex{a},f(\ex{a}))$ through the last rule. 
  To make this representation more accessible for larger proofs, in real applications we would show proof 
  steps only on 
  demand, whenever a user selects a fact to be explained in the model.

\end{example}

In previous work~\cite{LPAR23:Finding_Small_Proofs_for,ABB+-CADE21}, we developed a 
\added{ formal } framework for proofs 
\added{ in } standard DL reasoning.
We 
investigated the complexity of finding \emph{small} proofs according to different proof measures: 
\emph{(proof) size}, \ie the number of distinct formulas in a proof, and \emph{(proof) tree size}, 
\ie the size when the proof is presented as tree, as it is done often in 
practice~\cite{KaKS-DL17,EvonneIJCAR22}. 
In this framework, proofs are generated by a so-called 
\emph{deriver} that specifies which inferences are possible in a proof.

%
\begin{table}[t]
\centering%
\caption{Summary of the combined complexity results for $\OP\sk(\L,\m)$.}%
\label{table:summary}%
\tabcolsep=6pt%
\begin{tabular}{c c c c c c}
  \toprule
 &  \multicolumn{2}{c|}{\DL} &   \multicolumn{2}{c|}{\EL} & {\HornALCHOI} \\
Measure & tree-sh. & CQ & IQ  & CQ & CQ  \\
  \midrule
 \raisebox{6pt}{Domain size} & \multicolumn{2}{c|}{\shortstack{\NP-c \\
{\scriptsize [Th.\,\ref{thm:dllite-el-upper-bounds},\ref{the:np-hard-trees}]}}}
& \multicolumn{2}{c|}{\shortstack{in \ExpTime \\ {\scriptsize
[Th.\,\ref{thm:dllite-el-upper-bounds}]}}} & {\shortstack{in
\NExpTime \\ {\scriptsize [Th.\,\ref{th:HornSROIQ}]}}} \\
 \raisebox{6pt}{Tree size}  & \shortstack{in \PTime \\ {\scriptsize
[Th.\,\ref{th:ptime-combined}]}}  & \shortstack{\NP-c \\ {\scriptsize
[Th.\,\ref{thm:dllite-el-upper-bounds},\ref{thm:dllite-lower}]}} &\shortstack{\PTime-c \\ {\scriptsize
[Th.\,\ref{the:el-bounds}]}} & \shortstack{\NP-c \\ {\scriptsize
[Th.\,\ref{the:el-bounds}]}} & 
 \shortstack{ in \PSpace \\ {\scriptsize [Th.\,\ref{th:HornSROIQ}]}}
 \\
 \raisebox{6pt}{Proof size} & \multicolumn{2}{c|}{\shortstack{\NP-c \\
{\scriptsize [Th.\,\ref{thm:dllite-el-upper-bounds},\ref{the:np-hard-trees}]}}}
& \multicolumn{3}{c}{\shortstack{in \ExpTime \\ {\scriptsize
[Th.\,\ref{th:HornSROIQ-IQ}]}}} 
\\
 \midrule
 \shortstack{Proof size \\ bound}
 & \multicolumn{2}{c|}{\shortstack{polynomial \\ {\scriptsize [Lem.\,\ref{lem:size-lower-light}]} }} & 
 \multicolumn{2}{c|}{\shortstack{exponential \\ {\scriptsize [Lem.\,\ref{lem:size-lower-light}]} }}  & 
 \shortstack{double exponential \\ {\scriptsize [Lem.\,\ref{lem:size-lower}]}
}
\\
 \bottomrule
\end{tabular}
\end{table}

\added{To be able to reuse results, the 
present work develops proofs for query answers within the same framework.}
In particular, in order to explain query answers using universal models, we introduce a special deriver that 
applies to a large 
family of Horn-DLs, and in which inferences in the proof
directly correspond to the construction of the universal model.
For such proofs, 
if we visualize them as in the example, another proof measure becomes relevant:
the \emph{domain size}, which is the number of elements from the universal model that are used in the proof. 
In the example, the domain size of the proof is 3. 
After introducing our deriver, we investigate the complexity of finding good proofs w.r.t. the different 
measures, as 
well as bounds on the size of the obtained proofs. An overview of our results is shown in 
Table~\ref{table:summary}. Because it 
introduces fresh objects, our deriver is \added{only sound for a Skolemized version of the TBox, and not 
for the original TBox.}
At the end of the paper, we have 
a brief 
look at another deriver in which all inferences are sound \wrt the original TBox, and argue that, 
while the complexity of the resulting decision problem is often similar,
this 
deriver is less helpful in explaining query answers to users. This paper extends initial results in this 
direction 
from a workshop paper~\cite{DL22paper}.

Proof details can be found in the appendix.

\section{Preliminaries}\label{sec:preliminaries}



\paragraph{Logics.}
We assume basic knowledge about first-order logic and familiarity with terminology such as variables, terms, atoms, sentences, etc.
%
%
Throughout the paper, we use \L to refer to fragments of first-order logic.
DLs are fragments of the two-variable fragment, for which we assume 
unary predicates to be taken from a countably infinite set $\NC$ of \emph{concept names}, 
binary predicates to be taken from a countably infinite set $\NR$ of \emph{role names},
and constants to be taken from a countably infinite set $\NI$ of \emph{individual names}~\cite{BHLS-17}.
Moreover, we use $\top$ and $\bot$ as special concept names that are always satisfied or 
\added{always} not satisfied, respectively.
We focus on \emph{Horn DLs} that can be represented using existential rules with 
equality~\cite{DBLP:journals/ws/CaliGL12}. An existential rule is 
a first-order sentence of the form
$\forall\y,\vec{z}.\,\psi(\y,\vec{z})\to\exists\vec{u}.\,\chi(\vec{z},\vec{u})$, with the
\emph{body} $\psi(\y,\vec{z})$ and the \emph{head} $\chi(\vec{z},\vec{u})$ being conjunctions of atoms of the form $A(t_1)$, $R(t_1,t_2)$, or $t_1=t_2$, 
where $t_1$ and $t_2$ are constants or variables from 
$\vec{z}$, $\vec{u}$ and~$\y$. 
We usually omit the universal quantification.

\begin{table}[tb]
  \centering
  \caption{Sentences of $\HornALCHOI$, where 
  $A,B,C\in\NC$,
  $a\in\NI$,
   $R,R_1,R_2$ are \emph{roles} of the form~$r$ or~$r^-$ (\emph{inverse role}), $r\in\NR$, 
  and we identify $r^-(x,y)$ with $r(y,x)$.
  }
  \label{tab:normal-form}
  \begin{tabular}{r@{\quad}l@{\quad}l}
    \toprule
    (i) & $A\sqsubseteq B$ & $A(x)\to B(x)$\\
    (ii) & $A\sqcap B\sqsubseteq C$ & $A(x)\land B(x)\to C(x)$ \\
    (iii) & $\exists R.A\sqsubseteq B$ & $R(x,y)\land A(y)\to B(x)$ \\
    (iv) & $A\sqsubseteq\exists R.B$ & $A(x)\to\exists y.\,R(x,y)\land B(y)$ \\
    (v) & $A\sqsubseteq\forall R.B$ & $A(x)\land R(x,y) \to B(y)$ \\
    (vi) & $A\sqsubseteq\{a\}$ & $A(x)\to x=a$ \\
    (vii) & $R_1\sqsubseteq R_2$ & $R_1(x,y) \to R_2(x,y)$ \\
    \bottomrule
  \end{tabular}
\end{table}
For DLs, one usually uses a different, dedicated syntax. Table~\ref{tab:normal-form} shows the allowed rules in 
$\HornALCHOI$, together with their representation in DL syntax, where, for simplicity, we assume the rules to be 
\emph{normalized}. A set $\Tmc$ of such rules is called \emph{TBox} or \emph{ontology}.
In $\HornALC$, only expressions of the forms (i)--(v) without inverse roles are allowed, 
\EL further restricts \Horn\ALC by disallowing~(v) and~$\bot$, and 
\DLLite only allows expressions $R_1\sqsubseteq R_2$, $A\sqsubseteq C$, $C\sqsubseteq A$, and $A\sqcap B\sqsubseteq\bot$, 
where $R_1$, $R_2$ are (possibly inverse) roles, $A,B\in\NC$, and $C$ is either a concept name or $\exists R.\top$, for a (possibly inverse) role~$R$.

\paragraph{Query Answering.} An \emph{ABox}~\Amc is a set of ground atoms (called \emph{facts}) of the form
$A(a)$ or $r(a,b)$, which together with a TBox \Tmc forms a \emph{knowledge base} (KB) 
$\Kmc=\Tmc\cup\Amc$. 
Its \emph{signature} $\sig(\Kmc)$ is the set of all concept, role, and individual names $\Inds{\Kmc}$ occurring in it.
A \emph{conjunctive query (CQ)} $\q(\x)$ is an expression of the
form $\exists \y.\,\phi(\x,\y)$, where $\phi(\x,\y)$ is a conjunction of 
atoms~$A(t)$ or~$r(s,t)$ and $s,t$ are variables or constants. 
%
The variables in $\x$ are called \emph{answer
variables} and $\y$ are the \emph{existentially quantified variables}.
If $\q(\x)$ contains only a single unary atom, it is called \emph{instance query} (IQ).
If $\x$ is empty, then $\q(\x)$ is called \emph{Boolean}.
Note that ABox facts are a special case of Boolean CQs with only one atom
and no variables.
%
A tuple $\answer$ of constants from $\Amc$ is a \emph{certain answer} to
$\q(\x)$ over a KB $\Kmc$, written $\Kmc\models\q(\answer)$, if every model of $\Kmc$ satisfies the sentence $\q(\answer)$.
%
We may write $A(x) \in \q$ to indicate that $A(x)$ is an atom in~$\phi$.
%
A \emph{union of CQs} (UCQ) is a disjunction of CQs \added{sharing the same 
answer variables}.  
A CQ $\q(\x)$ is \emph{UCQ-rewritable} over a TBox~\Tmc if there exists a 
\emph{UCQ} $\q_\Tmc(\x)$ such that, 
for every ABox~\Amc and tuple~\a, $\Tmc\cup\Amc\models\q(\a)$ iff $\Amc\models\q_\Tmc(\a)$.
This is the case, for example, for all CQs over \DLLiteR TBoxes~\cite{JAR-2007}.
%
Since we consider proofs for a given, fixed answer~\a, we consider only the Boolean CQ $\q(\a)$, which we denote in the following simply as~\q.

\paragraph{Proofs.}
\added{Following the formal framework}
in~\cite{LPAR23:Finding_Small_Proofs_for,ABB+-DL20,ABB+-CADE21}, we \added{view} 
proofs in a logic \Lmc as finite directed hypergraphs $(V,E,\ell)$
where each vertex $v\in V$ is labeled by an \Lmc-sentence $\ell(v)$, and every 
hyperedge is of the form $(S,d)\in E$ 
the \added{finite} set $S\subseteq V$ being the \emph{premises} and $d\in V$ the \emph{conclusion}, which we may depict as
\begin{center}
 \AXC{$p$}
 \AXC{$p\to q$}
 \BIC{$q$}
 \DP
 \quad
 or
 \quad
 \scalebox{0.8}{
 \begin{tikzpicture}[
  scale=0.8,
  baseline=(x3.north),
  block/.style={
  draw,
  rounded rectangle,
  minimum width={width("$A \sqsubseteq \exists r.(\exists r.A)$")-5pt},
  inner sep=2pt,
  },
  tbox/.style={block,text=gray},
  he/.style={rounded corners=5pt,->}]
  \path (-1,3.5) node[block] (x1) {$p$}
       (5,3.5) node[tbox] (x2) {$p\to q$}
       (2,2.85) node[block] (x3) {$q$};
  \draw[he] (x1) -- ($(x1)!0.5!(x2)$)-- (x3);
  \draw[he] (x2) -- ($(x1)!0.5!(x2)$)-- (x3);
  \end{tikzpicture}
  }
\end{center}
\added{We call these edges also \emph{inferences}.}
Proofs can be found by looking at derivation structures. 
Formally, a \emph{derivation structure} over a KB~\Kmc is a possibly infinite hypergraph as above in 
which each 
inference $(S,d)$ is \emph{sound}, \added{that is}, the labels of $S$ logically entail the label of $d$,
%
and every leaf \added{(vertex without incoming edges)} is labeled by an element
of~\Kmc.
A \emph{proof} for an entailment $\Kmc\models\eta$
is a \added{finite} derivation structure that (i) is acyclic, (ii) has 
exactly one sink
(the conclusion),
which is labeled by the goal sentence~$\eta$, and (iii) in which each vertex $v$ is the conclusion of at 
most one 
hyperedge $(S,v)$.
The \emph{size} of a proof is the number of its vertices, and the \emph{tree size} is the size of its tree 
unraveling, starting from the sink.
%

Proofs are usually generated based on a calculus or some reasoning system. 
\added{This is formalized by the} notion of a 
deriver, which, for a given entailment $\Kmc\models\eta$, generates a derivation structure in which different 
possible proofs 
can be found.
Formally, a \emph{deriver} $\R$ for a
logic $\Lmc$ \added{is a function that} takes as input an $\Lmc$-theory $\Kmc$ and an
$\Lmc$-sentence~$\eta$, and returns a derivation structure
$\R(\Kmc,\eta)$ over $\Kmc$ that describes all inference
steps that $\R$ could perform in order to derive $\eta$ from $\Kmc$.
This structure is not necessarily computed explicitly, but can
be accessed through an oracle (in practice, this corresponds, for example, to
checking whether an inference conforms to a calculus).

\begin{remark} We argue that we can make some simplifying assumptions on
the shape of \HornALCHOI rules.
  \begin{enumerate}[label=(\alph*)]
  \item To keep constructions easier, we 
  assume \added{TBoxes} 
  to be normalized as in~\cite{DBLP:conf/ijcai/OrtizRS11,DBLP:conf/kr/CarralDK18}. 
  Such a normalization can always be performed in polynomial time by introducing fresh names as abbreviations for complex formulas and applying standard
  transformations. 
  We can transform a proof over a normalized \added{TBox} to a proof for the original non-normalized \added{TBox} by 
  (i) replacing the new names 
  with the original complex expressions, which may result in intermediate proof steps using atoms like 
  $(\exists r.A)(x)$, and (ii) possibly introducing new inference steps corresponding to normalization steps.
  This increases the size of the proofs at most polynomially, which is why we believe our results are also 
  relevant to non-normalized \added{TBoxes}. 
  \item We assume KBs to be consistent. Since for Horn DLs, $\bot$ is only useful to create inconsistencies, we 
    assume in the following that $\bot$ is never used. 
  
  \end{enumerate}
\end{remark}

\section{A Deriver Using Universal Models}\label{sec:DS4OMQA}

A distinguishing feature of Horn DLs is that every KB has a \emph{universal model} which satisfies 
exactly the Boolean CQs that are entailed by the KB. In the literature on existential rules, the term 
\emph{chase} 
refers to (different variants of) universal models~\cite{DBLP:journals/ws/CaliGL12}.
\added{Intuitively, a chase is constructed
by applying rules to facts, where fresh objects are introduced for existential quantified variables. As we illustrate in the introduction, proofs
connected to universal models can help to explain query answers. However,
because we require inferences to be sound, our framework does not allow for
an inference mechanism that introduces fresh objects. Our solution is to
provide a deriver that is sound \wrt the \emph{Skolemized} TBox, rather than
the original TBox. By Skolemizing, we eliminate existential quantification using fresh function symbols. The saturation 
of an ABox using a Skolemized TBox produces the least Herbrand model, which in turn corresponds to the \emph{Skolem chase} 
(a.k.a.\ semi-oblivious chase)~\cite{DBLP:conf/pods/Marnette09} of the original TBox.}
%
In our case, existential quantification only occurs in rules
of the form (iv) (
\added{see} Table~\ref{tab:normal-form}), which
then get transformed into $A(x) \to r(x,f(x))\land B(f(x))$ \added{where $f$ is unique for each existentially quantified variable}. Given a TBox $\Tmc$, we denote by
$\Tmc^s$ the result of Skolemizing all axioms in $\Tmc$. A universal model of $\Tmc\cup\Amc$ can then be obtained by
\enquote{applying} the rules in $\Tmc^s$ to $\Amc$ until a fixpoint is reached (which may result in an infinite
set of atoms).

\begin{figure}[tb]
\centering\begin{footnotesize}
  \begin{tikzpicture}[scale=0.8,
  block/.style={
  draw,
  rounded rectangle,
  },
  tbox/.style={
  draw=gray, text=gray,
  rounded rectangle,
  },
  he/.style={rounded corners=10pt,->}]
\path  (0,2) node[block,draw=blue] (x1) {$A(\ex{a})$}
      (7,2) node[tbox] (x2) {$A(x)\to r(x,f(x)) \land B(f(x))$}
      (3,0.8) node[block] (x3) {$B(f(\ex{a}))$}
      (8.5,0.8) node[tbox] (y) {$B(x)\to s(x,g(x)) \land A(g(x))$}
      (0.2,-1) node[block] (x3') {$r(\ex{a},f(\ex{a}))$}
      (5,-0.3) node[block] (y3) {$s(f(\ex{a}),g(f(\ex{a})))$}
      (10,-1) node[tbox] (y6) {$s(x,y) \land  r(z,x) \to E(x)$}
         (5,-1.8) node[block] (z0) {$E(f(\ex{a}))$}
         (10,-2.5) node[tbox] (z1) {$E(x) \land r(y,x) \to D(y)$}
       (5,-3.3) node[block] (x5) {$D(\ex{a})$}
      (3,-4.4) node[block] (z) { $r (\ex{a}, f(\ex{a})) \,\land\,  r (\ex{a}, f(\ex{a})) \,\land\, D(\ex{a})$}
            (3,-5.8) node[block,,fill=gray!20] (z') { $\exists x',y.\, r (\ex{a}, y) \,\land\,  r (x', y) \,\land\, D(x')$}
  ;
  \draw[he] (x1) -- ($(x1)!0.5!(x2)-(1.5,0)$) node[yshift=3pt] {\tiny{\gMOPO}} -- (x3);
  \draw[he] (x2) -- ($(x1)!0.5!(x2)-(1.5,0)$) -- (x3);
    \draw[he] (x1) -- ($(x1)!0.5!(x2)-(1.5,0)$) -- (x3');
  \draw[he] (x2) -- ($(x1)!0.5!(x2)-(1.5,0)$) -- (x3');
  \draw[he] (y) -| node[yshift=3pt] {\tiny{\gMOPO}} (y3);
  \draw[he] (x3) -| (y3);
        \draw[he] (x3') -| node[xshift=9pt,yshift=4pt] {\tiny{\gMOPO}} (z0);
    \draw[he] (y3) -- (z0);
  \draw[he] (y6) -| (z0);
      \draw[he] (x3') -- ($(x5)+(0,0.8)$) -- (x5);
    \draw[he] (z0) -- (x5);
  \draw[he] (z1) -| node[xshift=9pt,yshift=4pt] {\tiny{\gMOPO}} (x5);
   \draw[he] (x5) -| node[xshift=2pt,yshift=3pt] {\tiny{\gCONJ}} (z);
  \draw[he] (x3') -- ($(z)+(0,0.8)$) -- (z);
  \draw[he] ($(x3')-(0,0.33)$) -- ($(z)+(0,0.8)$) -- (z);
    \draw[he] (z) -- (z') node[xshift=8pt,yshift=11pt] {\tiny{\gEXISTS}} ;
  \end{tikzpicture}
  \end{footnotesize}
  \caption{A Skolemized proof for the example (colors are used for the ease of reading)}\label{fig:ex-skolem}
\end{figure}
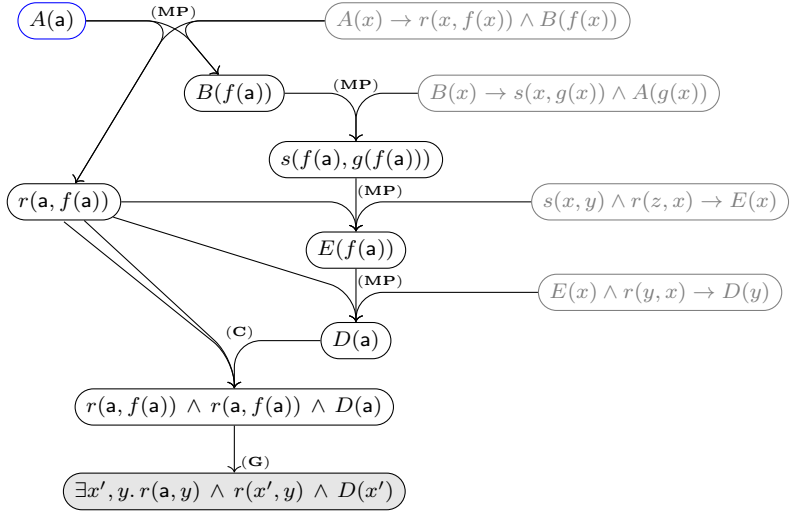

In the following, let $\Tmc\cup\Amc$ be a KB in some DL \Lmc and $\q$ a Boolean CQ with $\Tmc\cup\Amc\models\q$, which we want to explain.
For this, we define an appropriate deriver over the extended logic \Lcq, 
which contains the results of Skolemizing the rules in Table~\ref{tab:normal-form} as well as all 
Boolean CQs. 
%
%
To provide good explanations, inferences should be simple, \ie involve only small modifications of the premises.
For TBox entailment,
in~\cite{LPAR23:Finding_Small_Proofs_for,ABB+-DL20,ABB+-CADE21}, we considered
derivers based on the inference schemas used by
consequence-based reasoners.
To obtain proofs for CQs, we present the deriver~\Rsk, which inspired by the approach from~\cite{DBLP:conf/otm/BorgidaCR08} and mainly operates on ground 
CQs that may use Skolem terms, but no existential quantification.
%
%
%
%
Since ground atoms do not share variables, we mainly need to consider inferences on single atoms, which allows for 
fine-grained proofs (see Fig.~\ref{fig:ex-skolem}).
Only at the end we need to compose atoms to obtain the desired CQ~$\q$.

\begin{figure}[tb]
  \begin{framed}
    \centering
    \def\defaultHypSeparation{\hskip .5em}
    \AXC{$\alpha_1(\vec{t}_1)$}
    \AXC{$\dots$}
    \AXC{$\alpha_n(\vec{t}_n)$}
    \AXC{$\psi(\vec{y},\vec{z})\to\,\chi(\vec{z})$}
    \RL{\gMOPO}
    \QIC{$\beta(\vec{s})$}
    \DP
    \qquad
    \AXC{$\alpha(\vec{t})$}
    \AXC{$t_1=t_2$}
    \RL{\gEQUAL}
    \BIC{$\alpha(\vec{t})[t_1\mapsto t_2]$}
    \DP
    \\
    \AXC{$\alpha_1(\vec{t}_1)$}
    \AXC{$\dots$}
    \AXC{$\alpha_n(\vec{t}_n)$}
    \RL{\gCONJ}
    \TIC{$\alpha_1(\vec{t}_1)\land\dots\land\alpha_n(\vec{t}_n)$}
    \DP
    \qquad
    \AXC{$\phi(\vec{t})$}
    \RL{\gEXISTS}
    \UIC{$\exists\vec{x}.\phi(\vec{x})$}
    \DP
  \end{framed}
  \vspace*{-\baselineskip}
  \caption{Inference schemas in \Rsk (modus ponens, equality, conjunction, generalization).}
  \label{fig:dsk}
\end{figure}
The inference schemas of~\Rsk are shown in Fig.~\ref{fig:dsk}.
In~\gMOPO, $\alpha_i(\vec{t}_i)$ and $\beta(\vec{s})$ are ground atoms, $\psi(\vec{y},\vec{z})\to\,\chi(\vec{z})$ is a Skolemized rule from~$\Tmc^s$, and there must be a substitution~$\pi$ such that
$\pi(\psi(\vec{y},\vec{z}))=\{\alpha_1(\vec{t}_1),\dots,\alpha_n(\vec{t}_n)\}$
and $\beta(\vec{s})\in\pi(\chi(\vec{z}))$.
\gEQUAL deals with equalities $t_1=t_2$ by copying atoms $\alpha(\vec{t})$ that
contain~$t_1$ or~$t_2$ (we consider $=$-atoms to be symmetric). We only apply
\gEQUAL to replace top-level terms, not nested terms. Replacing also nested
terms might be logically sound, but would not improve the readability of the
proof, and is also not needed for completeness.
To complete the proof, \gCONJ combines several ground atoms into a conjunction, and \gEXISTS generalizes ground terms to 
variables in order to produce the final CQ (see Fig.~\ref{fig:ex-skolem}).
Note that the same atom can be used several times as a premise for~\gMOPO
or~\gCONJ, which then however results in a \emph{double connection} as in Fig.~\ref{fig:ex-skolem} for $r(\ex{a},f(\ex{a}))$. 
Consequently, the premise (and the subproof above it) would be duplicated in the tree unraveling of the proof.
%
%
%

\begin{definition}\label{def:cq-ds-skolem}
  $\Rsk(\Tmc^s\cup\Amc,\q)$ is \added{an infinite} derivation structure over $\Tmc^s\cup\Amc$ with \added{vertices for the} axioms in $\Tmc^s\cup\Amc$ and all Boolean CQs over $\sig(\Tmc^s\cup\Amc)$, and hyperedges for all possible instances of~\gMOPO, \gEQUAL, \gCONJ, and~\gEXISTS over these vertices.%
  \footnote{This derivation structure is uniquely determined except for the names of the vertices, which are irrelevant for our purposes since we use only their labels.}
  An \emph{(admissible) proof in $\Rsk(\Tmc^s\cup\Amc,\q)$} is a proof of $\Tmc^s\cup\Amc\models\q$ that has a label-preserving homomorphism into this derivation structure.
\end{definition}

It is easy to check that these inferences are sound.
Moreover, they are also complete, \ie if $\Tmc\cup\Amc\models\q$ holds, then there exists a proof for it (w.r.t.~$\Tmc^s$).
To see this, observe that we closely follow the (Skolem) chase construction for existential rules~\cite{DBLP:journals/ws/CaliGL12,DBLP:conf/pods/Marnette09}, where \gMOPO corresponds to standard chase steps, and \gEQUAL can be seen as merging domain elements in case of equalities (\gCONJ and \gEXISTS are only relevant to obtain the final CQ).
The resulting model~$M$ is universal, which means that
$\Tmc\cup\Amc\models\q$ implies $M\models\q$, which, in turn, shows that
there must be a proof in $\Rsk(\Tmc^s\cup\Amc,\q)$.
%



\section{Finding Good Proofs in \Rsk}

We are interested in the worst-case complexity of computing good
proofs with our deriver \Rsk.
In the following, we denote by $\msize(\p)$ ($\mtree(\p)$) the (tree) size of a proof~\p.
In addition, we consider the \emph{domain size} $\mdomain(\p)$, which is defined
as the number of ground terms appearing in~\p.
We consider the following decision problem $\OP\sk(\L,\mx)$ for some
DL~\L and measure
$\mx\in\{\msize,\mtree,\mdomain\}$:
given an \L-KB $\Tmc\cup\Amc$, a (Boolean) CQ~$\q$ such that $\Tmc\cup\Amc\models\q$,
and a natural number~$n>1$ encoded in binary\footnote{Unary encoding of~$n$ would make the problem much easier due to imposing a small (polynomial) upper bound on the (domain/tree) size of proofs. Hence, binary encoding puts more emphasis on the impact of the KB and the query on the decision problem.}, is there a proof~\p for $\q$ in
$\Rsk(\Tmc^s\cup\Amc,\q)$ with $\mx(\p)\le n$? 
%

To better isolate the complexity of finding small proofs from that of
query answering, we assume $\Tmc\cup\Amc\models\q$ as prerequisite,
which fits the intuition that users would request an explanation only after
they know that $\q$ is entailed.
Similarly to Lemma~7 in~\cite{ABB+-CADE21}, instead of looking for arbitrary
proofs and
homomorphisms into the derivation structure (
see Def.~\ref{def:cq-ds-skolem}), one can restrict the search to
\emph{subproofs}\footnote{
see Appendix~\ref{sec:app-def} for definitions.
} 
of $\Rsk(\Tmc^s\cup\Amc,\q)$, which we will often do
implicitly.
\added{
\begin{lemma}\label{lem:homo}
For any measure $\mx\in \{\msize,\mtree,\mdomain\}$, if there is an admissible proof \p \wrt $\Rsk(\Tmc^s\cup\Amc,\q)$ with
	$\mx(\p)\le n$, then there exists a subproof $\p'$ of
	$\Rsk(\Tmc^s\cup\Amc,\q)$ for $\Tmc^s\cup\Amc\models\q$ with $\mx(\p')\le n$.
\end{lemma}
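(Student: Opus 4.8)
The plan is to start from the given admissible proof \p together with its label-preserving homomorphism $h$ into the derivation structure $D = \Rsk(\Tmc^s\cup\Amc,\q)$, and to massage the homomorphic image $h(\p)$ into a genuine subproof. Recall from Definition~\ref{def:cq-ds-skolem} (and its footnote) that $D$ contains exactly one vertex per label, so $h$ merely identifies all vertices of \p carrying the same label and maps every hyperedge $(S,d)$ of \p to the hyperedge $(h(S),h(d))$ of $D$. Hence $h(\p)$ is already a sub-hypergraph of $D$ whose vertices are the distinct labels occurring in \p and whose hyperedges all lie in $h(E)$, where $E$ denotes the edge set of \p. The only reason $h(\p)$ need not yet be a subproof is that the identification of vertices may create directed cycles and may leave a label as the conclusion of several hyperedges. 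The task is therefore to select, for every non-leaf label, a single incoming hyperedge so that the result is acyclic, and then to restrict to the part relevant for \q.

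For $\mx\in\{\msize,\mdomain\}$ I would rank each label by $\mathrm{rk}(L)=\min\{\mathrm{ht}(v)\mid \ell(v)=L\}$, where $\mathrm{ht}(v)$ is the length of the longest path from a leaf to $v$ in the acyclic proof \p (so leaves have height $0$ and a conclusion has height $1+\max$ of its premises' heights). For each $L$ with $\mathrm{rk}(L)>0$ I pick a preimage $v$ of minimal height and keep the image of its incoming \p-hyperedge as the unique incoming hyperedge of $L$; labels with $\mathrm{rk}(L)=0$ stay leaves, and their label is then a $\Kmc$-element. Every premise $s$ of a kept hyperedge satisfies $\mathrm{rk}(\ell(s))\le \mathrm{ht}(s)<\mathrm{ht}(v)=\mathrm{rk}(L)$, so $\mathrm{rk}$ strictly decreases when passing from a conclusion to any of its premises; thus the selected structure is acyclic, and restricting to the vertices reachable backwards from \q yields a subproof $\p'$ of $D$. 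Since the labels of $\p'$ form a subset of the labels of \p, we get $\msize(\p')\le\#\{\text{distinct labels of }\p\}\le\msize(\p)$, and, as every ground term of $\p'$ already occurs in some label of \p, also $\mdomain(\p')\le\mdomain(\p)$.

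The case $\mx=\mtree$ is the main obstacle, because the height-based merging can genuinely increase the tree size: two equally labelled vertices may sit above sub-derivations of very different tree sizes, and keeping the wrong one and then duplicating it across branches can blow up the unraveling. I would instead make the choice by a shortest-hyperpath computation. Define, for each label $L$, the value $c(L)$ as the least solution of $c(L)=1$ if $L$ labels a $\Kmc$-element and otherwise $c(L)=\min_{(S,d)\in h(E),\ \ell(d)=L}\bigl(1+\sum_{s\in S}c(\ell(s))\bigr)$; equivalently, $c(L)$ is the minimal tree size of a tree-shaped derivation of $L$ using only hyperedges from $h(E)$. This least fixpoint is finite on all labels occurring in \p, and relabelling the tree unraveling of \p through $h$ exhibits one such tree-shaped derivation of \q of size exactly $\mtree(\p)$, whence $c(\q)\le\mtree(\p)$. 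For each non-leaf label I keep an incoming hyperedge attaining the minimum in the definition of $c(L)$. As all costs are at least $1$, every premise satisfies $c(\ell(s))<c(L)$, which again forces acyclicity and lets me take the backward-reachable part of \q as $\p'$. A straightforward induction along this well-founded order shows that the tree-size contribution of each vertex equals its $c$-value, so $\mtree(\p')=c(\q)\le\mtree(\p)$.

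Finally, in each case I would check that $\p'$ satisfies the definition of a proof: it is acyclic by the chosen ranking, every vertex is the conclusion of at most one selected hyperedge, each remaining leaf is labelled by an element of $\Tmc^s\cup\Amc$ (the only way a kept vertex can lack an incoming hyperedge), and, since we restricted to the part backward-reachable from \q, the goal \q is the unique sink. Together with the measure bounds this gives a subproof $\p'$ with $\mx(\p')\le\mx(\p)\le n$, as required. I expect the acyclicity bookkeeping and, above all, the tree-size counterexample to the naive merging (which forces the shortest-hyperpath choice) to be the only genuinely delicate points, the remainder being routine verification.
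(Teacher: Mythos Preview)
Your argument is correct and essentially reconstructs, in a self-contained way, the content that the paper delegates to Lemma~7 of~\cite{ABB+-CADE21}: the paper's entire proof is the single sentence that domain size, like $\msize$ and $\mtree$, satisfies the preconditions of that lemma (namely, the measure does not increase when passing to a homomorphic image), so the statement follows ``similarly.'' Your height-based selection for $\msize$/$\mdomain$ and the least-fixpoint (shortest-hyperpath) selection for $\mtree$ are exactly the kind of pruning that underlies that cited result; the only difference is that you spell out the construction explicitly for each measure rather than invoking a general monotonicity-under-homomorphism criterion once.
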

Since domain size also satisfies the preconditions of Lemma~7 in~\cite{ABB+-CADE21}, the statement of Lemma~\ref{lem:homo} can be shown similarly.}

\subsection{The Data Complexity of Finding Good Proofs}\label{ssec:data-complexity}
It is common in the context of OMQA to distinguish between \emph{data complexity}, where only 
the data varies, and
\emph{combined complexity}, where also the influence of the other inputs is taken into account.
This raises the question whether the bound~$n$ is
seen as part of input for the data complexity or not. It turns out that
fixing~$n$
trivializes the data complexity, because then $n$ also fixes the set of
relevant ABoxes modulo isomorphism, so that the problem can be reduced to UCQ
entailment.

\begin{restatable}{theorem}{TheACzero}\label{th:TheACzero}
For a constant~$n$, any \L, and any $\mx\in\{\msize,\mtree,\mdomain\}$, 
$\OP\sk(\L,\mx)$ is in~$\ACzero$ in data complexity.
\end{restatable}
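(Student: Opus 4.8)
The plan is to exploit the fact that, with $n$ fixed, any admissible proof of measure at most $n$ is of bounded size and mentions only boundedly many individual names, so that its existence can be captured by a \emph{fixed} union of conjunctive queries evaluated on~$\Amc$, whose data complexity is $\ACzero$.

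First I would bound the shape of candidate proofs. By Lemma~\ref{lem:homo} it suffices to consider subproofs of $\Rsk(\Tmc^s\cup\Amc,\q)$, in which distinct vertices carry distinct labels. I claim that for each $\mx\in\{\msize,\mtree,\mdomain\}$, a subproof $\p$ with $\mx(\p)\le n$ contains only a bounded number $d=d(\Tmc,\q,n)$ of distinct ground terms. For $\msize$ this is immediate, since $\p$ has at most $n$ vertices, each labelled either by a single ground atom (hence at most two terms) or by a formula built from the boundedly many atoms of the fixed query~$\q$; for $\mtree$ we use $\msize(\p)\le\mtree(\p)\le n$; and for $\mdomain$ it is the definition. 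Crucially, every Skolem function arising from Table~\ref{tab:normal-form}(iv) is \emph{unary}, so each ground term is a tower $f_{i_k}(\cdots f_{i_1}(a)\cdots)$ of fixed Skolem functions over a single individual name $a\in\NI$. Hence $\p$ uses at most $d$ distinct individual names, and since the schema (the concept and role names of $\Tmc$ and~$\q$) is fixed, there are only boundedly many distinct atoms over these terms; together with the subproof restriction this also bounds the number of vertices of $\p$ in the $\mdomain$ case.

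Next I would pass to \emph{proof schemas}: replace every individual name occurring in $\p$ by a fresh variable, while keeping the constants of $\q$ and the nominals of rule~(vi) fixed. Because $\Tmc^s$, $\q$, $n$, and the schema are all fixed while only at most $d$ individual-variables and boundedly many vertices occur, there are only finitely many such schemas $P_1,\dots,P_N$ up to isomorphism. A proof of measure at most $n$ exists for a given $\Amc$ if and only if some $P_j$ can be instantiated by mapping its variables to individual names of~$\Amc$ so that the conclusion becomes $\q$ and every leaf drawn from $\Amc$ becomes a fact present in~$\Amc$: the instantiated structure is again a valid proof because the inference schemas of Figure~\ref{fig:dsk} are generic in the choice of individual names, and conversely any such proof yields its schema by abstraction. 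This equivalence is precisely the statement that $\Amc\models\Phi$, where $\Phi$ is the fixed UCQ whose disjuncts are, for each admissible $P_j$, the conjunction of the $\Amc$-leaves of $P_j$ with the individual-variables existentially quantified (the equalities derived via rule~(vi) being recorded inside the schema rather than in the query).

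The final step is routine: evaluation of a fixed first-order sentence, and hence a fixed UCQ, over a relational structure is in (dlogtime-uniform) $\ACzero$ in data complexity by the classical $\ACzero$ characterisation of $\FO$-definable queries. Applying this to $\Phi$ and~$\Amc$ yields the claim for every $\L$ and every $\mx\in\{\msize,\mtree,\mdomain\}$. The main obstacle is the first step---verifying \emph{uniformly} across the three measures that bounded measure forces boundedly many ground terms, and therefore boundedly many mentioned individual names, where the unarity of the Skolem functions and the reduction to subproofs do the real work. Once the number of mentioned individuals is bounded, the reduction to a fixed UCQ and the appeal to its $\ACzero$ data complexity are immediate.
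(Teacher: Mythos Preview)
Your proposal is correct and follows essentially the same approach as the paper: both arguments observe that a bounded measure forces a bounded number of individual names from~$\Amc$ to appear in any relevant proof, and then reduce the problem to evaluating a fixed UCQ over~$\Amc$. The only cosmetic difference is that the paper enumerates the finitely many small ABoxes (up to isomorphism) that admit a proof of measure~$\le n$ and takes each such ABox as a CQ disjunct, whereas you enumerate proof schemas and take the conjunction of their $\Amc$-leaves; these produce the same UCQ, and your explicit justification via unarity of the Skolem functions and the subproof restriction (Lemma~\ref{lem:homo}) spells out what the paper leaves implicit.
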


One may argue that, since the size of the proof depends on \Amc, the bound
$n$ on the proof size should be considered part of the input as well. Under this
assumption, our decision problem is not necessarily in \ACzero anymore. For
example, consider the \EL TBox $\{ \exists r.A\sqsubseteq A \}$ and
$q(x)\leftarrow A(x)$. For every $n$, there is an ABox
$\Amc$ such that $A(a)$ is entailed by a sequence of~$n$ role atoms, and thus needs a proof of
size at least~$n$. Deciding whether this query admits a bounded proof is thus
as hard as deciding whether it admits an answer at all in~$\Amc$, \ie
$\PTime$-hard~\cite{DBLP:conf/dlog/Rosati07}. However, the problem stays in
$\ACzero$ for DLs over which CQs are UCQ-rewritable, \eg
\DLLiteR~\cite{JAR-2007}, because the number of (non-isomorphic) proofs that we need to consider is bounded by the size of the rewriting, which is constant in data complexity.

\begin{restatable}{theorem}{TheACzeroRewritable}\label{th:TheACzeroRewritable}
For any $\mx\in\{\msize,\mtree,\mdomain\}$ and any \L such that all CQs are UCQ-rewritable over \L-TBoxes, $\OP\sk(\L,\mx)$ is in \ACzero in data complexity.
\end{restatable}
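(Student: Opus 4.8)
The plan is to use UCQ-rewritability to bound, independently of the data, the number of proof shapes that can ever be relevant, and then to reduce the decision to a constant-size Boolean combination of conjunctive-query evaluations and comparisons of a fixed constant against the binary number~$n$ (which, in this reading, is part of the input together with~$\Amc$, while $\Tmc$ and~$\q$ stay fixed). Concretely, I fix the UCQ rewriting $\q_\Tmc=\bigvee_i\q_i$, so that $\Tmc\cup\Amc\models\q$ iff $\Amc\models\q_i$ for some~$i$. Each disjunct~$\q_i$ is produced by a terminating, axiom-guided rewriting derivation, and that derivation can be read backwards as a bounded pattern of $\Tmc^s$-rule applications in~$\Rsk$: a homomorphism $g_i$ of~$\q_i$ into~$\Amc$ fixes finitely many ABox leaves, and applying the corresponding rules forward as \gMOPO steps, followed by \gCONJ and \gEXISTS, reconstructs a proof of~$\q$. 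I call such a hypergraph, together with a choice of coincidence pattern among the terms in the image of~$g_i$, a proof \emph{template}. Since $\q_\Tmc$ has constant size and a bounded set of terms admits only constantly many coincidence patterns, there are only constantly many templates in data complexity, each of constant size, tree size, and domain size.

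Second, I argue that the minimum of each measure over all admissible proofs is already attained by a template. By Lemma~\ref{lem:homo} it suffices to consider subproofs of $\Rsk(\Tmc^s\cup\Amc,\q)$; such a subproof realizes a match of~$\q$ into the universal model and decomposes into derivations of the matched ground atoms. By completeness of the rewriting this match factors through some disjunct~$\q_i$ and one of its derivations, so after discarding redundant inferences the subproof coincides, up to the choice of matched constants, with one of the templates. Hence the minimal achievable value of~$\mx$ equals $\min\{c_T\mid T\text{ is realizable over }\Amc\}$, where $c_T$ denotes the constant value of~$\mx$ on template~$T$.

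Third, I turn this characterization into an \ACzero procedure. Realizability of a template~$T$ over~$\Amc$ is precisely the existence of the homomorphism~$g_i$ respecting the prescribed coincidence pattern; this amounts to evaluating a fixed conjunctive query with a constant number of equality and inequality side conditions on~$\Amc$, which is in \ACzero in data complexity. As $c_T$ is a constant, the test $c_T\le n$ compares a fixed number with the binary input~$n$ and is likewise in \ACzero. The decision for $\OP\sk(\L,\mx)$ is then the fixed disjunction $\bigvee_T\big(\text{realizable}(T)\wedge c_T\le n\big)$, a constant-size Boolean combination of \ACzero predicates, hence in \ACzero; the promise $\Tmc\cup\Amc\models\q$ ensures that at least one template is realizable.

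The main obstacle is the correspondence between rewriting derivations and minimal proofs. The crux is to show that UCQ-rewritability forces every relevant match of~$\q$ into the universal model to a bounded depth, depending only on~$\Tmc$ and~$\q$, so that each disjunct's derivation yields a bounded, data-independent template and no deeper, data-dependent match can produce a smaller proof. Once this bounded-depth correspondence is established, the measure-specific bookkeeping---in particular that term coincidences only multiply the template count by a constant factor, which matters mainly for~$\mdomain$---is routine.
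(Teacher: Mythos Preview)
Your overall strategy matches the paper's: use the UCQ rewriting to reduce the search for small proofs to a finite, data-independent family of ``patterns'', precompute the optimal proof measure for each pattern offline, and then check online (in \ACzero) which patterns are realised in~$\Amc$ and whether their precomputed value is~$\le n$. Your observation that coincidence patterns among the matched terms must be taken into account is correct and in fact sharpens the paper's argument, which tacitly treats each $\Amc_{q'}$ with \emph{fresh} individual names; without closing under identifications, one can construct ABoxes where identifying two variables yields a strictly smaller proof for~$\msize$ or~$\mdomain$ than~$n_{q'}$.

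There is, however, a genuine gap in your second step, and it is precisely the detour through ``rewriting derivations read backwards''. You define a template as the proof obtained by inverting one particular rewriting derivation of a disjunct~$\q_i$, and then claim that every minimal subproof of $\Rsk(\Tmc^s\cup\Amc,\q)$ coincides with such a template after discarding redundancies. This is not justified by UCQ-rewritability alone: the definition only guarantees the \emph{existence} of a UCQ~$Q$ with the right extensional behaviour, not that every chase-based derivation of a query match corresponds to (the reversal of) some rewriting step sequence producing a disjunct of~$Q$. A minimal proof may reach the matched ground atoms along \gMOPO-paths that no rewriting procedure ever explored, or may reuse intermediate atoms across several query atoms in ways your derivation-driven templates do not encode. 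Your own closing paragraph concedes this is ``the main obstacle'' and leaves it open.

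The paper avoids this obstacle entirely by not constructing templates at all. It simply takes each pattern ABox~$\Amc_{q'}$ (you should additionally close under identifications, as you noted) and computes the true minimum $n_{q'}$ of the measure over $\Rsk(\Tmc^s\cup\Amc_{q'},\q)$ by bounded offline search---no correspondence with rewriting derivations is needed, because the pattern is fixed and finite. The only fact required is that the minimal proof over an arbitrary~$\Amc$ uses a bounded sub-ABox, which follows immediately once you know \emph{some} proof of bounded size exists (one per disjunct), since a minimal proof cannot be larger and hence cannot touch more ABox facts. Replacing your template construction by this direct brute-force computation of~$n_F$ for each of the finitely many pattern ABoxes~$F$ (disjuncts closed under identifications) closes your gap and brings the argument in line with the paper's.
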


\subsection{Finding Good Proofs with Lightweight Ontologies}\label{sec:results}

We now consider the combined complexity of our problems for \DLLiteR and \EL.
In~\cite{LPAR23:Finding_Small_Proofs_for,ABB+-CADE21}, we
established general upper bounds for finding proofs of bounded size.
These results depend only on 
the size
of the derivation structure obtained 
for the given input.
%
However, $\Rsk$ does not produce finite derivation structures since there can be Skolem terms of arbitrary nesting depth.
Nevertheless, proofs cannot be infinite, and therefore we first study how large proofs in \Rsk can get in the worst case.
In particular, for \EL one can enforce proofs that are binary trees of polynomial depth, and therefore of exponential size.

\begin{restatable}{lemma}{LemSizeLowerLight}\label{lem:size-lower-light}
  One can construct a TBox $\Tmc_{\Lmc,n}$ in time polynomial in~$n$ such that
  $\Tmc_{\Lmc,n}\cup\{A(a)\}\models B(a)$, but every proof 
  \added{of the entailment} is of (domain/tree) size
  \begin{enumerate}
    \item polynomial in $n$ for $\Lmc=\DLLiteR$,
    \item exponential in $n$ for $\Lmc=\EL$.
  \end{enumerate}
Moreover, there exists an \EL-TBox~$\Tmc$ for which one can construct an ABox
$\Amc_{n}$ in time polynomial in~$n$ such that
$\Tmc\cup\Amc_{n}\models A(a)$, but every proof 
\added{of} it is of \added{a} tree size exponential in~$n$.
\end{restatable}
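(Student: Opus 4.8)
The plan is to exhibit, for each item, an explicit family of instance-query entailments whose Skolem chase has a rigid shape, and then argue that every admissible proof in $\Rsk$ is forced to contain the atoms laid out along this shape. In all three constructions the goal is a single atom, so \gCONJ and \gEXISTS play no role; no equality axioms occur, so \gEQUAL never fires; and the only inference available is \gMOPO. The lower bounds then follow once we show that each atom we care about has a \emph{unique} applicable \gMOPO-instance producing it. For item~1 I would take the concept chain $\Tmc_{\DLLiteR,n}=\{A_{i-1}\sqsubseteq A_i\mid 1\le i\le n\}$ with $A:=A_0$ and $B:=A_n$. Since the only axiom with $A_i$ in its head is $A_{i-1}\sqsubseteq A_i$, every proof of $B(a)$ must contain all $n{+}1$ atoms $A_0(a),\dots,A_n(a)$; hence $\msize(\p)\ge n{+}1$ and $\mtree(\p)\ge n{+}1$, which is polynomial. (The chase never leaves~$a$, so for \DLLiteR the binding measures are size and tree size.)

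For item~2 the point is to make the chase a complete binary tree of depth~$n$. Let $\Tmc_{\EL,n}$ contain the branching axioms $A_i\sqsubseteq\exists r.A_{i+1}$ and $A_i\sqsubseteq\exists s.A_{i+1}$, the base axiom $A_n\sqsubseteq B_n$, and the back-propagation axioms $\exists r.B_{i+1}\sqsubseteq C_i$, $\exists s.B_{i+1}\sqsubseteq D_i$, $C_i\sqcap D_i\sqsubseteq B_i$, for $0\le i<n$; set $A:=A_0$ and $B:=B_0$. Starting from $A(a)$, the chase produces $2^n$ pairwise distinct leaf terms, one for each word in $\{r,s\}^n$ (the two branching axioms use distinct Skolem functions), each carrying $A_n$ and hence $B_n$, after which back-propagation yields $B_0(a)$. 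For the lower bound I would prove by downward induction on~$i$ that every proof contains $B_i(t)$ for each depth-$i$ node~$t$: for $i<n$ the only axiom with $B_i$ in its head is $C_i\sqcap D_i\sqsubseteq B_i$, and deriving $C_i(t),D_i(t)$ forces $B_{i+1}$ at both children of~$t$. Thus all $2^n$ atoms $B_n(\ell)$ over distinct leaves~$\ell$ occur in every proof, giving $\mdomain(\p)\ge 2^n$ and therefore $\msize(\p),\mtree(\p)\ge 2^n$.

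For the \emph{moreover} part I would keep the TBox fixed and instead encode the branching in the ABox, exploiting the gap between hypergraph size and tree size. Fix $\Tmc=\{\exists r.A\sqsubseteq U,\ \exists s.A\sqsubseteq V,\ U\sqcap V\sqsubseteq A\}$ and let $\Amc_n$ be a chain of~$n$ diamonds: individuals $m_0,\dots,m_n$ with edges $r(m_i,m_{i+1})$ and $s(m_i,m_{i+1})$ for $0\le i<n$, together with the base fact $A(m_n)$; set $a:=m_0$. Then $A$ propagates back up the chain to $A(a)$. As a hypergraph the proof has only $O(n)$ vertices, but since the unique derivation of $A(m_i)$ feeds both $U(m_{i-1})$ and $V(m_{i-1})$, its subproof is duplicated in the tree unraveling; an induction shows that $A(m_i)$ is unravelled $2^i$ times, so the base atom $A(m_n)$ appears $2^n$ times and $\mtree(\p)\ge 2^n$, while $\mdomain$ and $\msize$ stay polynomial.

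The hard part, and essentially the only real work, is verifying that these are genuine lower bounds, i.e.\ that no cleverer proof sidesteps the forced atoms. This reduces to the uniqueness claims above: a case analysis of the \gMOPO-premises showing that, in each construction, every atom required by the induction admits exactly one applicable inference (with \gEQUAL inapplicable, since equalities never arise). A secondary subtlety worth stating explicitly is the deliberate split of measures: the \EL TBox construction places the forced atoms on exponentially many distinct Skolem terms, which is what bounds $\mdomain$ from below, whereas the moreover construction keeps the domain polynomial and forces the exponential blow-up purely through the repeated tree-unraveling of shared atoms, so that only $\mtree$ explodes there.
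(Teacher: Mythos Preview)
Your proposal is correct and takes essentially the same approach as the paper: your \EL constructions are precisely the normalized versions of the paper's (the paper writes $A_i\sqsubseteq\exists r.A_{i+1}\sqcap\exists s.A_{i+1}$, $\exists r.B_{i+1}\sqcap\exists s.B_{i+1}\sqsubseteq B_i$ for item~2 and $\exists r.A\sqcap\exists s.A\sqsubseteq A$ for the \emph{moreover} clause, with the same chain-of-individuals ABox up to reindexing), and your uniqueness-of-inference argument spells out what the paper leaves implicit. For item~1 the paper's proof is literally ``For \DLLiteR, this is trivial'', so your concept-chain construction and your honest remark that it does not bound $\mdomain$ from below are both more than the paper itself supplies.
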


To obtain matching upper bounds, we can bound the number
of relevant Skolem terms in~\Rsk by investigating which part of the universal model is necessary to satisfy the query~$\q$.
%

\begin{restatable}{lemma}{LemSizeUpperDLLiteEL}\label{lem:size-upper-dl-lite-el}
    For any CQ entailment 
    $\Tmc\cup\Amc\models\q$, there exists a proof of
    \begin{enumerate}
        \item (domain/tree) size polynomial in $|\Tmc\cup\Amc|$ and $|\q|$ if $\Lmc=\DLLiteR$,
        \item (domain) size exponential in $|\Tmc|$ and $|\q|$ and polynomial in~$|\Amc|$ if $\Lmc=\EL$,
        \item tree size exponential in $|\Tmc\cup\Amc|$ and $|\q|$ if $\Lmc=\EL$.
    \end{enumerate}
\end{restatable}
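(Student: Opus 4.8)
The plan is to bound the relevant portion of the Skolem chase (universal model) that is needed to match the query~$\q$, and then argue that a proof can be assembled whose vertices are confined to that portion, with each query atom derived by a subproof of controlled size. Since $\Tmc\cup\Amc\models\q$ holds, universality of the Skolem chase~$M$ gives a homomorphism $h$ from~$\q$ into~$M$. Each variable of~$\q$ is mapped to a ground term, i.e.\ either an ABox individual or a Skolem term $f_k(\dots f_1(a)\dots)$ of some nesting depth. The key quantity to control is the \emph{depth} of these Skolem terms, because the depth bounds both how many fresh objects a single atom's derivation passes through and the total domain size: if every matched term has depth at most~$d$ and uses at most~$b$ branching at each level, the reachable relevant subtree of~$M$ has size at most (roughly) $|\Amc|\cdot b^{d}$, and every atom $A(t)$ or $r(s,t)$ appearing in the matched part admits a derivation in~$\Rsk$ whose vertices are ground atoms over terms of depth $\le d$.

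The core step, handled separately for each logic, is the depth bound. For \DLLiteR (part~1) I would exploit that CQs are UCQ-rewritable over \DLLiteR-TBoxes (cited just before the lemma from~\cite{JAR-2007}): the rewriting witnesses that every answer is already entailed through a match into a chase of bounded depth, in fact depth bounded polynomially by~$|\q|$, since each existentially-quantified query variable can be matched either to an ABox element or to a Skolem successor that is \enquote{folded back} by the rewriting rules. Because \DLLiteR has no rules of the form~(ii) or~(iii) that combine successors, the relevant anonymous part is a union of \emph{paths} of length at most~$|\q|$ hanging off ABox individuals; this yields polynomially many relevant Skolem terms, hence both domain size and tree size polynomial in $|\Tmc\cup\Amc|$ and $|\q|$. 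For \EL (parts~2 and~3), rewritability fails, so instead I would bound the depth directly: the anonymous part of the \EL chase is a forest in which the label (set of concept names) of each anonymous element is determined by the concept name that triggered its creation, so there are at most $2^{|\Tmc|}$, and in fact at most $|\Tmc|$ relevant, distinct \enquote{types} of anonymous elements. A match of~$\q$ into~$M$ can be chosen so that along any branch no type repeats (otherwise the branch can be shortcut), giving depth at most~$|\Tmc|$ and a relevant subtree whose number of \emph{distinct} terms is exponential in $|\Tmc|$ and $|\q|$ but only polynomial in~$|\Amc|$ (the ABox individuals are the roots). This gives the domain-size bound of part~2. For tree size (part~3), the tree unraveling of the proof can blow up because shared subproofs get duplicated and anonymous elements are reached along multiple query atoms, so I only claim an exponential bound in $|\Tmc\cup\Amc|$ and $|\q|$, matching the lower bound of Lemma~\ref{lem:size-lower-light}.

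With the depth and branching bounds in hand, the remaining step is routine assembly: take the homomorphic image $h(\q)$ in~$M$, and for each of its atoms build a derivation via~\gMOPO following the chase derivation that produced that atom (each such derivation stays within the depth-bounded relevant subtree), merging these derivations at shared atoms into a single DAG, then finish with one application of~\gCONJ to collect the atoms of $h(\q)$ and one application of~\gEXISTS to reintroduce the existential variables and recover~$\q$. The size of this DAG is bounded by the number of relevant ground atoms, which is polynomial (resp.\ exponential) in the stated parameters, giving the domain/size bounds; the tree size follows by unraveling.

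The main obstacle I expect is the \EL depth argument (part~2): one must argue carefully that a \emph{bounded-depth} match always exists, i.e.\ that if~$\q$ matches into the full (possibly very deep) chase then it already matches into the depth-$\le|\Tmc|$ truncation. This is a pumping/type-repetition argument on the chase forest that needs the observation that anonymous elements of the same type are interchangeable as match targets; getting the branching factor right so that the count stays polynomial in~$|\Amc|$ (roots only) while exponential in~$|\Tmc|$ and~$|\q|$ is the delicate bookkeeping. The \DLLiteR case is comparatively easier because rewritability hands us the bounded-depth witness directly.
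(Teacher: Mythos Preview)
Your proposal captures the right high-level strategy, but the \EL argument has a gap: you bound the depth of the \emph{match} (where the query variables land in the chase) and then assert that ``each such derivation stays within the depth-bounded relevant subtree''. This is not justified. To derive $B(a)$ via a rule of type~(iii), say $\exists r.B_1\sqsubseteq B$, one needs $r(a,f(a))$ and $B_1(f(a))$ at depth~$1$; deriving $B_1(f(a))$ may in turn require atoms at depth~$2$, and so on. Hence even an instance query matched at depth~$0$ can require a derivation that descends deep into the chase. Your pumping argument, as stated, controls the match but not the proof, so the domain-size bound of part~2 is not yet established. The ``main obstacle'' you flag---that a bounded-depth \emph{match} exists---is actually the easy half; the harder half is bounding the depth of the \emph{derivation} of each matched atom.

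The paper closes this gap with a different device: a \emph{compressed derivation structure} in which every Skolem term $f(t)$ is replaced by a fresh constant~$c_f$, forgetting~$t$ entirely. Because there are only polynomially many such constants, the compressed structure has polynomial size, so any proof in it has polynomially bounded depth. The key step is showing that proofs in the compressed structure and proofs in~$\Rsk$ correspond with the same tree size: the nested terms are reconstructed by following role atoms $r(t,c_f)\mapsto r(t,f(t))$ through the proof. This bounds the Skolem nesting depth \emph{in the proof} polynomially, from which the exponential (domain/tree) size bounds and the polynomial-in-$|\Amc|$ dependence follow. Your type-repetition intuition (the type of $f(t)$ depends only on~$f$) is precisely the observation that makes the $c_f$-abstraction sound, but you would need to apply it to the derivation itself rather than only to the homomorphic image of~$\q$.

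For \DLLiteR your route via UCQ-rewritability is fine; the paper instead invokes the polynomial witness property and the fact that every \gMOPO inference over a \DLLiteR rule has a single atom premise, so the proof of each query atom is a linear chain of polynomial length. Your remark that \DLLiteR has ``no rules of the form~(ii) or~(iii)'' is not quite accurate---$\exists R.\top\sqsubseteq A$ is an instance of~(iii)---but the operative point (single-atom bodies, hence linear derivations) is correct.
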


This immediately allows us to show some generic upper bounds by guessing proofs up to the specified sizes.

\begin{restatable}{theorem}{ThmDLLiteELUpperBounds}\label{thm:dllite-el-upper-bounds}
  For any $\mx\in\{\msize,\mtree,\mdomain\}$,
  $\OP\sk(\EL,\mx)$ is in $\NExpTime$
  and
  $\OP\sk(\DLLiteR,\mx)$ is in~\NP.
\end{restatable}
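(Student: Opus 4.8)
The plan is to prove both bounds by a uniform guess-and-verify argument built on the size bounds of Lemma~\ref{lem:size-upper-dl-lite-el} together with the reduction to subproofs from Lemma~\ref{lem:homo}. By Lemma~\ref{lem:size-upper-dl-lite-el}, for each measure $\mx$ there is a computable bound $B=B(\Tmc,\Amc,\q)$, polynomial in $|\Tmc\cup\Amc|$ and $|\q|$ when $\Lmc=\DLLiteR$ and (singly) exponential in these parameters when $\Lmc=\EL$, such that the precondition $\Tmc\cup\Amc\models\q$ guarantees the existence of a proof $\p$ in $\Rsk(\Tmc^s\cup\Amc,\q)$ with $\mx(\p)\le B$. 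The nondeterministic algorithm first compares the input bound $n$ with $B$; this is the step that neutralizes the binary encoding of~$n$.

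First I would handle this dichotomy. If $n\ge B$, the algorithm immediately accepts, since a proof of measure at most $B\le n$ is guaranteed to exist. As $B$ is at most singly exponential in the input and hence representable with polynomially many bits, this comparison is feasible within the target class. In the remaining case $n<B$ the effective search space is capped: $n$ is then polynomially bounded for $\DLLiteR$ and at most exponential for $\EL$, regardless of how large the binary value of~$n$ could a priori be.

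Next I would guess and verify a subproof. By Lemma~\ref{lem:homo} it suffices to search among subproofs of $\Rsk(\Tmc^s\cup\Amc,\q)$, in which every label occurs at most once. I would bound the number of vertices such a candidate may have. For $\mx\in\{\msize,\mtree\}$ the number of vertices is at most $\msize(\p)\le\mtree(\p)\le n$. For $\mx=\mdomain$, a domain size of at most $n$ means that at most $n$ distinct ground terms occur; since each label is a ground atom (over at most two of these terms) or the final conjunction/existential closure, and labels are unique in a subproof, the number of vertices is bounded polynomially in $n$, $|\sig(\Tmc^s\cup\Amc)|$, and $|\q|$. In all cases the candidate has at most polynomially many vertices for $\DLLiteR$ and at most exponentially many for $\EL$, each labeled by an atom or CQ whose terms are unary Skolem chains with at most $n$ distinct subterms and hence of size at most linear in~$n$. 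I would then guess such a structure together with its labels and verify, in time polynomial in its size, that it is a proof: acyclicity, a single sink labeled $\q$, each vertex the conclusion of at most one inference that is a valid instance of a schema of $\Rsk$, and every leaf in $\Tmc^s\cup\Amc$. For $\DLLiteR$ the guessed object and its verification are of polynomial size, giving membership in $\NP$; for $\EL$ they are of exponential size (exponentially many vertices, each of exponential size, remaining exponential overall), giving membership in $\NExpTime$.

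The main obstacle is not the verification, which is routine, but the interaction between the binary-encoded bound~$n$ and the search: without the dichotomy of the first step one would be tempted to guess a proof of size up to~$n$, which for $\DLLiteR$ could be exponential and would break the $\NP$ bound. Capping the search at the guaranteed existence bound $B$ from Lemma~\ref{lem:size-upper-dl-lite-el} is precisely what keeps the guessed object small. The secondary technical point is the measure-specific translation from a bound on $\mdomain(\p)$ to a bound on the number of vertices, which relies on the uniqueness of labels in subproofs to ensure that only polynomially many atoms over the bounded term set can appear.
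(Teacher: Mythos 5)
Your proposal is correct and takes essentially the same approach as the paper, which obtains this theorem directly from Lemma~\ref{lem:size-upper-dl-lite-el} by \enquote{guessing proofs up to the specified sizes.} The details you supply---the dichotomy comparing the binary-encoded bound~$n$ against the existence bound~$B$, the restriction to subproofs via Lemma~\ref{lem:homo}, and the translation of a $\mdomain$ bound into a vertex-count bound using uniqueness of labels---are precisely the steps the paper leaves implicit.
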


In some cases, we can show matching lower bounds via reductions from the Boolean
query entailment problem. Using Lemma~\ref{lem:size-upper-dl-lite-el}, we can
find an upper bound $n$ for any proof showing $\Kmc\models q$ provided that it
holds. To satisfy the prerequisites of $\OP\sk$, we then extend $\Kmc$ by a second KB $\Kmc'$ in which $\Kmc'\models q$,
but only with a proof \emph{larger} than $n$.

\begin{restatable}{theorem}{ThmDLLiteLower}\label{thm:dllite-lower}
  For $\mx\in \{\msize,\mtree\}$, $\OP\sk(\DLLiteR,\mx)$ is \NP-hard.
\end{restatable}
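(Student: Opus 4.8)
The plan is to reduce from Boolean CQ entailment over \DLLiteR, which is \NP-complete in combined complexity; its hardness already shows up for the empty TBox, where it coincides with Boolean CQ \emph{evaluation} over a database (the homomorphism problem). So I would start from a database $\Amc$ and a connected Boolean CQ~\q and ask whether $\Amc\models\q$ (connectedness is no restriction, as the standard reductions, \eg from graph colorability, produce connected queries). The difficulty is that $\OP\sk$ presupposes $\Tmc\cup\Amc\models\q$, so a potential non-entailment cannot be fed into it directly. The trick is to output a combined instance whose query is \emph{always} entailed, but admits a \emph{short} proof exactly when the original instance is positive.

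First I would use Lemma~\ref{lem:size-upper-dl-lite-el}(1) to obtain, in polynomial time, a polynomial bound~$n$ such that whenever $\Amc\models\q$ there is a proof of (tree) size at most~$n$ (for the empty TBox one can even take $n=|\q|+2$, the size of the proof that conjoins the matched facts by \gCONJ and closes with \gEXISTS). Since $\msize(\p)\le\mtree(\p)$, this same~$n$ bounds both measures and serves as the threshold. Next I would build a second KB $\Kmc'=\Tmc'\cup\Amc'$ over fresh individual names such that $\Kmc'\models\q$ unconditionally, but every proof of~\q via $\Kmc'$ has size strictly greater than~$n$. Concretely, I would \enquote{freeze} \q into an ABox over fresh constants $c_y$, one per variable, adding all query atoms as facts \emph{except} one chosen binary atom $r_0(y_1,y_2)$; the fact $r_0(c_{y_1},c_{y_2})$ is instead made derivable only through a chain of fresh role inclusions $p_0\sqsubseteq p_1\sqsubseteq\dots\sqsubseteq r_0$ of length $n+1$, seeded by $p_0(c_{y_1},c_{y_2})$. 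Deriving $r_0(c_{y_1},c_{y_2})$ then forces $n+1$ applications of \gMOPO, so any proof matching \q into the frozen part has more than~$n$ vertices. The whole gadget has size $O(n+|\q|)$, so the reduction runs in polynomial time and $n$ (in binary) stays polynomial.

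The reduction outputs $(\Tmc'\cup\Amc\cup\Amc',\q,n)$. The precondition $\Amc\cup\Kmc'\models\q$ holds because already $\Kmc'\models\q$. For correctness I would argue that, since $\Tmc'$ uses only fresh role names, its chain never fires on the original part, the two ABoxes share no individuals and no role edge links them, so the universal model of the combined KB is the disjoint union of the two chases; as \q is connected, every match lands entirely in one component. If $\Amc\models\q$, the match in the $\Amc$-component yields a proof of size $\le n$, so the answer is \enquote{yes}. If $\Amc\not\models\q$, no match exists in that component, so every proof routes through $\Kmc'$, must use the single derived edge $r_0(c_{y_1},c_{y_2})$, and therefore has size $>n$ (hence also tree size $>n$), giving \enquote{no}. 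Thus the answer to $\OP\sk(\DLLiteR,\mx)$ coincides with $\Amc\models\q$ for both $\mx=\msize$ and $\mx=\mtree$, establishing \NP-hardness.

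I expect the main obstacle to be the \emph{non-interference} argument: one must ensure the gadget neither offers a cheaper alternative match of \q nor lets a proof mix atoms from both components, which could bypass the expensive chain. Keeping the chain predicates fresh, the two domains disjoint, and \q connected is exactly what makes the universal model split as a disjoint union and pins down the minimal proof on each side; the delicate point is verifying that the rigid frozen copy of \q admits no shortcut match avoiding the derived $r_0$-edge, which is where the specific shape of the gadget (a single long-to-derive edge over otherwise directly stated facts) does the work.
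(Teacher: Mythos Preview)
Your high-level strategy matches the paper's exactly: reduce from Boolean CQ entailment by adjoining an auxiliary KB~$\Kmc'$ that always entails~\q but only via proofs exceeding the polynomial bound~$n$ from Lemma~\ref{lem:size-upper-dl-lite-el}. The gap is precisely where you locate it, and it is real.

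Because your $\Amc'$ retains the original predicate names and omits only a single atom, the frozen copy can admit a shortcut match whenever~\q is not a core. Take $\q=\exists x,y,z.\, r(x,y)\land r(x,z)$ (connected, with a binary atom) and remove $r(c_x,c_z)$: then $x\mapsto c_x$, $y,z\mapsto c_y$ matches~\q into $\{r(c_x,c_y)\}$ alone, yielding a constant-size proof regardless of how long your chain is. Connectedness does not rule this out, and the colourability reduction you invoke does not produce cores in general. One fix is to restrict the source problem to one whose queries are guaranteed to be cores (\eg \textsc{Clique}, where $\q=K_k$), but this needs to be stated and justified; as written, the claim that the gadget \enquote{does the work} is false.

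The paper sidesteps the issue entirely by renaming \emph{every} predicate~$P$ in the frozen copy to a fresh~$P_0$ and adding a chain $P_0\sqsubseteq P_1\sqsubseteq\dots\sqsubseteq P_n\sqsubseteq P$ for each predicate. Then no atom of~\q can match~$\Amc_0$ directly; deriving even a single query atom on the $\Kmc'$-side already costs more than~$n$ vertices. This also disposes of mixed matches without requiring~\q to be connected or a core, so the non-interference argument becomes trivial rather than delicate.
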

%
%
%
%
%
To obtain tractability, we 
\added{can} restrict the shape of the
query. The \emph{Gaifman graph} of a query $\q$ is the undirected
graph
that uses the terms of $\q$ as nodes and has an edge between terms
occurring together in an atom.
%
A query is \emph{tree-shaped} if its Gaifman graph is a tree.
We can exploit this structure to deterministically explore in polynomial
time all relevant proofs
of minimal tree size over \DLLiteR KBs.

\begin{restatable}{theorem}{LemPTimeCombined}\label{th:ptime-combined}
 Given a \DLLiteR KB $\Tmc\cup\Amc$ and a tree-shaped query $\q$, 
one can compute in polynomial time a proof of minimal
tree size in $\Rsk(\Tmc^s\cup\Amc,\q)$.
\end{restatable}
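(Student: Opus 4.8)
The plan is to turn the search for a minimal-tree-size proof into an optimization over the possible \emph{matches} of \q into the universal model, which can be solved by dynamic programming along the query tree. Two features of \DLLiteR make this work. First, the universal model is \emph{homogeneous} in its anonymous part: all elements created by the same existential axiom $A\sqsubseteq\exists R.\top$ have isomorphic subtrees, so the concepts entailed at such an element, and the cost of deriving them, depend only on the generating role~$R$. Second, by Lemma~\ref{lem:size-upper-dl-lite-el}(1) the minimal tree size is polynomially bounded, so every cost quantity below is a polynomially bounded integer. Since \DLLiteR has no equality-generating axioms, the inference \gEQUAL never occurs and can be ignored.

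I would first precompute two families of costs in polynomial time. For constants, let $\phi(a,A)$ be the minimal tree size of an \Rsk-proof of $A(a)$; together with the minimal cost of each role atom between two constants, these are obtained by a shortest-path computation over the polynomial graph whose nodes are pairs $(\text{constant},\text{concept})$ and whose edges encode the \DLLiteR concept and role inclusions and the $\exists R.\top\sqsubseteq B$ reasoning over ABox roles. For the anonymous part, let $\psi(R,A)$ be the minimal extra tree size needed to derive $A$ at an $R$-generated element once that element has been \emph{reached}; this is again a shortest path, now over the polynomial graph on pairs $(\text{role},\text{concept})$, with edges for concept inclusions, for the concepts forced on an $R$-filler by axioms $\exists R^-.\top\sqsubseteq B$, and for downward derivations via $\exists R'.\top\sqsubseteq B$ using the children an $R$-filler itself generates; the search terminates because descending only increases cost.

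The key structural observation I would establish is the decomposition $\phi(t,A)=\mathrm{reach}(t)+\psi(R,A)$ for every anonymous term~$t$ with generating role~$R$, where $\mathrm{reach}(t)$, the tree size of establishing the edge into~$t$, grows only linearly along the path from the root, as each step adds one application of \gMOPO plus a minimal triggering-concept chain. Because tree size is measured on the tree unraveling, the cost of a ground CQ is simply the sum of the minimal costs of its atoms, where duplicated premises such as the double connection on $r(\ex{a},f(\ex{a}))$ are counted with multiplicity, so minimizing tree size amounts to choosing the match minimizing this sum. Exploiting that \q is tree-shaped --- and that in the anonymous part a role atom can only connect a parent to a child, so consecutive query nodes map to model-adjacent terms --- I would root the Gaifman tree and run a bottom-up DP whose state is a query node together with the \emph{kind} of its image (a constant, or a generating role), of which there are only polynomially many. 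For a fixed kind, the subtree cost is an \emph{affine} function $\alpha_v(\kappa)+\beta_v(\kappa)\cdot\mathrm{reach}$ of the reach of the node's image, since every atom of the subtree must pass through that image; hence $\alpha_v,\beta_v$ are computable without keeping reach in the state, and reach is substituted (by its minimal value for the chosen root kind) only at the end. A final $O(1)$ overhead accounts for the concluding \gCONJ and \gEXISTS steps, and tracing the DP choices back, instantiating each chosen kind by a concrete Skolem term, yields an explicit minimal proof in polynomial time.

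The main obstacle I anticipate is making the cost bookkeeping match the tree-unraveling semantics exactly: proving the decomposition $\phi(t,A)=\mathrm{reach}(t)+\psi(R,A)$ together with the affine dependence on reach, while correctly handling that downward derivations re-pay the reach of the current node, that triggering-concept subproofs are duplicated under tree size, and that double connections contribute with multiplicity. Setting up the $\psi$-graph so that it simultaneously captures upward and downward derivations and provably terminates, and verifying that an optimal match for a tree-shaped query never needs terms of depth beyond $|\q|$, are the delicate points; once these are in place, the polynomial running time of the DP is routine.
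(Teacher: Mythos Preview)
Your high-level strategy matches the paper's: exploit that the anonymous part of the \DLLiteR universal model is determined up to isomorphism by the generating role, precompute minimal per-atom proof costs in a polynomial ``compressed'' representation, and then run a dynamic program along the tree-shaped Gaifman graph of~\q to pick the best match.

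The implementations diverge, however. The paper does not separate ``reach'' from a role-local cost~$\psi(R,A)$ at all. Instead it builds a single \emph{compressed derivation structure} in which every anonymous element of generating role~$R$ is collapsed to one fresh constant~$b_{\exists R}$, and in which the \Rsk inference schemas are mimicked over these constants. This structure is polynomial, so the minimal ``proof'' size of \emph{every} atom $A(c)$, $P(c,d)$ over the finite constant set is computed directly via the existing \PTime procedure from~\cite{ABB+-CADE21}. The DP then ranges over assignments $t\mapsto c$ of query terms to these constants (the ``cost graph''), and the final proof is obtained by substituting the compressed constants back to concrete Skolem terms along the derivation. In particular, there is no affine tracking of a reach parameter and no separate treatment of up- versus down-edges: both directions are already present as ordinary role atoms in the compressed structure.

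Your affine decomposition is where I see a genuine wrinkle rather than just extra machinery. The claim that the subtree cost is a single affine function $\alpha_v(\kappa)+\beta_v(\kappa)\cdot\mathrm{reach}$ breaks when some query edge is matched ``upwards'' in the model: such a choice is only feasible when $\mathrm{reach}$ exceeds the accumulated step length back to the intended ancestor (or until an ABox individual is hit), so the optimum over matching choices is the pointwise minimum of several affine pieces with different domains---piecewise affine, not affine. This is fixable (track the finitely many pieces, or, more simply, put $\mathrm{reach}$ itself into the DP state, which is polynomially bounded by Lemma~\ref{lem:size-upper-dl-lite-el}), but it is exactly the kind of bookkeeping the paper's compressed-structure route avoids. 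If you want to keep your decomposition, I would drop the affine claim and instead argue that the DP state $(\text{query node},\text{kind},\mathrm{reach})$ is polynomial; that gives a clean polynomial-time bound without the delicate case analysis you flag as the main obstacle.
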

The central property used in the proof 
of Theorem~\ref{th:ptime-combined} is that
for tree size
%
the proof of each atom in $\q$ is counted
separately, even if two atoms are proven in the same way.
Since \mdomain and \msize do not exhibit this redundancy, we can show that the corresponding decision problems are already \NP-hard for tree-shaped queries, and even \emph{without a TBox}, via reductions from the propositional satisfiability problem.

\begin{restatable}{theorem}{TheNPHardModified}\label{the:np-hard-trees}
  Let \Lmc be an arbitrary DL and $\mx\in\{\msize,\mdomain\}$. For tree-shaped CQs,
  $\OP\sk(\L,\mx)$ is \NP-hard.
\end{restatable}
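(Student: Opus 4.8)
The plan is to reduce from propositional satisfiability with a single construction that serves both $\mx=\msize$ and $\mx=\mdomain$, and to use $\Tmc=\emptyset$ throughout (so $\Tmc^s=\emptyset$ as well). Fix a CNF formula $F$ with variables $x_1,\dots,x_m$ and clauses $C_1,\dots,C_k$. I use one role name $r$ and concept names $R_0,X_1,\dots,X_m,S_1,\dots,S_k$. The ABox \Amc\ has a root $a$ with $R_0(a)$ and, for each variable $x_i$, two elements $t_i,f_i$ with the atoms $r(a,t_i)$, $r(a,f_i)$, $X_i(t_i)$, $X_i(f_i)$; moreover $t_i$ is labelled $S_j$ whenever $x_i$ occurs positively in $C_j$, and $f_i$ is labelled $S_j$ whenever $x_i$ occurs negatively in $C_j$ (so the $S_j$-elements are exactly the truth-value elements satisfying some literal of $C_j$). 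The query is the tree-shaped (star) CQ
\[
\q=\exists y_0,p_1,\dots,p_m,q_1,\dots,q_k.\; R_0(y_0)\wedge\bigwedge_{i}\bigl(r(y_0,p_i)\wedge X_i(p_i)\bigr)\wedge\bigwedge_{j}\bigl(r(y_0,q_j)\wedge S_j(q_j)\bigr).
\]

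First I would observe that, since $\Tmc=\emptyset$, the rules \gMOPO\ and \gEQUAL\ can never fire, so in any proof every ground atom is a leaf and hence lies in \Amc; the sink \q\ is produced by a single \gEXISTS\ step from a ground conjunction, which one \gCONJ\ step assembles from ground atoms of \Amc. Thus proofs in $\Rsk(\Amc,\q)$ correspond exactly to homomorphisms $h\colon\q\to\Amc$, with $\msize(\p)$ equal to the number of \emph{distinct} ground atoms $h(\alpha)$ plus $2$ (the \gCONJ\ and \gEXISTS\ vertices), and $\mdomain(\p)$ equal to the number of distinct image terms. The concept labels force $h(y_0)=a$, $h(p_i)\in\{t_i,f_i\}$, and $h(q_j)$ to be a truth-value element satisfying some literal of $C_j$; in particular a homomorphism always exists, so the prerequisite $\Amc\models\q$ is met. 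The images $h(p_1),\dots,h(p_m)$ encode a truth assignment $\beta$ (set $x_i$ true iff $h(p_i)=t_i$), and the only atoms and terms that can be shared are the per-variable ones $r(a,t_i)$, $r(a,f_i)$ and $t_i$, $f_i$, of which each variable gadget contributes exactly one.

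The key point is that the base count is attainable precisely when $F$ is satisfiable. For a fixed $\beta$, a clause satisfied by $\beta$ can be witnessed at no extra cost, by sending $q_j$ to the truth-value element $\beta$ chooses for a literal it makes true, reusing the atom $r(a,h(p_i))$ and the term $h(p_i)$; if instead $\beta$ falsifies $C_j$, then every $S_j$-element of $C_j$ has the polarity opposite to $\beta$, so $h(q_j)$ is not used by any variable gadget and forces at least one fresh atom $r(a,h(q_j))$ and one fresh term. Hence the base count of $1+2m+k$ distinct atoms (resp.\ $1+m$ distinct terms) is reachable iff some assignment satisfies all clauses, giving minimal $\msize=2m+k+3$ and minimal $\mdomain=m+1$ exactly in that case. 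Setting $n=2m+k+3$ for $\msize$ and $n=m+1$ for $\mdomain$ yields a polynomial-time reduction in which a proof within the bound exists iff $F$ is satisfiable, establishing \NP-hardness; as the construction uses only concept names and one role, it applies to every DL \Lmc.

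The step requiring most care is the correctness of the threshold argument. A naive reading would equate the extra cost with the number of clauses falsified by $\beta$, but this is wrong: several falsified clauses may be witnessed through the \emph{same} opposite-polarity element, so their contributions can merge. The argument is nonetheless robust, because I only need the distinction between cost $0$ and cost $\ge 1$: cost $0$ means every clause is witnessed by an element already supplied by a variable gadget, which is possible iff $\beta$ satisfies all clauses. I would therefore be careful to prove only this dichotomy, and additionally to verify the fixed base count by checking that the distinct predicates $X_i$ and $S_j$ keep all remaining atoms pairwise distinct and that no cross-gadget coincidences reduce the count below $1+2m+k$ (resp.\ $1+m$).
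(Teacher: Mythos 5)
Your proof is correct and takes essentially the same route as the paper's: a polynomial reduction from SAT with an empty TBox, exploiting that proofs in \Rsk then consist of ABox-fact leaves plus one \gCONJ and one \gEXISTS vertex, so that the \msize (resp.\ \mdomain) threshold is attainable iff maximal sharing of atoms (resp.\ terms) is possible, which holds iff the formula is satisfiable. The only difference is the gadget: the paper uses a chain-shaped query over clause constants, adding tautological clauses $p_i\vee\overline{p}_i$ so that consistency is enforced through shared $T$-atoms, whereas you use a star-shaped query with explicit variable-choice atoms $X_i(p_i)$ and clause witnesses $S_j(q_j)$ that share $r$-edges and terms with the variable slots; both constructions yield the same cost-$0$ versus cost-$\ge 1$ dichotomy, and your handling of the potential pitfall (merged contributions of several falsified clauses) matches the paper's implicit argument.
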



For \EL, we can similarly show improved complexity bounds for the case of tree
size, where the lower bounds are obtained using the same idea as for
Theorem~\ref{thm:dllite-lower}, however this time using the exponential bound on
the tree size from Lemma~\ref{lem:size-upper-dl-lite-el}.
%
%
\begin{restatable}{theorem}{TheELTreeUpperBounds}\label{the:el-bounds}
%
$\OP\sk(\EL,\mtree)$ is \NP-complete in combined, and in \PTime
in data complexity. For IQs, the problem is \PTime-complete in combined complexity.
    
\end{restatable}
%


\subsection{Finding Good Proofs with Expressive
Ontologies}\label{subsec:proof-bounds}

We continue our journey towards more expressive DLs.
First, we establish a more expressive counterpart of
Lemma~\ref{lem:size-lower-light}. This time, we can even enforce trees of
exponential depth, by implementing a binary counter using
concept names for the different bit values. To produce the entailment, the
proof has to increment the counter all the way to the maximum value, and do so
on every branch of a binary tree, which gives us the desired lower bound.
\begin{restatable}{lemma}{LemSizeLower}\label{lem:size-lower}
    One can construct a $\HornALC$-TBox $\Tmc_{\Lmc,n}$ 
    in time polynomial in~$n$ such that
    $\Tmc_{\Lmc,n} \cup\{A(a)\}\models B(a)$, but every proof 
    \added{of the entailment} is of (domain/tree) size 
    \added{doubly} exponential in~$n$. 
\end{restatable}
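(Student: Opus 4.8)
The goal is to construct, for each $n$, a $\HornALC$-TBox $\Tmc_{\Lmc,n}$ of size polynomial in $n$ such that $\Tmc_{\Lmc,n}\cup\{A(a)\}\models B(a)$, yet every proof of this entailment has doubly exponential (domain/tree) size. Building on the idea already used for Lemma~\ref{lem:size-lower-light}, my plan is to combine two gadgets: a \emph{binary counter} that forces a path of exponential length through the universal model, and a \emph{branching mechanism} that forces a binary tree whose branches each carry such a counter, so that the full proof tree has doubly exponential size.

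First I would implement an $n$-bit binary counter using $2n$ concept names $B_0,\dots,B_{n-1}$ and $\overline{B}_0,\dots,\overline{B}_{n-1}$ to represent, on each object, the $n$ bits of a counter value, with $B_i$ meaning bit $i$ is set and $\overline{B}_i$ meaning it is unset. I would write \HornALC\ axioms that (i) initialize the counter to $0$ at $a$, (ii) generate an $r$-successor via an axiom of type (iv), $C\sqsubseteq\exists r.C'$, and (iii) propagate the incremented counter value to that successor using axioms of types (iii) and (v) that encode binary increment: a bit flips from $0$ to $1$ at the lowest unset position, all lower bits reset to $0$, and all higher bits are copied unchanged. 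The key point is that, because these are Horn rules applied in the Skolem chase, reaching the value $2^n-1$ requires a \emph{chain} of $2^n$ successor objects $f^{(2^n-1)}(a)$, and the \gMOPO-inferences deriving the counter bits at each object must be chained sequentially; hence the domain size (number of distinct Skolem terms) along a single branch is already exponential in $n$.

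Next I would add the branching: replace the single successor axiom by an axiom generating \emph{two} distinguishable $r$-successors (using two roles $r_0,r_1$, each with its own Skolem function), and arrange the increment logic so that \emph{both} children inherit and must re-derive the full counter independently. Crucially, because the Skolem functions for $r_0$ and $r_1$ are distinct, the subtrees rooted at the two children use disjoint sets of Skolem terms, so their subproofs cannot be shared — this is what distinguishes tree size (and here also domain size) from mere reachability. A counter branching at each of $2^n$ levels produces a binary tree with $2^{2^n}$ leaves; I would let $B$ be entailed only after the deepest leaf of the deepest branch is reached (e.g.\ via an axiom fired at counter value $2^n-1$ that propagates a marker back up, or simply by making $B(a)$ depend on reaching every leaf). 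The resulting universal model, and hence every proof, must contain doubly exponentially many distinct Skolem terms, giving the bound for both \mdomain and \mtree.

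The main obstacle will be showing the \emph{lower} bound rigorously, namely that \emph{every} proof — not just the canonical chase-based one — has doubly exponential size, rather than merely exhibiting one large proof. For domain size this follows because the set of ground terms occurring in any proof of $B(a)$ must include all Skolem terms that the query (here the atom $B(a)$) genuinely depends on, and I would argue that the entailment cannot be witnessed without deriving the counter at every node of the branching tree, so no shortcut avoids introducing all $2^{2^n}$ terms. For tree size the argument is easier once domain size is handled, since tree size dominates domain size, but I must be careful that the branching genuinely prevents subproof reuse across the two children; this is secured by the disjointness of their Skolem-term sets. I would formalize the counter correctness by an induction on the bit position for the increment axioms and an induction on counter value for the claim that the chase reaches exactly $2^n$ levels, then lift this to the branching structure by induction on tree depth.
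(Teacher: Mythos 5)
Your overall strategy coincides with the paper's: an $n$-bit binary counter (concepts for set and unset bits) whose increment happens along successor edges, combined with binary branching at every level, so that the universal model contains a binary tree of depth roughly $2^n$ whose nodes are pairwise distinct Skolem terms, and $B$ is triggered when the counter reaches its maximal value. The paper's TBox realizes this with increment axioms such as $\overline{A}_i \sqcap A_{i-1}\sqcap\ldots\sqcap A_0 \sqsubseteq\exists r.A_i$, branching/synchronization axioms $\exists r.A_i \sqsubseteq \forall r.A_i\sqcap\exists s.A_i\sqcap\forall s.A_i$, and the leaf axiom $A_1\sqcap\ldots\sqcap A_n\sqsubseteq B$, which is the same skeleton you describe with $r_0,r_1$ in place of $r,s$.

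There is, however, one point where your proposal hedges between a correct and an incorrect mechanism, and it is exactly the crux of the lower bound. You propose to let $B$ be entailed either \enquote{via an axiom fired at counter value $2^n-1$ that propagates a marker back up} or \enquote{by making $B(a)$ depend on reaching every leaf}. The first variant, implemented in the obvious way as $\exists r_0.B\sqsubseteq B$, fails: a proof of $B(a)$ then only needs to derive the counter along a \emph{single} root-to-leaf branch and propagate $B$ back up that same branch, giving a proof with $O(n\cdot 2^n)$ vertices, i.e., merely exponential. The branching tree exists in the universal model, but nothing forces a proof to visit it, so your claim that \enquote{the entailment cannot be witnessed without deriving the counter at every node of the branching tree} is false under this reading. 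What is needed, and what the paper uses, is the \emph{conjunctive} propagation axiom $\exists r.B \sqcap \exists s.B \sqsubseteq B$: deriving $B$ at a node then requires deriving $B$ at a successor for \emph{each} of the two roles, recursively, and since (as you correctly observe) the two subtrees use disjoint Skolem terms, every proof must contain a distinct vertex labeled $B(t)$ for each of the doubly exponentially many tree nodes $t$, which bounds size, tree size, and domain size from below simultaneously. This recursive both-successor axiom is also precisely how your second, correct alternative is realized in \HornALC; once it is made explicit, the remainder of your argument (counter correctness by induction, disjointness of the subtrees' Skolem terms, transfer of the bound to tree size) goes through and matches the paper's proof.
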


In the case of (domain) size, we can also find a matching upper bound. The
general idea is using a kind of type construction. Intuitively, we identify
the terms occurring the proof based on the predicates they occur in. Because
there are at most exponentially many possibilities for this, we can bound the
nesting depth of Skolem terms by an exponential, which gives a double
exponential bound on domain size and size.
For
tree size, this is not so straightforward, and we leave the
exact bounds for future work.
\begin{restatable}{lemma}{LemSizeUpper}\label{lem:size-upper}
    For any CQ entailment
    $\Tmc\cup\Amc\models\q$ with $\Tmc$ being a $\HornALCHOI$-TBox, there
exists a proof of
    (domain) size double-exponential in $\Tmc$ and polynomial in $\Amc$.  
\end{restatable}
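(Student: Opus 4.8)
The plan is to start from the Skolem chase and bound the part of it that a proof must mention, by means of a type-based blocking argument. By the completeness argument for \Rsk, the least Herbrand model~$M$ of $\Tmc^s\cup\Amc$ is universal, so $\Tmc\cup\Amc\models\q$ yields a homomorphism $h\colon\q\to M$. The ground terms of~$M$ form a forest whose roots are the individuals of~\Amc and whose edges arise from applications of existential axioms of shape~(iv), each step $t\mapsto f(t)$ being labelled by one of at most~$|\Tmc|$ Skolem functions. A proof of~\q in \Rsk can be built by deriving, for every atom of the instantiated query $h(\q)$, the corresponding fact of~$M$ along the chase via \gMOPO (and \gEQUAL for the merges), and then composing the results with \gCONJ and a single \gEXISTS. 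Hence the $\mdomain$ of such a proof is bounded by the number of distinct ground terms occurring in these derivations, and it suffices to bound that number by a double exponential in~$|\Tmc|$ and a polynomial in~$|\Amc|$.

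First I would assign to each term~$t$ of~$M$ a \emph{type} $\tau(t)$ recording the information relevant for the chase: the set of concept names $A\in\sig(\Tmc)$ with $M\models A(t)$, the (possibly inverse) role connecting~$t$ to its predecessor, and whether~$t$ has been merged with an individual by a nominal axiom~(vi). Since these data range over subsets of $\sig(\Tmc)$ and over $O(|\Tmc|)$ roles, there are at most $2^{p(|\Tmc|)}$ such types for a polynomial~$p$, i.e.\ exponentially many in~$|\Tmc|$.

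The key step is a pumping/collapsing argument: along any branch from an \Amc-individual to a term~$t$, whenever a type repeats, the subtree rooted at the shallower occurrence already realizes everything derivable at the deeper one, so the proof can redirect its inferences to the shallower term and discard the intervening terms. Making this precise lets me assume that no type repeats on a branch, bounding the nesting depth by $d=2^{p(|\Tmc|)}$. With branching at most~$|\Tmc|$ per term, each root spans at most $|\Tmc|^{d}$ terms within this depth, which is double exponential in~$|\Tmc|$; summing over the~$|\Amc|$ roots gives the claimed bound, double exponential in~$|\Tmc|$ and linear in~$|\Amc|$. Assembling the derivations of the (at most~$|\q|$) query atoms over this bounded set of terms, together with one \gCONJ and one \gEXISTS step, yields a proof of the desired domain size; by Lemma~\ref{lem:homo} we may take it to be a subproof of $\Rsk(\Tmc^s\cup\Amc,\q)$.

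The main obstacle I expect is making the collapsing argument sound in the presence of inverse roles and nominals. Because axioms of shapes~(iii) and~(v) propagate consequences between a term and both its successors and its predecessor, and nominal axioms~(vi) can merge a generated term back into a named individual, a naive type that only looks ``downward'' is not preserved under collapsing: redirecting a deep term to a type-equal ancestor may create new role edges that retrigger backward-propagating rules and alter types. The standard remedy is to strengthen the notion of type so that it captures enough of the neighbourhood (e.g.\ a pair-/double-blocking condition that also fixes the predecessor's type) and to verify that the collapsed structure is still a model from which $h(\q)$ can be recovered. Carrying this through carefully, while keeping the number of types exponential in~$|\Tmc|$, is the crux of the proof.
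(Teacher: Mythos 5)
Your high-level strategy---bound the nesting depth of Skolem terms by an exponential via a type-repetition/collapsing argument, then count polynomially many Skolem functions raised to an exponential depth per ABox individual---is the same as the paper's. But your proposal has a genuine gap, and you name it yourself: the collapsing step is not sound for the type you define (concepts satisfied at $t$ in the chase, predecessor role, nominal flag), and you leave the repair (``strengthen the type, pair-blocking, verify the collapsed structure still works'') as an unexecuted plan. With inverse roles, axioms of shapes (iii) and (v) propagate facts from a term's successor back to the term, and nominal axioms (vi) merge generated terms into constants; so after redirecting a deep subtree onto a type-equal ancestor you must re-derive, at the surviving terms, every fact that was previously obtained \emph{through} the discarded ones, and a downward-looking model-level type does not guarantee that this is possible. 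Since this is exactly the crux of the lemma, the proposal as written does not constitute a proof.

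The paper sidesteps the difficulty by collapsing a given \emph{proof} rather than the chase. First it normalizes \gEQUAL applications: equalities can only arise from rules of shape (vi), hence have the form $t=a$, and one may assume that complex terms are always replaced by constants; consequently every binary atom occurring in a proof has the restricted shape $R(t,f(t))$ or $R(t,a)$. This makes every inference \emph{local} to a window $\{t,f(t)\}$ or $\{t,a\}$. The type of a term $t$ is then taken relative to the proof $\p$ itself: the set of patterns $A(\_)$, $R(\_,f(\_))$, $A(f(\_))$, $R(\_,a)$ whose instantiations at $t$ occur as labels in $\p$. If $t_1$ is nested inside $t_2$ and the two have equal proof-level types, then \emph{every} inference of $\p$ performed inside $t_2$'s window---including the backward-propagating ones of shapes (iii)/(v) and the nominal merges---can be replayed inside $t_1$'s window, so the part of $\p$ living above $t_2$ can be spliced onto $t_1$ while remaining a proof. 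No model-level blocking condition and no re-derivation of facts is needed, because the type records exactly what the proof uses. If you want to rescue your model-level route, you would have to carry out the strengthened (pair-)blocking in full and additionally show that the homomorphic image of $\q$ and all the chase derivations you rely on survive the collapse; the proof-level types make both problems disappear, which is why the paper argues that way.
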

In contrast to Lemma~\ref{lem:size-upper-dl-lite-el}
for \DLLite and \EL, we cannot use Lemma~\ref{lem:size-upper} to reduce
$\OP\sk(\HornALCHOI,\m)$ to query entailment in $\HornALCHOI$ since a
double exponential bound cannot be expressed using only polynomially many bits.
On the positive side, the fact that the bound~$n$ is encoded in binary means
that for $\OP\sk(\HornALCHOI,\m)$, we do not need to consider proofs of more
than exponential
size, which gives us a \NExpTime upper bound for~$\msize$; for~$\mdomain$ it holds as well since there are exponentially many facts
 over $\sig(\Tmc^s\cup\Amc)$ with a domain bounded by $n$. Using a technique from~\cite{ABB+-CADE21}, we can even improve
this to \PSpace in the case of~\mtree.
%

\begin{restatable}{theorem}{ThHornSROIQ}\label{th:HornSROIQ}
$\OP\sk(\HornALCHOI,\mx)$
is in \NExpTime for $\mx\in\{\msize,\mdomain\}$, and in
\PSpace for $\mx=\mtree$.
%
\end{restatable}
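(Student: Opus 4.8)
The plan is to establish the three bounds (\NExpTime for \msize and \mdomain, \PSpace for \mtree) separately, leveraging Lemma~\ref{lem:size-upper} and the binary encoding of~$n$. First consider $\mx=\msize$. Because $n$ is encoded in binary, a proof of size at most~$n$ has at most exponentially many vertices, each labeled by a Boolean CQ or Skolemized axiom over $\sig(\Tmc^s\cup\Amc)$. By Lemma~\ref{lem:homo} it suffices to search among subproofs of $\Rsk(\Tmc^s\cup\Amc,\q)$. The approach is to guess such a candidate proof nondeterministically: guess the (at most exponentially many) vertices together with their labels, and for each vertex guess an incoming hyperedge instance. Crucially, Lemma~\ref{lem:size-upper} guarantees that whenever a proof of size $\le n$ exists, there is one in which every ground term has nesting depth at most single-exponential in~$|\Tmc|$; hence every label is a CQ whose terms have single-exponential size, so each label and each hyperedge instance has at most exponential size and can be written down and verified in exponential time. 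After guessing, one checks in (at most) exponential time that the structure is acyclic, that leaves are labeled by elements of $\Tmc^s\cup\Amc$, that every inference conforms to one of \gMOPO, \gEQUAL, \gCONJ, \gEXISTS, that the unique sink is labeled~$\q$, and finally that the size is $\le n$. This yields the \NExpTime bound for~\msize.

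For $\mx=\mdomain$, the argument is analogous but we must first bound the size of the proof itself, not just its domain. The paper already remarks that there are only exponentially many facts over $\sig(\Tmc^s\cup\Amc)$ once the domain is restricted to $n$ ground terms of bounded nesting depth; the point is that a proof of domain size $\le n$ can, without loss of generality, be taken to have bounded overall size as well. Concretely, by Lemma~\ref{lem:homo} one may restrict to subproofs, and a subproof in which each ground term has single-exponential nesting depth (Lemma~\ref{lem:size-upper}) and in which each fact appears as a conclusion at most once has only exponentially many distinct atom-labeled vertices; the \gCONJ and \gEXISTS steps at the very end add only polynomially many further vertices needed to assemble~$\q$. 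Thus a domain-bounded proof is also size-bounded by an exponential, and the same guess-and-check procedure as above applies, giving \NExpTime.

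The interesting case is $\mx=\mtree$, where the bound is \PSpace despite the fact that Lemma~\ref{lem:size-lower} shows tree size can be doubly exponential. Here the plan is to reuse the technique of~\cite{ABB+-CADE21}: rather than writing down an exponentially large (or larger) proof object explicitly, one explores the tree unraveling recursively in a depth-first manner, maintaining only the current root-to-node path on the stack. Because tree size is additive over the premises of each inference, one can compute, for each vertex label, the minimal tree size of a subproof concluding that label via a recursive traversal; the recursion depth is bounded by the number of distinct labels that can appear along a single path, which by Lemma~\ref{lem:size-upper} is at most single-exponential in~$|\Tmc|$, so each stack frame stores a single label of exponential size and the stack holds exponentially many such frames, giving polynomial space only if we are careful. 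The key refinement from~\cite{ABB+-CADE21} is that along any simple path in a minimal-tree-size proof no label repeats, and the relevant numeric quantities (partial tree-size totals compared against~$n$) are numbers of single-exponential magnitude, hence representable in polynomially many bits; combining these observations shows the whole exploration runs in polynomial space.

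The main obstacle is the \mtree case: one must argue carefully that the recursive exploration can be carried out in \PSpace even though the object being explored (the tree unraveling) is doubly exponential. This requires (i) bounding the relevant recursion depth by the single-exponential nesting-depth bound on Skolem terms from Lemma~\ref{lem:size-upper}, so that path labels do not blow up, and (ii) ensuring the running tally of tree size, which may reach $n$, is compared incrementally against the binary-encoded~$n$ using only polynomial space. Verifying that the abstract machinery of~\cite{ABB+-CADE21} transfers to the \Rsk setting — in particular that its preconditions on the measure and derivation structure are met by tree size over $\Rsk(\Tmc^s\cup\Amc,\q)$ despite its infinitude — is where the real work lies; the \msize and \mdomain bounds are comparatively routine guess-and-check arguments once Lemma~\ref{lem:size-upper} supplies the nesting-depth bound.
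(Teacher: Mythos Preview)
Your guess-and-check approach for $\msize$ and $\mdomain$ is essentially what the paper does, but your invocation of Lemma~\ref{lem:size-upper} is both unnecessary and not quite what that lemma states. The nesting-depth bound you need follows directly from the size (resp.\ domain-size) bound~$n$ itself: every fresh Skolem term is introduced by a \gMOPO step with a rule of type~(iv), so a proof with at most~$n$ vertices has nesting depth at most~$n$; for $\mdomain$, the domain is closed under subterms (a term $f(t)$ can only appear if $t$ does), so $n$ terms again means nesting depth at most~$n$. Either way each label has size polynomial in~$n$, hence single-exponential in the input, and the guess-and-check goes through. Lemma~\ref{lem:size-upper} is about the existence of \emph{some} proof of bounded (domain) size; it does not literally say that every proof of size~$\le n$ can be converted into one of size~$\le n$ with small nesting depth.

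For $\mtree$ your plan has a genuine gap. The depth-first exploration you describe would store, at each recursion level, a label of (up to) single-exponential size, and the recursion depth---which you bound via label non-repetition and Lemma~\ref{lem:size-upper}---is itself at best single-exponential. That product is not polynomial space, and you recognize this (\enquote{giving polynomial space only if we are careful}) without saying what the careful fix is. The paper's argument is much more blunt: it simply invokes Theorem~17 of~\cite{ABB+-CADE21}, observing that the procedure there runs in space $p\cdot\log_2 n$, where $p$ is the maximal number of premises in any inference. Since $p$ is polynomial in $|\Tmc|+|\q|$ and $\log_2 n$ is polynomial in the input, this is \PSpace. The point is that the cited procedure is not a plain DFS storing full labels on a stack of exponential depth; the $p\cdot\log_2 n$ bound is the content of that external theorem, and your sketch does not reproduce it. So for $\mtree$ you have the right reference but the wrong internal mechanism.
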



For \msize, we are able to improve this complexity even further using a more
involved technique. The idea is to virtually construct the proof from
\emph{proof segments} which are
represented using tuples of the form $\tup{t,\In,\Out,\Size}$, where $t$ is a
term, $\In$ and $\Out$ are sets of atoms of restricted shape that may use a
placeholder~$\_$ to represent \emph{relative Skolem terms}, and
$\Size$ is an integer.
Intuitively, such a tuple tells us that it is possible to derive $\Out$
from $\In$ using a proof of size at most $\Size$. $t$~may
optionally store what the placeholder~$\_$ stands for, provided that this is
relevant for the query answer. We impose additional syntactic restrictions to
ensure that
there can be at most exponentially many such tuples. The decision
procedure starts from a set of initial proof segments that correspond to proofs
of polynomial size, and then step-wise aggregates proof segments
to represent larger proofs, with the concise tuple representation making sure
that there can be at most exponentially many such operations. We can thus prove
the following theorem.

The main observation underlying this algorithm is that \HornALCHOI rules can
only increase or decrease the nesting depth of a term by at most 1, while we can
assume that \gEQUAL only replaces terms by constants. This introduces a kind of
locality to proofs that allows us to decompose proofs in the way that is
required by our method. Since for logics with number restrictions (such as
\HornALCHOIQ), this locality assuption failed, we did not consider such logics
yet in our investigations.

\begin{restatable}{theorem}{ThHornSROIQInstance}\label{th:HornSROIQ-IQ}
 $\OP\sk(\HornALCHOI,\msize)$ is in \ExpTime.
\end{restatable}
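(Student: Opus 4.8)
The plan is to realize the dynamic-programming idea sketched above, using \emph{proof segments} to avoid ever materializing the (by Lemma~\ref{lem:size-upper} potentially double-exponential) proof explicitly. First I would establish the \emph{locality} of proofs in $\Rsk$ over a \HornALCHOI TBox: after Skolemization, every rule of the forms (i)--(vii) relates the variable~$x$ only to itself and to a single successor term $f(x)$, or via inverse roles to a predecessor, so each \gMOPO inference changes the nesting depth of its terms by at most~$1$. Together with the assumption that \gEQUAL only replaces a nested term by a constant (a nominal), this means that along the Skolem-term tree the set of atoms linking the proof of facts about a term~$t$ to the proofs of facts about $t$'s parent and children is bounded in size and shape. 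I would make this precise by showing that any admissible proof — which by Lemma~\ref{lem:homo} we may take to be a subproof of $\Rsk(\Tmc^s\cup\Amc,\q)$ — decomposes into pieces, each associated with one Skolem term~$t$, whose \enquote{interface} to the rest of the proof consists only of atoms over $\sig(\Tmc^s)$ involving~$t$, a single successor, or the predecessor.

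Next I would formalize proof segments. A segment is a tuple $\tup{t,\In,\Out,\Size}$ in which $\In$ and $\Out$ are sets of atoms over $\sig(\Tmc^s)$ using a placeholder~$\_$ in place of the \enquote{current} term together with relative successor/predecessor terms of depth at most one, and $\Size$ records the size of a proof that derives the atoms in $\Out$ from the leaves~$\In$. The placeholder abstracts away the absolute nesting depth, so the same segment represents arbitrarily deep parts of the universal model; the optional component~$t$ only needs to store the concrete term when the query match refers to it. Because the relative atoms range over $\sig(\Tmc^s)$ with bounded relative-term depth, the number of distinct configurations $(t,\In,\Out)$ is at most exponential in $|\Tmc|$ and polynomial in $|\Amc|$ and $|\q|$, and for each configuration it suffices to keep the minimal $\Size$.

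Then I would give the algorithm. Initialize with segments corresponding to ABox facts and single applications of \gMOPO, \gEQUAL, \gCONJ, and \gEXISTS, all of polynomial size; then repeatedly \emph{aggregate} two compatible segments — matching the $\Out$ atoms of one to the $\In$ atoms of another along the term tree, and combining query conjuncts via \gCONJ and \gEXISTS — into a new segment whose $\Size$ is obtained by summing the internal sizes while counting each shared interface atom exactly once. I would compute, for every configuration, the minimal achievable $\Size$ by a least-fixpoint (shortest-path-style) relaxation; since there are only exponentially many configurations and aggregation only increases size, the relaxation stabilizes after exponentially many steps, each computable in exponential time. Accepting iff some segment represents a complete proof of~$\q$ with $\Size\le n$ then places the problem in \ExpTime, improving the \NExpTime bound of Theorem~\ref{th:HornSROIQ}.

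The main obstacle I expect is the size bookkeeping under the \enquote{number of distinct formulas} measure. Because $\msize$ counts each formula once even when it is reused — unlike $\mtree$ — the $\Size$ of an aggregated proof is \emph{not} the sum of the two segment sizes, and I must argue that the interface captured by $\In$/$\Out$ contains exactly those formulas that can be shared between a segment and the rest of the proof, so that no vertex is double-counted and none is missed. This correctness argument is precisely where locality is indispensable: it guarantees the interface is bounded and never reaches across non-adjacent nesting levels, which is what makes the tuple representation both sound and of exponential cardinality. I would also observe that this is exactly why the technique fails for number restrictions (\HornALCHOIQ): there a single rule can relate several successors at once, so the interface can no longer be bounded and the decomposition breaks down.
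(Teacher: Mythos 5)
Your proposal is correct and follows essentially the same route as the paper's own proof: the locality of \HornALCHOI inferences (nesting depth changes by at most one, \gEQUAL replacing terms by constants), the representation of proof pieces as tuples $\tup{t,\In,\Out,\Size}$ with a placeholder for the current term, and a fixpoint computation over exponentially many segments where merging subtracts the shared interface atoms so that $\msize$ is not double-counted. The paper formalizes the interface-boundedness via a lemma on multi-proofs and handles the query match by enumerating the exponentially many assignments of query variables to terms up front, but these are exactly the details your sketch anticipates.
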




\section{Directly Deriving CQs}\label{sec:cq}

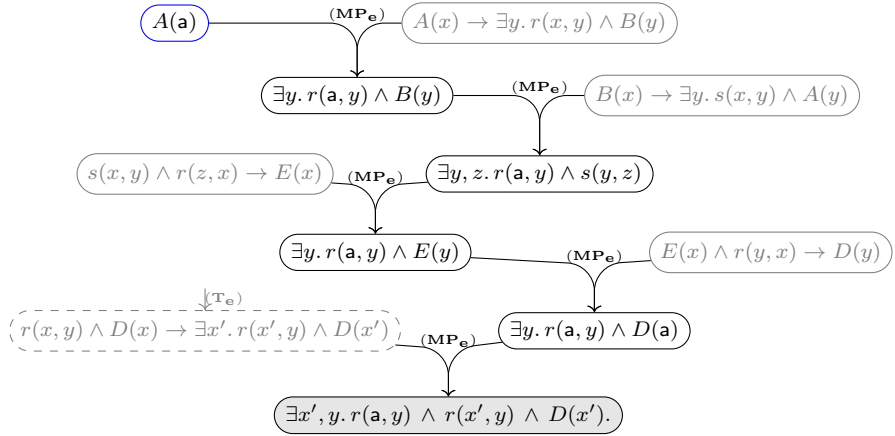
\begin{figure}[tb]
  \centering\begin{footnotesize}
  \begin{tikzpicture}[scale=0.8,
  block/.style={
  draw,
  rounded rectangle,
  },
  tbox/.style={
  draw=gray, text=gray,
  rounded rectangle,
  minimum width={width("$A \sqsubseteq \exists r.(\exists r.A \sqcap B)$")-5pt},
  minimum height=1.8em,
  },
  he/.style={rounded corners=10pt,->}]
  \path (-1,1.8) node[block, draw=blue] (x1) {$A(\ex{a})$}
      (5,1.8) node[tbox] (x2) {$A(x)\to\exists y.\,r(x,y) \land B(y)$}
      (2,0.6) node[block] (x3) {$\exists y.\,r(\ex{a},y) \land B(y)$}
      (8,0.6) node[tbox] (y) {$B(x)\to\exists y.\,s(x,y) \land A(y)$}
      (5,-0.7) node[block] (y3) {$\exists y,z.\,r(\ex{a},y) \land s(y,z) $}
          (-0.5,-0.7) node[tbox] (x4) {$s(x,y) \land  r(z,x) \to E(x)$}
         (2.3,-2) node[block] (z0) {$\exists y.\,r(\ex{a},y) \land E(y)$}
         (8.8,-2) node[tbox] (z1) {$E(x) \land r(y,x) \to D(y)$}
       (5.9,-3.3) node[block] (x5) {$\exists y.\,r(\ex{a},y) \land D(\ex{a})$}
        (-0.5,-3.3) node[tbox, dashed] (x6) {$r(x,y) \land D(x)\to \exists x'.\,r(x',y) \land D(x')$}
      (3.5,-4.7) node[block,,fill=gray!20] (z) { $\exists x',y.\, r (\ex{a}, y) \,\land\,  r (x', y) \,\land\, D(x').$}
  ;
  \draw[he] (x1) -- ($(x1)!0.5!(x2)$) node[yshift=3pt] {\tiny{\MOPO}} -- (x3);
  \draw[he] (x2) -- ($(x1)!0.5!(x2)$) -- (x3);
  \draw[he] (y) -- ($(y)!0.5!(x3)$) node[yshift=3pt] {\tiny{\MOPO}} -- (y3);
  \draw[he] (x3) -- ($(y)!0.5!(x3)$) -- (y3);
  %
  %
    \draw[he] (x4) -- ($(z0)+(0,1.1)$) node[yshift=3pt] {\tiny{\MOPO}} -- (z0);
  \draw[he] (y3) -- ($(z0)+(0,1.1)$) -- (z0);
      \draw[he] (z0) -- ($(x5)+(0,1.1)$) node[yshift=3pt] {\tiny{\MOPO}} -- (x5);
  \draw[he] (z1) -- ($(x5)+(0,1.1)$) -- (x5);
   \draw[he] (x5) -- ($(z)+(0,1.1)$) node[yshift=3pt] {\tiny{\MOPO}}  -- (z);
  \draw[he] (x6) -- ($(z)+(0,1.1)$) -- (z);
  \draw[he, gray] ($(x6)+(0,0.7)$) -- (x6) node[yshift=12pt,xshift=7pt] {\tiny{\TAUT}} ;
  \end{tikzpicture}
  \end{footnotesize}
  \caption{A CQ proof for the example}
  \label{fig:ex-existential}
\end{figure}
In addition to connecting proofs to the universal model, \Rsk has the advantage
that we can work with single atoms, which makes it easy to see
how the existential rules are applied. However, the resulting proofs are not
sound \wrt the ontology~\Tmc, but only \wrt the Skolemized rules~$\Tmc^s$.
In order to be sound w.r.t.~\Tmc, inspired
by~\cite{Stefanoni-11,DBLP:conf/ekaw/CroceL18}, we can work directly with
Boolean CQs (see Fig.~\ref{fig:ex-existential}). 
%
%
\added{Because these proofs do not work on universal models, and do not refer
to introduced individuals directly, domain size is irrelevant in this
setting, which is why we do not consider it here.}

The corresponding inference schemas are shown in Fig.~\ref{fig:dcq}.
Now, the basic inference \MOPO matches the left-hand side of a rule in~\Tmc to
part of a CQ and then
replaces it by (part of) the right-hand side.
%
%
%
%
Additionally, we allow to keep the replaced atoms from the original CQ.
Again, \MOPO is admissible only if there exists a substitution $\pi$ such
that $\pi(\psi(\vec{y},\vec{z}))\subseteq\phi(\vec{x})$, and then
$\rho(\vec{w})$ is the result of replacing \emph{any subset of}
$\pi(\psi(\vec{y},\vec{z}))$ in $\phi(\vec{x})$ by \emph{any subset of}
$\pi(\chi(\vec{z},\vec{u}'))$, where the variables $\vec{u}$ are renamed into
new existentially quantified variables $\vec{u}'$ to ensure that they are
disjoint with $\vec{x}$.
%
%
To duplicate variables, we introduce tautological rules such as $P(x,z)
\to \exists z'.\, P(x,z')$ via~\TAUT, which yields $\exists z,z'.\,{P}(\ex{b},z)\land
{P}(\ex{b},z')$ when combined with $\exists z.\,{P}(\ex{b},z)$ using~\MOPO.
%
%
The remaining inference schemas are similar to the ones in~\Rsk, but not
restricted to ground atoms.
For \CONJ, we rename the variables $\vec{y}$ to $\vec{u}$ to avoid overlap
with~$\vec{x}$.
%
\begin{figure}[tb]
  \begin{framed}
    \centering
    \AXC{$\exists\vec{x}.\,\phi(\vec{x})$}
    \AXC{$\psi(\vec{y},\vec{z})\to\exists\vec{u}.\,\chi(\vec{z},\vec{u})$}
    \RL{\MOPO}
    \BIC{$\exists\vec{w}.\rho(\vec{w})$}
    \DP
    \qquad
    \AXC{\vphantom{$\exists\x.\,\phi(\x)$}}
    \RL{\TAUT}
    \UIC{$\phi(\vec{x},\vec{y})\to\exists\vec{x}.\,\phi(\vec{x},\vec{y})$}
    \DP

    \bigskip

    \AXC{$\exists\x.\,\phi(\x)\land t_1=t_2$}
    \RL{\EQUAL}
    \UIC{$\exists\x.\,\phi(\x)[t_1\mapsto t_2]$}
    \DP
    \quad
    \AXC{$\exists\vec{x}.\,\phi(\vec{x})$}
    \AXC{$\exists\vec{y}.\,\psi(\vec{y})$}
    \RL{\CONJ}
    \BIC{$\exists\vec{x},\vec{u}.\phi(\vec{x})\land\psi(\vec{u})$}
    \DP
    \quad
    \AXC{$\exists\vec{x}.\,\phi(\vec{x},\vec{a})$}
    \RL{\EXISTS}
    \UIC{$\exists\vec{x},\vec{y}.\,\phi(\vec{x},\vec{y})$}
    \DP  
  \end{framed}
  \vspace*{-\baselineskip}
  \caption{Inference schemas for \Rcq.}
  \label{fig:dcq}
\end{figure}
%
%

%
%
%

\begin{definition}[CQ Deriver]\label{def:cq-ds}
  The derivation structure $\Rcq(\Tmc\cup\Amc,\q)$ is defined similarly to~\Rsk, but using \MOPO, \TAUT, \EQUAL, \CONJ, and~\EXISTS.
  We also define $\OP\cq$ analogously to $\OP\sk$.
\end{definition}

Proofs obtained through \Rcq are sound \wrt the original KB and do not depend
on the notion of universal model. However, these proofs are more complex since
vertices are not labeled with single atoms anymore, making it harder to
understand how a rule is applied in case of an \MOPO inference.
Indeed, verifying individual \MOPO steps is even \NP-hard, since it requires to
match one set of atoms into another, which is equivalent to database query
answering~\cite{AbHV-95}.
This could potentially be solved by also showing the substitutions
corresponding to these inference steps to the user, but this would lead to even
more information being included in a single inference step. In general, we
believe that except for the advantage of soundness, proofs based on CQs
are less helpful for explaining query answers to users.
In case users still prefer an inference system that is sound \added{\wrt the
original TBox rather than just the Skolemized version}, we observe that
it is not hard to translate proofs based on \Rsk into proofs in
\Rcq and vice versa.

\begin{restatable}{theorem}{ThmTransformation}\label{thm:transformation}
Any proof~$\p$ in $\Rcq(\Tmc\cup\Amc,\q)$ can be transformed into
a proof in $\Rsk(\Tmc^s\cup\Amc,\q)$ in time polynomial in the
sizes of~$\p$ and~\Tmc, and conversely any proof
$\p$ in $\Rsk(\Tmc^s\cup\Amc,\q)$ can be transformed into
a proof in $\Rcq(\Tmc\cup\Amc,\q)$ in time polynomial in the
sizes of~$\p$ and~\Tmc.
The latter also holds for tree proofs.
\end{restatable}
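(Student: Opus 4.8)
The plan is to exploit the tight correspondence between a Skolem term $f(t)$ in \Rsk and an existentially quantified variable standing for ``the $f$-successor of $t$'' in \Rcq: Skolemizing a rule of type~(iv) replaces the existential head variable $y$ by $f(x)$, so the two derivers differ essentially only in whether witnesses for existentials are named by Skolem terms or left as bound variables. Accordingly, each direction of the transformation is a local, inference-by-inference rewriting that keeps the two versions of the same proof in step.

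For the direction $\Rcq\to\Rsk$ I would process the given proof~\p in topological order and compute a \emph{grounding} $\sigma$ that sends every existentially quantified variable occurring in~\p to a Skolem term. Whenever an \MOPO step instantiates a rule $\psi\to\exists\vec{u}.\chi$ via a substitution $\pi$, its fresh head variables $\vec{u}'$ are sent by $\sigma$ to the Skolem terms $f(\sigma\pi(\vec{z}))$ coming from the corresponding Skolemized rule in $\Tmc^s$, while frontier and answer terms are grounded by $\sigma\pi$. Applying $\sigma$ to each CQ vertex turns it into a conjunction of ground atoms, each of which lies in the Skolem chase and is therefore derivable in \Rsk. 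I would then decompose every CQ vertex into its constituent single-atom vertices, realize each \MOPO step by one \gMOPO step per newly produced head atom, translate \EQUAL, \CONJ, and \EXISTS into \gEQUAL, \gCONJ, and \gEXISTS, and collapse each \TAUT step (which only duplicates a variable) into a reuse of the already-derived ground atom, i.e.\ a double connection. Since each \MOPO step contributes at most $|\Tmc|$ new atoms, the whole construction stays polynomial in $|\p|$ and $|\Tmc|$.

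For the converse $\Rsk\to\Rcq$ I would fix an \emph{existentialization} replacing every Skolem term appearing in~\p by a fresh variable, with constants left fixed, and then rebuild the proof so that atoms sharing a Skolem term share the corresponding bound variable. The crucial move is that a rule-(iv) instance of \gMOPO, which in \Rsk introduces $f(t)$ through one or two single-atom steps, is simulated by a single \MOPO step with the original existential rule $A(x)\to\exists y.\,r(x,y)\land B(y)$, whose fresh $y$ now plays the role of $f(t)$ and correctly links the produced atoms. Later \gMOPO steps are matched against the already-built CQ, whose body atoms are available because \MOPO may retain its premises; \gCONJ and \gEXISTS map to \CONJ and \EXISTS, and a \gEQUAL step on a derived equality $f(t)=a$ maps to \EQUAL once the equality atom has been produced. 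Where a Skolem term must occupy two head positions, or to reconcile the fact that \CONJ renames its conjuncts apart, I would insert \TAUT steps. Because this rewriting is step-local, it maps tree proofs to tree proofs, which gives the final claim; the $\Rcq\to\Rsk$ direction is not asserted for tree size because splitting one CQ vertex into several atom vertices changes the branching of the tree unraveling.

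I expect the main obstacle to be faithfully preserving the sharing of Skolem terms across atoms as shared bound variables in \Rcq: the \MOPO schema introduces fresh existentials at every rule application and \CONJ deliberately renames its inputs apart, so the only mechanism that can express ``these atoms mention the same anonymous object'' is a single type-(iv) \MOPO step or a \TAUT-assisted duplication. Getting the bookkeeping right---grouping the \Rsk inferences belonging to one rule instance, threading the equality handling through the selective \gEQUAL versus the global \EQUAL, and checking that all \MOPO side conditions on substitutions are met---is where the real work lies, whereas the polynomial time bound follows routinely once each inference is shown to expand into a number of inferences of the other calculus that is constant in~$|\Tmc|$.
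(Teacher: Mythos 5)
Your proposal is correct and follows essentially the same route as the paper's proof: for $\Rcq\to\Rsk$, grounding the proof with Skolem terms, decomposing each CQ vertex into single-atom vertices with one \gMOPO step per produced head atom, and eliminating \TAUT steps in favor of reusing already-derived atoms; for $\Rsk\to\Rcq$, simulating each rule application by a single \MOPO step on a growing CQ that retains previously derived atoms, using \TAUT-assisted duplication for repeated premises, and de-Skolemizing at the end, which (as in the paper) yields tree-shaped proofs. The only cosmetic difference is that the paper first defers all \EXISTS applications to the end of the \Rcq proof, whereas in your version intermediate \EXISTS steps simply collapse under the grounding; both devices resolve the same issue.
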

%
This theorem also shows that this deriver is complete for query entailment since
we already know that \Rsk is complete.
However, it is not the case that \emph{minimal} proofs are equivalent for these
two derivers, \ie a minimal proof may become non-minimal after the
transformation.
Nevertheless, many of the results we have seen before also apply to \Rcq (see the
\iftr
appendix 
\else
extended version\footnotemark[\getrefnumber{link}]
\fi
and~\cite{DL22paper} for details). However, due to duplication of atoms
via~\TAUT, some results can differ (\cf Theorem~\ref{the:np-hard-trees}):
  
\begin{restatable}{theorem}{ThmNPHardCQStructure}\label{lem:np-hard-cq-structure}
  Let \Lmc be an arbitrary DL. For tree-shaped CQs, $\OP\cq(\Lmc,\msize)$ and $\OP\cq(\Lmc,\mtree)$ are \NP-hard.
\end{restatable}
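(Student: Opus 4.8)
The plan is to establish \NP-hardness by polynomial-time reductions from propositional satisfiability, in the spirit of Theorem~\ref{the:np-hard-trees}, but now leveraging the extra power that the \TAUT rule gives to \Rcq. As in that theorem, I would work \emph{without a TBox}, so that the only usable inferences are \CONJ, \EXISTS, \EQUAL, \TAUT, and \MOPO applied to the tautological rules introduced by \TAUT. Given a CNF formula $\psi$, I would build in polynomial time an ABox~\Amc, a tree-shaped Boolean CQ~\q with $\Amc\models\q$, and a bound~$n$, so that $\Rcq(\Amc,\q)$ admits a proof of measure at most~$n$ iff $\psi$ is satisfiable. The query is tree-shaped by construction (its Gaifman graph is a tree), but it maps \emph{non-injectively} into~\Amc, and realizing the several query atoms that collapse onto the same ABox facts is exactly what forces the use of \TAUT: one derives a fact once and then duplicates it via a tautological rule. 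Different homomorphic realizations of~\q correspond to different candidate assignments, and the measure of the cheapest proof realizing a given matching will track whether the associated assignment satisfies all clauses.

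For the size measure~\msize the argument stays close to Theorem~\ref{the:np-hard-trees}: distinct formulas are counted once, so reusing ABox facts across query atoms lowers the size, and the optimal reuse pattern encodes a satisfying assignment. The additional point to verify is that \TAUT together with the liberal form of \MOPO --- which may keep replaced atoms and replace arbitrary subsets --- does not create a shortcut that decouples the minimal size from the SAT answer; since any tautology-based duplication only adds vertices, it can never undercut the intended reuse-based proof, which makes the reduction robust.

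For tree size~\mtree the reduction must genuinely exploit \TAUT, because this is precisely where \Rcq diverges from \Rsk, for which tree size over tree-shaped \DLLiteR queries is polynomial by Theorem~\ref{th:ptime-combined}. The key observation is that in a tree unraveling a reused subformula is normally paid for once per occurrence, but \TAUT lets one derive a subformula a single time and then duplicate it with a constant-size \MOPO step, so that its (unraveled) subproof is counted only once. I would therefore design~\q so that a gadget subquery has to appear in several positions, making it advantageous to build it once and copy it via \TAUT; the combinatorial choice of which occurrences are produced by duplication versus derived independently is what encodes satisfiability, so that the minimal tree size drops below~$n$ exactly when~$\psi$ is satisfiable.

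The main obstacle is tightness. I must rule out that the flexibility of \Rcq --- arbitrary-subset replacement in \MOPO, chaining of freshly introduced tautological rules, and the interaction of \CONJ with variable renaming --- yields an unintended cheaper proof that would make an unsatisfiable instance appear satisfiable or otherwise distort the bound. The delicate step is to establish a normal form for measure-minimal \Rcq proofs over a trivial TBox and to show that every proof can be rewritten into this form without increasing the measure; once the normal form pins down exactly how atoms may be shared or duplicated, relating the optimal measure to the SAT answer becomes bookkeeping, and I would separately check that all gadgets keep~\q tree-shaped.
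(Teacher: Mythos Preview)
Your high-level plan is right---reduce from SAT, use no TBox, reuse the gadget from Theorem~\ref{the:np-hard-trees}, and let \TAUT do the work---but your description of the \mtree argument reveals a misconception about \emph{how} \TAUT is used, and the mechanism you propose (``the combinatorial choice of which occurrences are produced by duplication versus derived independently is what encodes satisfiability'') is not what makes the reduction tight.

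The paper's construction is simpler and uniform for both measures. One takes exactly the ABox~\Amc and query~\q of Theorem~\ref{the:np-hard-trees}, collects the selected ABox facts (the $r$-, $c$-, and $T$-facts determined by a candidate assignment) into a single ground conjunction using binary \CONJ, and then performs a \emph{single} \TAUT step with the full tautology $\q(\vec{x})\to\exists\vec{x}.\,\q(\vec{x})$ followed by one \MOPO to produce~\q. The resulting proof is a tree with pairwise distinct labels, so $\msize$ and $\mtree$ coincide and the bound is $n=4m+2k-1$ (with $2m-1+k$ leaves giving $4m+2k-3$ vertices in the binary \CONJ-tree, plus two for \TAUT and \MOPO). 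Satisfiability is encoded exactly as before, by \emph{which} $T$-fact ($T(p_i)$ or $T(\overline{p}_i)$) is selected per variable; there is no per-occurrence duplication choice.

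The point you are missing for the lower bound is not a general ``normal form'' lemma but a targeted observation: \EXISTS alone cannot produce~\q from a ground conjunction, because \EXISTS cannot turn one ground atom into two query atoms with distinct variables, and the tautological clauses $p_i\vee\overline{p_i}$ guarantee that at least one $T$-fact must serve two query atoms $T(x_i^{(p)})$. Hence the last step must be \TAUT${+}$\MOPO, and every leaf in the \CONJ-tree is forced; this immediately gives tightness without needing to analyze arbitrary \TAUT chains or subset-replacement tricks in \MOPO.
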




\section{Conclusion}

\begin{figure}[tb]
  \centering\begin{footnotesize}
    \begin{tikzpicture}[scale=0.8,
    block/.style={
    draw,
    rounded rectangle,
    },
    tbox/.style={
    draw=gray, text=gray,
    rounded rectangle,
    },
    he/.style={rounded corners=10pt,->}]
    \path  (2,-0.3) node[block, draw=blue] (x) {$A(\ex{a})$}
        (10.5,2.3) node[tbox, dashed] (x0) {$A(x)\to r(x,f(x)) \land B(f(x))$}
        (10.5,1.3) node[tbox,dashed] (x1) {$B(x)\to s(x,g(x)) \land A(g(x))$}
        (10.5,0.3) node[tbox,dashed] (x4) {$s(x,y) \land  r(z,x) \to E(x)$}
        (10.5,-0.7) node[tbox,dashed] (x5) {$E(x) \land r(y,x) \to D(y)$}
         (4,1.7) node[tbox] (y0) {$A(x)\to r(x,f(x))$}
         (4,0.6) node[tbox] (y1) {$A(x)\to D(x)$}
              (0,-1.3) node[block] (y2) {$r(\ex{a},f(\ex{a}))$}
            (4,-1.3) node[block] (y3) {$D(\ex{a})$}
           (2,-2.4) node[block] (z3) { $r (\ex{a}, f(\ex{a})) \,\land\,  r (\ex{a}, f(\ex{a})) \,\land\, D(\ex{a})$}
        (2,-3.5) node[block,fill=gray!20] (z) { $\exists x',y.\, r (\ex{a}, y) \,\land\,  r (x', y) \,\land\, D(x')$};
      \draw[he,gray] ($(x0)-(2.67,0.2)$) -- ($(y0)+(2.6,0)$) node {$\blacksquare$} -- (y0);
    \draw[he,gray] ($(x0)-(2.5,0.3)$) -- ($(y1)+(2.3,0)$) node {$\blacksquare$} -- (y1);
    \draw[he,gray] (x1) --($(y1)+(2.3,0)$) -- (y1);
      \draw[he,gray] (x4) --($(y1)+(2.3,0)$) -- (y1);
    \draw[he,gray] ($(x5)-(2,-0.3)$) --($(y1)+(2.3,0)$) -- (y1);
    \draw[he] (x) -| (y2);
    \draw[he] (y0) -| (y2);
    \draw[he] (y1) -- ($(y3)+(0,0.8)$) -- (y3);
    \draw[he] (x) -| (y3);
    \draw[he] ($(y2)+(1,0.1)$) -- ($(y2-|z3)+(0,0.1)$) -- (z3);
    \draw[he] ($(y2)+(1,-0.1)$) -- ($(y2-|z3)+(0,-0.1)$) -- (z3);
    \draw[he] (y3) -| (z3);
    \draw[he] (z3) --(z);
    \end{tikzpicture}
    \end{footnotesize}
    \caption{A Skolemized proof for the example with hidden TBox inferences}\label{fig:ex-skolem-modular}
  \end{figure}
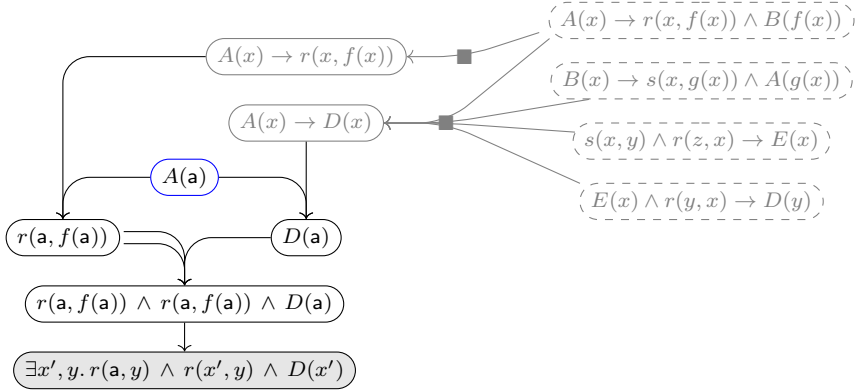
  
We have presented a general framework for generating proofs for answers to
ontology-mediated queries. The central idea is to explain the reasoning steps
that contributed to the answer by referring to a universal model. We have
also shown some initial complexity results, and intend to obtain a more precise
picture in the future.
%
%
An interesting future direction is to investigate derivers that combine TBox
and query entailment rules, \eg \Rsk plus the rules of the ELK
reasoner~\cite{DBLP:journals/jar/KazakovKS14}.
On one extreme end, one could completely hide all TBox reasoning steps, which
could result in a proof like in Fig.~\ref{fig:ex-skolem-modular}, but it would
be interesting to evaluate mixed proofs \wrt the comprehensibility of TBox- \vs
query-based inferences.
For explaining missing answers, we also want to investigate how to find
(optimal)
counter-interpretations or abduction results~\cite{DBLP:conf/ijcai/Koopmann21}.

\paragraph{Acknowledgments}
  This work was supported by the DFG grant 389792660 as part of TRR~248 -- 
  CPEC (\url{https://perspicuous-computing.science}), and QuantLA, GRK 1763
(\url{https://lat.inf.tu-dresden.de/quantla}).

\bibliographystyle{abbrv}
\bibliography{bibs}

\appendix

\newpage
\section{Additional Definitions: Hypergraphs}\label{sec:app-def}

\begin{definition}[Hypergraph]
A \emph{(finite, directed, labeled) hypergraph}~\cite{DBLP:journals/cor/NielsenAP05} is a triple $H=(V,E,\el)$, where
\begin{itemize}
\item $V$ is a finite set of \emph{vertices},
\item $E$ is a set of \emph{hyperedges}~$(S,d)$ with a tuple of \emph{source vertices} $S$
and
\emph{target vertex} $d\in V$, and
\item $\el\colon V\to \mathcal{S}_\L$ is a \emph{labeling function} that assigns sentences to
vertices.
\end{itemize}
\end{definition}
We extend the function~$\ell$ to hyperedges as follows: $\ell(S,d):=\big((\ell(s))_{s\in S},\ell(d)\big)$.

We assume that the \emph{size}~$|\eta|$ of an \L-sentence~$\eta$ is defined in
some way, \eg by the number of symbols in~$\eta$.
The \emph{size} of~$H$, denoted~$|H|$, is measured by the size of the labels of
its hyperedges: $$\card{H}:=\sum_{(S,d)\in E}\card{(S,d)}, \text{ where } \card{(S,d)}:=\card{\el(d)}+\sum_{v\in
S}\card{\el(v)}.$$
A vertex $v\in V$ is called a \emph{leaf} if it has no incoming hyperedges, \ie there is no
$(S,v)\in E$; and~$v$ is a \emph{sink} if it has no outgoing hyperedges, \ie there is no $(S,d)\in
E$ such that $v\in S$. We denote the set of all leafs and the set of all sinks in~$H$ as
$\leaf(H)$ and $\sink(H)$, respectively.

A hypergraph $H'= (V', E', \el')$ is called a \emph{subgraph} of $H= (V, E, \el)$ if $V'\subseteq V$, $E'\subseteq E$ and $\el'=\el|_{V'}$.
In this case, we also say that $H$ \emph{contains} $H'$ and write $H' \subseteq H$.
Given two hypergraphs $H_1=(V_1,E_1,\el_1)$ and $H_2=(V_2,E_2,\el_2)$ \st
$\el_1(v)=\el_2(v)$ for every $v\in V_1\cap V_2$, the \emph{union} of the two hypergraphs is
defined as $H_1 \cup H_2:=$ $(V_1\cup V_2,E_1\cup E_2, \el_1 \cup \el_2)$.

\begin{definition}[Cycle, Tree]
Given a hypergraph $H=(V,E,\el)$ and $s,t\in V$, a \emph{path} $P$ of length $q\geq 0$ in~$H$ from~$s$ to~$t$ is a sequence of vertices and
hyperedges
\[ P=(d_0,i_1,(S_1,d_1),d_1,i_2,(S_2,d_2),\dots, d_{q-1},i_q,(S_q,d_q),d_q), \]
where $d_0=s$, $d_q=t$, and $d_{j-1}$ occurs in~$S_j$ at the $i_j$-th position, for all $j$, $1\le j\le q$. If
there is such a path of length $q > 0$ in~$H$, we say that $t$ is \emph{reachable} from~$s$
in~$H$.
If $t = s$, then $P$ is called a \emph{cycle}.
The hypergraph~$H$ is \emph{acyclic} if it does not contain a cycle.
%
%
The hypergraph~$H$ is \emph{connected} if
every vertex is connected to every other vertex by a series of paths and reverse paths.

A hypergraph $H=(V,E,\el)$ is called a \emph{tree} with \emph{root} $t\in V$ if $t$ is reachable
from every vertex $v\in V\setminus\{t\}$ by exactly one path. In particular, the root is the only
sink in a tree, and all trees are acyclic and connected.
\end{definition}

\begin{definition}[Homomorphism]\label{def:homomorphism}
Let $H=(V,E,\el)$, $H'=(V',E',\el')$ be two hypergraphs.  A \emph{homomorphism} from $H$ to 
$H'$, denoted $h\colon H\rightarrow H'$, is a mapping $h\colon V\to V'$ s.t.\ for 
all~$(S,d)\in E$, one has $h(S,d):=\big((h(v))_{v\in S},\,h(d)\big)\in E'$ and, for all $v\in V$, it 
holds that~$\el'(h(v))=\el(v)$.
Such an~$h$ is an \emph{isomorphism} if it is a bijection, and its inverse, $h^-\colon H' \to H$,
is also a homomorphism.
\end{definition}

\begin{definition}[Hypergraph Unraveling]
  The \emph{unraveling} of an acyclic hypergraph $H=(V,E,\ell)$ at a vertex $v\in V$ is the tree 
  $H_T=(V_T,E_T,\ell_T)$, where $V_T$ consists of all paths in~$H$ that end in~$v$, $E_T$ 
  contains all hyperedges $((P_1,\dots,P_n),P)$ where each~$P_i$ is of the 
  form $\big(d_i,i,((d_1,\dots,d_n),d)\big)\cdot P$, and $\ell_T(P)$ is the label of the starting 
  vertex of~$P$ in~$H$.
  Moreover, the mapping $h_T\colon V_T\to V$ that maps each path to its starting vertex (and in 
  particular~$h_T(v)=v$) is a homomorphism from~$H_T$ to~$H$.
\end{definition}



\begin{definition}[Derivation Structure]\label{def:derivation-structure}
  A \emph{derivation structure} $\ds = (V, E, \el)$ over a theory~$\mathcal{U}$ is a directed, labeled hypergraph that is
  \begin{itemize}
  \item \emph{grounded}, \ie
  every leaf $v$ in~\ds is labeled by $\el(v)\in\mathcal{U}$; and
  \item \emph{sound}, \ie for all hyperedges $(S,d)\in E$, the entailment
$\{\el(s)\mid s\in S\}\models\el(d)$ holds.
\end{itemize}
\end{definition}


\begin{definition}[Proof]\label{def:proof}
  Given a sentence~$\eta$ and a theory~$\mathcal{U}$, a \emph{proof of $\mathcal{U}\models\eta$} is a 
  derivation structure $\p = (V, E,\el)$ over~$\mathcal{U}$ such that
  \begin{itemize}
    \item\label{item1:def-proof-non-redundancy} \p contains exactly one
        sink~$v_\eta\in V$, which is labeled by~$\eta$,
    \item \p is acyclic, and
    \item every vertex has at most one incoming hyperedge, \ie there exist no two hyperedges $(S_1,v),(S_2,v)\in E$ with $S_1\neq S_2$.
  \end{itemize}
A \emph{tree proof} is a proof that is a tree.
A \emph{subproof} $S$ of a hypergraph~$H$ is a subgraph of~$H$ that is a proof 
with $\leaf(S)\subseteq\leaf(H)$.
\end{definition}

The \emph{tree size} $\mtree(\p)$ of a proof~\p is defined recursively as follows:
\begin{align*}
  \mtree(v)&:=1 &&\text{for every leaf }v, \\
  \mtree(d)&:=1+\sum_{v\in S} \mtree(v), &&\text{for every }(S,d)\in E, \\
  \mtree(\p)&:=\mtree(v), &&\text{for the sink vertex }v.
\end{align*}
This recursively counts the vertices in subproofs and is equal to the size of a tree unraveling of~\p~\cite{ABB+-CADE21}.


\section{Proof Details}
\subsection{\nameref{ssec:data-complexity}}

\TheACzero*
\begin{proof}
We fix a set $\NC^n$ of $\leq n$ individual names, and collect in
$\mathfrak{A}$ all of the (constantly many) ABoxes~$\Amc$ using only names from
$\sig(\Tmc)\cup\NC^n$ for which $\Rsk(\Tmc^s\cup\Amc,\q)$ contains
a proof~\p with $\m(\p)\leq n$.
The latter can be done, \eg by breadth-first search for proofs up to (tree/domain) size~$n$; for domain size, observe that, for a given set of ground atoms~$S$, the number of possible inference steps that do not produce any new Skolem terms is bounded by a function that depends only on~$S$ and~\Tmc.
We can identify every such ABox $\Amc$ with a
CQ $q_\Amc$ obtained by replacing all individual names 
by
existentially quantified variables. We now have that, for any ABox~$\Amc$, there
exists an $\Amc'\in\mathfrak{A}$ with $\Amc\models q_{\Amc'}$ iff there
exists a proof in $\Rsk(\Tmc^s\cup\Amc,\q)$ with $\m(\p)\le n$.
Consequently, we can reduce our decision problem to deciding
whether~$\Amc$ entails the (fixed) \emph{union of conjunctive queries} (UCQ)
\[
  \bigvee_{\Amc\in\mathfrak{A}}q_\Amc.
\]
Deciding UCQ entailment without a TBox is possible in \ACzero in data complexity~\cite{AbHV-95}.
%
\end{proof}

\TheACzeroRewritable*
\begin{proof}
Let $Q$ be a UCQ that is a rewriting of $\q$ over~\Tmc, \ie such that $\Tmc\cup\Amc\models\q$ is equivalent to $\Amc\models Q$ for all ABoxes~\Amc.
For every $q'\in Q$, we
determine the minimal (tree/domain) size $n_{q'}$ for $\Tmc\cup\Amc_{q'}\models
\q$, where $\Amc_{q'}$ is obtained from $q'$ by replacing every variable
by a fresh individual name.
These ABoxes represent all possibilities of an ABox entailing $\q$ w.r.t.~\Tmc (modulo isomorphism), and hence they can be used as (a fixed number of) representatives in the search for small proofs of the entailment.
The computation of~$n_{q'}$ does not depend on the input data, and hence can be done offline via bounded search in the derivation structure.
Let $n_\textit{max}$ be the maximum of the values~$n_{q'}$. To
every $n\leq n_\textit{max}$, we assign the UCQ
 \[
  Q_n=\bigvee\{q'\in Q\mid n_{q'}\leq n\}
 \]
 Given $\Amc$ and $n$, we can now decide whether $\Rsk(\Tmc^s\cup\Amc,
\q)$ contains a proof~\p with $\m(\p)\leq n$ by 1) computing the UCQ $Q_{n'}$
for $n'=\min(n,n_\textit{max})$, and 2) checking whether $\Amc\models
Q_{n'}$.
2) is the standard query answering problem, which has \ACzero
data complexity~\cite{AbHV-95}. To see that the combined task 1)+2) can be done in \ACzero, we
note that $n_\textit{max}$ does not depend on the data, so that we only need to
process the least $\log_{n_\textit{max}}$ bits of~$n$ to determine~$n'$, which
can be done by a circuit of constant depth. 
\end{proof}

\subsection{\nameref{sec:results}}
\LemSizeLowerLight*
\begin{proof}
    For \DLLiteR, this is trivial.
    For \EL, we define 
    \begin{align*}
        \Tmc_{n,\EL}\ =\ \{&&  A& \sqsubseteq A_1,\ \ A_n\sqsubseteq B_n,\ \ B_1\sqsubseteq B\ \}\\
        && A_i &\sqsubseteq\exists r.A_{i+1}\sqcap\exists s.A_{i+1},\\
        && \exists r.B_{i+1}\sqcap\exists s.B_{i+1}&\sqsubseteq B_i \quad\quad\quad \mid\quad  0<i<n\  \}.       
    \end{align*}
    The fragment of the universal model of $\Tmc_{n,\EL}$ needed to derive $B(a)$ corresponds to a binary tree of depth $n$, and is thus exponentially large. 
    This also gives a lower bound for the (tree) proof size.
    Note that $\Tmc_{n,\EL}$ is not in normal form (\cf
Table~\ref{tab:normal-form}), but it can be transformed into normal form without
changing its size or the size of the resulting proofs more than polynomially.

For the second claim, we define
 \[
  \Amc_n=\{ A(a_0), r(a_1,a_0), s(a_1,a_0), \ldots, r(a_n, a_{n-1}),
s(a_n,a_{n-1})\},
 \]
 where $a_n=a$, and $\Tmc=\{\exists r.A\sqcap\exists s.A\sqsubseteq A\}$.
Clearly, $\Tmc\cup\Amc_n\models A(a)$, for which one has to prove each of
$A(a_i)$, $i\in\{0,\ldots,n\}$. Moreover, this relies twice on
$A(a_{i-1})$, which means that the tree unravelling of the proof will be
of size exponential in~$n$.
\end{proof}

\LemSizeUpperDLLiteEL*
\begin{proof}
For (domain) size, 
we can bound the number
of relevant Skolem terms in~\Rsk by considering only
the part of the minimal Herbrand model~$H$ that is necessary to satisfy the query~$\q$.
For example, in logics with the \emph{polynomial witness 
property}~\cite{DBLP:journals/ai/GottlobKKPSZ14}, including \DLLiteR, we know that any 
query that is entailed is already satisfied after polynomially many chase steps used to 
construct~$H$.
%
%

%
For tree size,
 \gCONJ and \gEXISTS only need to be applied once, at the very end, to produce
the query $\q$. For \DLLiteR, the remaining rule \gMOPO is such that it
always has one premise that is a CQ. Consequently, we can always construct a
proof that is composed of $\lvert\q\rvert$ linear proofs (one for each
atom), which are then connected using \gCONJ and produce the conclusion with
\gEXISTS.
As argued above, we can assume that the nesting depth
of Skolem terms is polynomially bounded by $\lvert\Tmc\rvert$, which means the
number of labels on each path is polynomially bounded by $\lvert\Tmc\rvert$ as
well.
%
%
Additionally, we can always simplify any
proof in which the same label occurs twice along some path, which means that
this polynomial bound transfers also to the depth of our proof. We obtain that
we can always find a proof of polynomial tree size.

For \EL, our proof is structured as follows: 1)~we define a
compressed derivation structure of size polynomial in $\Tmc\cup\Amc$ and $\q$,
similar as in the proof for Lemma~\ref{th:ptime-combined}, 2)~we show that proofs in this structure can be translated into proofs in $\Rsk(\Tmc^s\cup\Amc,\q)$ with the same tree size, and vice versa, and 3)~we conclude that the nesting depth of Skolem terms in such proofs can be bounded polynomially and therefore the proof size is at most exponential.

%
We first consider only instance queries $\q=A(a)$.
We replace every
Skolem term $f(t)$ by a fresh individual name $c_f$ (we ignore what is nested
under the Skolem term). The resulting theory $\Tmc^c$ has no nested terms
anymore. Our compressed derivation structure
$\mathcal{D}$ is now obtained from
$\Rsk(\Tmc^c\cup\Amc,A(a))$ by dropping applications of \gEXISTS and
\gCONJ, which are not needed since we want to derive a CQ with only one
atom $A(a)$.
$\mathcal{D}$ contains only polynomially many nodes: one for each
TBox axiom, and one for each term $A(c)$, $r(c,d)$ with $c$ and $d$ taken from
our polynomially bounded set of individual names.

Now we show how proofs~\p in the compressed derivation structure $\mathcal{D}$ can be
translated into proofs in $\Rsk(\Tmc\cup\Amc,\q)$ of the same tree size and
vice versa.
We recursively translate
$\p$ into a proof in $\Rsk(\Tmc^s\cup\Amc, A(a))$ by changing the terms
in the labels of the proof nodes, guided by the role atoms. Specifically, in
an atom of the form $r(t,c_f)$, we know that $c_f$ has to be replaced by $f(t)$.
Moreover, any inference that had $r(t,c_f)$ as premise is now adapted such that in
the other premises, $c_f$ is also replaced by $f(t)$, as well as in all
relevant predecessors of those premises. We repeat this step inductively until
all fresh individual names are replaced by Skolem terms again.
This will happen
because 1) along every branch in the proof, any fresh individual name will have
to be eliminated eventually, since the conclusion does not contain fresh
individual names, 2)  on the left-hand side of rules that correspond to \EL
axioms, if a variable in an atom $A(x)$ occurs together with another
variable, it must do so in an atom of the form $r(x,y)$. It follows that on
every branch of the proof, atoms of the form $A(c_f)$ either produce
other unary atoms with the same individual name, or eventually occur together
with an atom of the form $r(t,c_f)$ in an inference, and in both cases our
transformation will replace $c_f$ by $f(t)$. It remains to replace the TBox
rules in the proof by the original Skolemized version. The resulting tree-shaped proof is
still sound, now a proof in $\Rsk(\Tmc^s\cup\Amc, A(a))$, and has the
same tree size as~$\p$.
This transformation can also be easily done in the
other direction (simply replacing any term $f(t)$ by $c_f$).

Now consider the general case where $\q$ does not have to be an
instance query. We note that with a little modification, we can also also
transform proofs in which the final conclusion contains the fresh individual
names $c_f$: the translation procedure will then not succeed in replacing every
fresh individual name based on the role atoms in the proof, but we can then
check using the inferences introducing~$c_f$ which term would be appropriate, obtaining a proper proof in $\Rsk(\Tmc^s\cup\Amc,A(a))$.

It follows from our construction that one can always find a proof whose depth
is polynomially bounded, since the compressed derivation structure is only of
polynomial size. Consequently, the nesting depth of Skolem terms must be always
polynomially bounded as well.
We obtain the desired exponential bounds on (domain/tree) size for CQ proofs
\wrt \EL TBoxes.
Moreover, since the number of function symbols depends only on the TBox, we
obtain that the number of Skolem terms is polynomial in the number of
individual names occurring in $\Amc$, and thus polynomial in $\Amc$. Since this
also bounds the number of atoms that can occur in a proof, we also obtain
bound on the proof size that is polynomial \wrt \Amc.
\end{proof}

\ThmDLLiteLower*
\begin{proof}
We reduce the problem of query answering to the given problems. Specifically,
let $\Tmc\cup\Amc$ be a DL-Lite KB, $\q$ a query, 
and suppose we want to determine whether
$\Tmc\cup\Amc\models \q$. We know by Lemma~\ref{lem:size-upper-dl-lite-el}
that,
if $\Tmc\cup\Amc\models \q$, then there exists a proof for this
in $\Rsk(\Tmc\cup\Amc, \q)$
that
is of (tree) size at most $n:=p(\lvert{\Tmc}\rvert,\lvert\q\rvert)$, where $p$ is some
polynomial.

We now
construct a new KB $\Tmc_0\cup\Amc_0$ such that $\Tmc_0\cup\Amc_0\models
\q$, but only with a proof of (tree) size $>n$.
$\Amc_0$ is obtained from the atoms of~$\q$ by replacing every quantified variable by a fresh individual name, and each predicate~$P$ by~$P_0$.
$\Tmc_0$ contains for
every
predicate $P$ occurring in $\q$ and every $i\in\{0,\ldots,n\}$ the CI
$P_{i}\sqsubseteq P_{i+1}$, as well as $P_n\sqsubseteq P$.
Clearly,
$\Tmc_0\cup\Amc_0\models \q$, and every proof for this corresponds to a
tree of depth $n+1$, and is thus larger than $n$. Moreover, $\Tmc_0$,
$\Amc_0$, and~$n$ are all of polynomial size in the size of the input to the query answering problem.
Now set $\Tmc_1=\Tmc_0\cup\Tmc$, $\Amc_1=\Amc_0\cup\Amc$. We have
$\Tmc_1\cup\Amc_1\models \q$; however, a proof of (tree) size~$\leq n$ exists in 
$\Rsk(\Tmc_1\cup\Amc_1,\q)$ iff $\Tmc\cup\Amc\models\q$. 
\end{proof}

\LemPTimeCombined*
\begin{proof}
We construct a compressed version of $\Rsk(\Tmc\cup\Amc,\q)$ of
polynomial size.
  We introduce for every role~$R$ the individual name $b_{\exists R}$.
 Our compressed derivation structure is defined inductively as follows, where
for a role name~$P$, we identify $P^-(a,b)$ with $P(b,a)$ and $(P^-)^-$ with~$P$.
 \begin{itemize}
  \item every axiom $\alpha\in\Tmc\cup\Amc$ has a node $v_\alpha$ with
$\el(v_\alpha)=\alpha$,
  \item for nodes $v_1$, $v_2$ with $\el(v_1)=A(a)$ and $\el(v_2)=A\sqsubseteq
B$ there is an edge $(\{v_1,v_2\},v_3)$, where $\el(v_3)=B(a)$.
  \item for nodes $v_1$, $v_2$ with $\el(v_1)=P(a,b)$ and
$\el(v_2)=P\sqsubseteq Q$, there is an edge $(\{v_1,v_2\},v_3)$ with
$\el(v_3)=Q(a,b)$,
  \item for nodes $v_1$, $v_2$ with $\el(v_1)=A(a)$ and
$\el(v_2)=A\sqsubseteq\exists P$, there is an edge $(\{v_1,v_2\},v_3)$
with $\el(v_3)=P(a,b_{\exists P^-})$,
  \item for nodes $v_1$, $v_2$ with $\el(v_1)=P(a,b)$ and
$\el(v_2)=\exists P\sqsubseteq A$, there is an edge $(\{v_1,v_2\},v_3)$
with $\el(v_3)=A(a)$,
  \item for nodes $v_1$, $v_2$ with $\el(v_1)=P(a,b)$ and
$\el(v_2)=\exists P\sqsubseteq\exists Q$, there is an edge
$(\{v_1,v_2\},v_3)$
with $\el(v_3)=Q(a,b_{\exists Q^-})$.
 \end{itemize}
 Due to the conclusions with the fresh individual names, the inferences in this
compressed derivation structure are not sound, so that it is not really a
derivation structure. But because its size is polynomial, we can use the \PTime
procedure from~\cite[Lemma~11]{ABB+-CADE21} to compute for every node $v$ a \enquote{proof} of
minimal size. To use these proofs to construct a tree proof in $\Rsk(\Tmc\cup\Amc,\q)$,
we still need to match the variables in $\q$ to the constants in the
derivation structure.

For every pair $\tup{t_1,t_2}$ of terms occurring in $\q$ together in
an atom, and every possible assignment $(t_1\mapsto a_1, t_2\mapsto a_2)$ of
these terms to individual names from the compressed derivation structure, we
assign a cost $\gamma(t_1\mapsto a_1, t_2\mapsto a_2)$ that is the sum of the
minimal proof sizes for every atom in $\q$ that contains only $t_1$
and $t_2$, with these terms replaced using the assignment. We build a labeled
graph, the \emph{cost graph}, with every node a mapping from one term in
$\q$ to one constant in our compressed derivation structure, and every
edge between two nodes $(t_1\mapsto a_1)$ and $(t_2\mapsto a_2)$ labeled with
the cost $\gamma(t_1\mapsto a_1, t_2\mapsto a_2)$ (no edge if there is no edge
between $t_1$ and $t_2$ in the Gaifman graph). Every edge in the cost graph
corresponds to an edge in the Gaifman graph of $\q$, but the same edge
in the Gaifman graph can be represented by several edges in the cost graph.
We can thus transform the cost graph into a directed acyclic graph, making
sure that for every edge $(t_1\mapsto a_1, t_2\mapsto a_2)$, the edge
$(t_1,t_2)$ points from the root towards the leafs of the Gaifman graph, where
we choose an arbitrary node as the root.

  We now eliminate assignments from the cost graph starting from the leafs:
 \begin{itemize}
  \item Consider a node $(t_1\mapsto a_1)$ and for some term $t_2$,
  all outgoing edges of the form $(t_1\mapsto a_1, t_2\mapsto a_2)$. Once all
the nodes $t_2\mapsto a_2$ have already been visited by the algorithm, assign to
each edge $(t_1\mapsto a_1,t_2\mapsto a_2)$ a combined cost obtained by adding
to $\gamma(t_1\mapsto a_1,t_2\mapsto a_2)$ the costs of every edge reachable
from $(t_2\mapsto a_2)$, and remove all edges for which the resulting value is
not minimal. In case several edges have a minimal value assigned, choose an
arbitrary edge to remove, so that we obtain a unique edge $(t_1\mapsto a_1,
t_2\mapsto a_2)$ for $t_1\mapsto a_1$ and $t_2$.

 \item If an assignment $(t\mapsto a)$ has no incoming edges, remove it from
the cost graph.
 \end{itemize}
 The algorithm processes each edge in the polynomially sized cost graph at most
once, and thus terminates after a polynomial number of steps with an assignment
of variables to constants appearing in the compressed derivation structure.
We can use this assignment to first construct a minimal proof for $\q$ over the
compressed derivation structure, where every atom in $\q$ has its own
independent minimal proof. Note that we cannot obtain a smaller proof in
$\q$, since every atom has a minimal proof assigned, and our elimination
procedure made sure that there cannot be a different choice of assignments of
terms in $\q$ to constants that would lead to smaller proofs anywhere else.

Finally, we substitute every constant with the
corresponding Skolem term in $\Rsk(\Tmc\cup\Amc,\q)$, starting from the original
individual names and following the structure of the proof, to obtain the desired
proof of minimal tree size.
For example, we would replace an atom $P(a,b_{\exists P^-})$ that is derived in
the proof from $A(a)$ and $A\sqsubseteq\exists P$ by $P(a,f(a))$, where $f$ is
the Skolem function for the existentially quantified variable in
$A\sqsubseteq\exists P$, and replace $b_{\exists P^-}$ by $f(a)$ also in
subsequent proof steps, provided it refers to a successor of $a$. More
generally, we replace $P(t,b_{\exists P^-})$ , where $t$ is already a Skolem
term, by $P(t,f(t))$, whenever it is derived from $A(t)$ and
$A\sqsubseteq\exists P$.
\end{proof}

\TheNPHardModified*
\begin{proof}
 We first reduce SAT to $\OP\sk(\Lmc,\msize)$. Let $c_1,\ldots,c_m$ be a set of
clauses over propositional variables $p_1,\ldots,p_k$. Each clause $c_i$ is a
disjunction of literals of the form $p_j$ or $\overline{p_j}$, where the latter
denotes the negation of $p_j$. In the following, we assume clauses to be
represented as sets of literals. \Wlog we assume that for every variable $p_i$,
we also have the clause $p_i\vee\overline{p_i}$. For every variable $p_i$,
we add the facts $T(p_i)$ and $T(\overline{p}_i)$ to the ABox~\Amc. 
 For every clause $c_i$ and every literal $l_j\in c_i$, we add a fact
$c(c_i,l_j)$.
Furthermore, if $i<m$, we add the fact $r(c_i,c_{i+1})$. The
conjunctive query $\q$ is Boolean and contains for every clause $c_i$ the atoms
 $
  c(x_i^{(c)}, x_i^{(p)}), T(x_i^{(p)})
 $
 and moreover, if $i<m$, $r(x_i^{(c)}, x_{i+1}^{(c)})$,
 so that the query is tree-shaped.
We have $\Amc\models \q$ independently of whether the SAT problem has a
solution or not.

 We set our bound as $n:=2+m+(m-1)+k$, which distributes as follows in a proof 
 if the set of clauses is satisfiable. Assume that $a\colon\{p_1,\ldots,p_k\}\rightarrow\{0,1\}$ is
 a satisfying assignment for our set of clauses. 
 \begin{itemize}
  \item 2 vertices are needed for the conclusions of \gCONJ and \gEXISTS,
  \item $m$ vertices contain, for each clause $c_i$, the atom $c(c_i, l)$, where $l\in c_i$ 
  is made true by the assignment $a$,
  \item $m-1$ vertices contain the atom $r(c_i,c_{i+1})$ for each $i\in\{1,\ldots
m-1\}$,
  \item for each variable $p_i$, depending on whether $a(p_i)=1$ or $a(p_i)=0$, 
    we use either $T(p_i)$ or $T(\overline{p}_i)$. This needs another $k$ vertices. 
 \end{itemize}
 There cannot be a smaller proof since every atom in the
query needs to be matched by some ABox fact. Correspondingly, if the 
SAT problem has a solution, we can construct a proof of the desired size, and 
if there is a proof of the desired size, we can extract a solution for the SAT problem 
from it. Because the construction did not use a TBox, it follows that
$\OP\sk(\Lmc,\msize)$ is \NP-hard for any logic DL.

For domain size, we can use the bound $n:=m+k$ (covering the constants~$c_i$, $1\le i\le m$, and either~$p_i$ or~$\overline{p}_i$ for $1\le i\le k$) to achieve the same result.
%
\end{proof}


\TheELTreeUpperBounds*
\begin{proof}
  We consider again the compressed derivation structure~$\mathcal{D}$ from the proof of Lemma~\ref{lem:size-upper-dl-lite-el}.
  Recall that, for instance queries~$A(x)$, a proof of tree size $\le n$ exists in $\Rsk(\Tmc^s\cup\Amc,A(a))$ iff such a proof exists in~$\mathcal{D}$.
  By~\cite{LPAR23:Finding_Small_Proofs_for}, one can construct a proof of minimal tree size in~$\mathcal{D}$ in deterministic polynomial time.

Now consider the general case where $\q$ does not have to be an
instance query. In this case, we additionally may need to replace fresh individual names~$c_f$ in the conclusion of a proof in~$\mathcal{D}$ by appropriate Skolem terms, by checking the inferences that introduced~$c_f$.
We use this in a non-deterministic decision procedure to check whether there
exists a proof of tree size at most $k$. Specifically, we guess an assignment
of individual names to each variable in $\q$ (including both names
from $\Amc$ and the fresh ones from the compressed derivation structure). We now
use the procedure from~\cite{LPAR23:Finding_Small_Proofs_for} to compute a proof of minimal tree size for
every atom within the given tree size bound, under the guessed variable
assignment, and connect those sub-proofs into the final proof candidate. If the
resulting proof hast a tree size~$\le n$, we accept. To obtain
tractable data complexity, we adapt this procedure so that instead of guessing
the assignment, we iterate over all the possibilities, which are polynomially many in
the size of data and the TBox if the query is fixed.

For the $\NP$ lower bound, we recall from Lemma~\ref{lem:size-upper-dl-lite-el}
that proof size is bounded
exponentially, which means that the size can always
be represented using polynomially many bits. We can use this in a construction
similar as for Theorem~\ref{thm:dllite-lower} to reduce the Boolean query
entailment problem for $\EL$ to deciding the existence of a proof of bounded
size. Let $\Tmc\cup\Amc$ be an $\EL$ KB, and $\q$ Boolean query. W.l.o.g., we
assume that $\q$ contains at least one unary atom.
Let furthermore $2^n$ be a bound on the tree size of a proof for
$\Tmc\cup\Amc\models\q$ (\cf Lemma~\ref{lem:size-upper-dl-lite-el}).
We construct
$\Tmc'\cup\Amc'$ and $\q'$ s.t.\ $\Tmc'\cup\Amc'\models\q$, but only
with a proof of size less than $2^n$ if also $\Tmc\cup\Amc\models\q$.
For every variable $x$
in $\q$, we add fresh individual names $a_{x,0}$, $\ldots$, $a_{x,n}$ and $a_x$.
$\Amc'$ contains all facts of~$\Amc$,
and in addition for every atom $A(x)\in\q$ a sequence
of facts $A_0(a_{x,0})$, $s(a_{x,1},a_{x,0})$, $t(a_{x,1},a_{x,0})$,
$\ldots$, $s(a_{x,n},a_{x,n-1})$, $t(a_{x,n},a_{x,n-1})$, together with
$s_f(a_{x}, a_{x,n})$, $t_f(a_x, a_{x,n})$, with $s$, $s_f$, $t$, $t_f$ and
$A_0$ fresh. $\Tmc'$ contains all axioms of~$\Tmc$, plus additionally the
axioms $\exists s.A_0\sqcap\exists t.A_0\sqsubseteq A_0$, $\exists
s_f.A_0\sqcap\exists t_f.A_0\sqsubseteq A$.
This ensures that we can infer $A(a_{x,n})$ for every $A(x)\in\q$, but only
with a
tree proof of size $2^n$. For every binary atom $r(x,y)\in\q$, we add
$r(a_{x}, a_{y})$. We have $\Tmc'\cup\Amc'\models\q$, but if
$\Tmc\cup\Amc\not\models\q$, then every proof will be of tree size larger
than $2^n$. Since $2^n$ can be represented using $n$~bits,
our decision problem is at least as hard as Boolean query entailment, and thus
$\NP$-hard~\cite{DBLP:conf/dlog/Rosati07}.

To show \PTime-hardness for IQs, we observe that the same reduction is
possible in \LogSpace since $n$ can be represented using
logarithmically many bits, and computed using a working tape that is
logarithmically bounded (for instance by overapproximating it based on the size
of $\Tmc\cup\Amc$). Consequently, $2^n$, $\Amc'$, and $\Tmc'$ can be
computed using a Turing machine with a working tape that is bounded
logarithmically.
The result then follows from \PTime-hardness of instance query answering
in~\EL~\cite{DBLP:journals/ai/CalvaneseGLLR13}.
Regarding data complexity, we observe that $\Tmc'$ is constructed independent
of $\Amc$ and $n$, so that the same argument applies here.
%
\end{proof}


\LemSizeLower*
\begin{proof}

    For $\HornALC$, we use a similar technique as for~\EL, but this time using a binary counter enforcing a binary tree of depth $2^n$. We use concepts $A_1$, $\overline{A}_1$, $\ldots$, $A_n$, $\overline{A}_n$ as bits for the binary counter. 
    {\allowdisplaybreaks
    \begin{align*}
        \Tmc_{n,\HornALC}\ =\ \{ &&
        A &\sqsubseteq \overline{A}_1\sqcap\ldots\sqcap \overline{A}_n, \\ 
        &&\exists r.A_i &\sqsubseteq \forall r.A_i\sqcap\exists s.A_i\sqcap\forall s.A_i, \\ 
        &&\exists r.\overline{A}_i &\sqsubseteq \forall r.\overline{A}_i\sqcap\exists s.\overline{A}_i\sqcap\forall s.\overline{A}_i, \\ 
        &&\overline{A}_i \sqcap A_{i-1}\sqcap\ldots\sqcap A_0 &\sqsubseteq\exists r.A_i, \\
        &&A_i  \sqcap A_{i-1}\sqcap\ldots\sqcap A_0&\sqsubseteq\exists r.\overline{A}_i, \\
        &&\overline{A}_i \sqcap (\overline{A}_{i-1}\sqcup\ldots\sqcup\overline{A}_0 ) &\sqsubseteq \exists r.\overline{A}_i, \\
        &&A_i \sqcap (\overline{A}_{i-1}\sqcup\ldots\sqcup\overline{A}_0 ) &\sqsubseteq \exists r.A_i, \\
        &&A_1 \sqcap\ldots\sqcap A_n &\sqsubseteq B, \\ 
        &&\exists r.B \sqcap\exists s.B &\sqsubseteq B \quad\quad \mid \quad 0< i\leq n\ \}
    \end{align*}
    }
    Again, this TBox is not in normal form, but it can be normalized with only a polynomial increase in size.
    %
\end{proof}


For the next results, we make w.l.o.g. an assumption on how \gEQUAL inferences
are applied. In particular, we observe that the only equality atoms that we can
derive are due to \gMOPO inferences with a rule of shape (vi). Consequently, all
equalities we get are of the form $t=a$. In the following, we assume that in
all proofs, when \gEQUAL is applied with such an equality, we replace $t$ by
$a$ and not vice versa. This is without loss of generality as we can always
transform the proof by replacing those two terms again. Since no rule has a
Skolem function on the left, this does not change the applicability of the
rules, so that the resulting graph is still a proof and of the same size.
Moreover, we do not miss any smaller proofs in this way.

Using this assumption, we observe that binary atoms that can occur in a proof
can only be of a restricted shape: namely, $R(t,f(t))$ or $R(t,a)$, where $t$
is any ground term and $R$ can be an inverse role. This is because binary atoms
are only produced by \gMOPO inferences with a rule of shape (iv) or (vii),
as well as by \gEQUAL inferences.

\LemSizeUpper*
\begin{proof}
    \newcommand{\atoms}{\textsf{At}}

    Let $\p$ be some proof for
$\Tmc\cup\Amc\models\q$. It suffices to find a double exponential bound on the
domain size, because this also bounds the set of possible labels with unary
atoms double exponentially, and thus also the proof size.
    Let $\dom(\p)$ be the 
    Skolem terms used in $\p$, and $\atoms(\p)$ be the set of atoms used in $\p$. To every $t\in\dom(\p)$, we assign a set $\atoms(t,\p)$ of non ground atoms, where $\_$ is a special variable. 
    \begin{align*}
        \atoms(t,\p) ={} &\{A(\_)\mid A(t)\in\atoms(\p)\} \cup{} \\
        & \{R(\_,f(\_))\mid R(t,f(t))\in\atoms(\p)\} \cup{} \\
        & \{A(f(\_))\mid A(f(t))\in\atoms(\p)\} \\
        & \{R(\_,a) \mid R(t,a)\in\atoms(\p), a\in\Inds{\Tmc}\}
    \end{align*}
    Where $R$ can be a role name or an inverse role.
    New Skolem terms are only introduced via the Skolemized version of (iv),
    which produces atoms of the form $R(t,f(t))$ and $B(f(t))$. The only way to
create a role atom of a different shape is using \gEQUAL with an
equality atom
$(f(t)=t')$. Those atoms are only produced by Rule~(vi), which means that $t'$
must then be of the form $a\in\Inds{\Tmc}$.
As a result, we ensure that $\atoms(t,\p)$ refers to all atoms of the form
$r(t,t')$ and for
each such atom, all unary atoms $A(t')$ corresponding to $t'$ provided $t'$ is
not an individual name.

Assume $\dom(\p)$
contains two terms $t_1$, $t_2$ such that $\atoms(t_1,\p)=\atoms(t_2,\p)$, and
$t_1$
occurs nested within $t_2$. Inspection of the possible shapes~(i)--(vii) of
TBox rules indicates that any inference step that can be performed
using any of the atoms in $\atoms(t_2,\p)$ can also be performed using
    any of the atoms in $\atoms(t_1,\p)$. Consequently, we can simplify the
proof by
    1) removing all vertices labeled with atoms that contain the term $t_1$,
    2) adding all vertices containing the term $t_2$, as well as the edges
between them, but changing their label by
replacing $t_2$ by $t_1$, and 3) adjusting the inference steps that relied on
removed vertices, so that now they use one of the newly added vertices with the
same labels. Note that since $\atoms(t_1,\p)=\atoms(t_2,\p)$, this allows us to
keep all inference steps relying on the term $t_1$,  so that the resulting graph
is still a proof for $\Tmc\cup\Amc\models\q$. By applying this operation
repeatedly for any pair of terms $t_1$, $t_2$ such that $t_1$ occurs in $t_2$
and $\atoms(t_1,\p)=\atoms(t_2,\p)$, we ensure that no such two terms occur in
the final proof anymore. There can be at most exponentially many values for
$\atoms(t,\p)$, which means that in the final proof, the nesting depth of every
Skolem term is exponentially bounded, which bounds the overall number of Skolem
terms to double exponential \wrt the size of $\Tmc\cup\Amc$.
Since the number
of Skolem functions, as well as the elements in $\atoms(t,\p)$ are fully
determined by $\Tmc$, we furthermore obtain that the number of Skolem terms per
individual name in $\Amc$ depends only on $\Tmc$, so that the size of the
domain is polynomial in the size of the ABox.
    \qed
\end{proof}


\ThHornSROIQ*
\begin{proof}
  The claims follow straightforwardly because we only need to guess
a proof of size $n$ --- which takes non-deterministic exponential time. Since there are exponentially many facts
 over $\sig(\Tmc^s\cup\Amc)$ with a domain bounded by $n$, for $\mdomain$ we can use a similar technique.
For tree size,
it suffices to observe that the procedure described in \cite{ABB+-CADE21} to
decide the existence of proofs of bounded tree size (Theorem~17) runs in space
$p\cdot \log_2 n$, where $p$ is the maximal number of premises in any inference.
Correspondingly, this procedure would also run in \PSpace.
\end{proof}

Before we prove Theorem~\ref{th:HornSROIQ-IQ}, we show an intermediate result
for a generalization of proofs. A \emph{multi-proof} is defined like a proof,
however
\begin{enumerate}
 \item it can have more than one sink,
 \item its leafs do not have to come from the KB $\Kmc$, and
 \item it must still be connected.
\end{enumerate}

Intuitively, such a multi-proof can be a part
of a larger proof. We have the following result regarding the atoms that can
occur in a multi-proof for $\HornALCHOI$-KBs.

\begin{lemma}\label{lem:multi-proofs}
 Let $\Kmc$ be a $\HornALCHOI$ KB, \q a query and $\p$ a multi-proof in
$\Rsk(\Kmc,\q)$ such that
\begin{enumerate}
 \item $\p$ contains at least one inference,
 \item every leaf label that is an atom uses terms of nesting depth at most~$1$, and
 \item apart from the leafs, every node label that is an atom uses a term of
nesting depth $\geq 2$.
\end{enumerate}
Then, there exists a term of the form $f(a)$ such that every leaf label that is an atom uses the term $f(a)$, and every node label that is an atom
contains $f(a)$ nested within another Skolem term.
\end{lemma}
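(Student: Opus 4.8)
The plan is to exploit the strong syntactic restrictions on the atoms that \Rsk can derive over a \HornALCHOI KB, established in the paragraphs preceding Lemma~\ref{lem:size-upper}: every binary atom has the form $R(t,f(t))$ or $R(t,a)$ (with $R$ possibly inverse and $a$ a constant), and every \gEQUAL step only replaces a Skolem term by a constant. First I would record the quantitative consequence that each inference changes the maximal nesting depth of its terms by at most $1$, and that the depth can only \emph{increase} through a \gMOPO step with a rule of shape~(iv), which turns a premise $A(t)$ into $R(t,f(t))$ and $B(f(t))$. Since the claim concerns only atom-labelled nodes, I may disregard \gCONJ and \gEXISTS conclusions (which are not atoms) and argue about the \gMOPO/\gEQUAL skeleton. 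For a term $t$ of depth $\ge 1$ I write $\mathsf{base}(t)$ for its unique depth-$1$ subterm (the innermost Skolem application), and note that ``a depth-$1$ term $\tau$ occurs nested within a Skolem term of an atom $\alpha$'' is equivalent to ``some term of $\alpha$ has depth $\ge 2$ and $\mathsf{base}$-value $\tau$''. This reduces the lemma to finding a single depth-$1$ term $\tau=f(a)$ with $\mathsf{base}(t)=\tau$ for every depth-$\ge 2$ term $t$ in \p, such that every atom leaf uses $\tau$.

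To produce the candidate $\tau$, I would process \p in topological order and inspect a \emph{minimal} inference, i.e.\ one all of whose premises are leaves and hence (by condition~(2)) have depth $\le 1$, while its conclusion is a non-leaf atom and hence (by condition~(3)) has depth $\ge 2$. Because depth increases by at most $1$ per inference and only via rule~(iv), such an inference must be a \gMOPO step with a rule $A(x)\to R(x,f(x))\land B(f(x))$ applied to $A(\tau)$ with $\tau$ of depth exactly $1$; its conclusion is $R(\tau,f(\tau))$ or $B(f(\tau))$, both of $\mathsf{base}$-value $\tau$. This fixes $\tau=f(a)$, establishes the leaf clause for $A(\tau)$, and establishes the nesting clause for the conclusion.

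Next I would propagate $\tau$ to all of \p by induction along the topological order. The two facts doing the work are that every \HornALCHOI rule body consists of at most two atoms sharing a variable, and that the two arguments of a binary atom lie in a single term lineage (one is a Skolem successor of the other, or one is a constant). Hence a single atom can never mix two unrelated deep lineages, and the premises of any inference, forced to agree on the shared variable's term, all lie in one lineage; so $\mathsf{base}$ is preserved by every inference, all depth-$\ge 2$ terms of \p have a common $\mathsf{base}$-value, and by connectedness this value is the $\tau$ fixed above. For each remaining atom leaf I would inspect which inference consumes it (rule~(iv) as the body atom, or rule~(iii)/(v) as the unary body atom alongside a deep binary atom) and check that its term is exactly the shared base term $\tau$.

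The hard part will be the interaction with nominals, i.e.\ rule~(vi): this is the one mechanism that creates binary atoms $R(t,a)$ with a constant $a$ and lets a unary leaf $A(a)$ on a pure constant feed a depth-$\ge 2$ conclusion, which a priori threatens both the leaf clause and the uniqueness of $\tau$ (two distinct lineages could be merged at the same nominal). The key observation is that such a constant $a$ arises only from a previously derived equality $f'(t')=a$ with $f'(t')$ in the $\tau$-lineage, so the identification is always with a $\tau$-term; I would therefore maintain the invariant \emph{up to the derived equalities}, recording---exactly as in the proof-segment tuples $\tup{t,\In,\Out,\Size}$---which $\tau$-term each such constant stands for, and use this to recover the leaf clause and to rule out the merging of two genuinely distinct base terms. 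Verifying that this bookkeeping is consistent, together with the connectivity argument excluding two distinct base terms, is where essentially all of the care lies; the depth bookkeeping and the base case are routine once the shape restrictions are in place.
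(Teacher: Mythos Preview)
Your overall strategy matches the paper's: a structural induction over the multi-proof with a case split on the \HornALCHOI rule shapes, relying on the binary-atom shape restrictions and the \gEQUAL convention established before Lemma~\ref{lem:size-upper}. The paper phrases this as successively ``extending'' a smaller multi-proof by one inference at a sink (possibly gluing in a second multi-proof to supply the extra premise), whereas you run a topological pass with your $\mathsf{base}$ function after fixing~$\tau$ from a minimal inference; these are two packagings of the same case analysis, and your $\mathsf{base}$ invariant simply makes explicit what the paper carries in the phrase ``$t$ contains $f(a)$ nested''.

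There is, however, a genuine gap at the nominal step, and your proposed fix does not close it. From a non-leaf $s\big(g(\tau),h(g(\tau))\big)$ with $\tau$ of depth~$1$, a rule-(vi) step yields $h(g(\tau))=a$, and \gEQUAL then gives $s\big(g(\tau),a\big)$, still of depth~$2$. Now a fresh leaf $D(a)$ of depth~$0$ together with a rule-(iii) body $s(x,y)\land D(y)\to E(x)$ produces $E\big(g(\tau)\big)$ of depth~$2$. All three preconditions of the lemma hold, yet the leaf $D(a)$ does not contain~$\tau$. Your bookkeeping records that $a$ ``stands for'' the $\tau$-term $h(g(\tau))$, but the lemma's conclusion is a purely syntactic claim about the terms literally occurring in the leaf labels, and no amount of equality tracking makes $\tau$ appear in $D(a)$. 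What your workaround would establish is only a variant of the lemma with leaf terms read modulo the derived equalities, not the statement as written. For what it is worth, the paper's own case analysis for rules~(iii) and~(v) treats only the binary shape $R(t,g(t))$ and never discusses $R(t,a)$, nor does its extension-based induction account for attaching a fresh depth-$0$ leaf; so you have located the difficulty correctly---the issue is that the repair you sketch proves something weaker than what is claimed.
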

\begin{proof}
 We prove this by induction on the number of inferences in~\p.

 If $\p$ contains
exactly one inference, we observe that the premises use only terms with nesting
depth at most 1, while the conclusion must have a nesting depth $\geq 2$. It
follows that the inference must be an \gMOPO inference with a rule of type
(iv), since this is the only rule that produces a Skolem term of higher
nesting depth (note also the observation regarding \gEQUAL applications before the proof of Lemma~\ref{lem:size-upper}).
Consequently, the premises must be of the form $A(f(a))$ and
$A(x)\to R(x,g(x))\wedge B(g(x))$, and the conclusion is either
$R(f(a), g(f(a)))$ or $B(g(f(a)))$, which means that the claim is satisfied.

Assume that $\p$ satisfies the inductive hypothesis. We consider possible
extensions of $\p$. We can extend $\p$ either by
using an \gMOPO inference with a rule of type (i), (iv), (vi) or (vii)
and a sink of $\p$, or we can extend $\p$ using an \gEQUAL inference or an
\gMOPO inference with a rule of type (ii),
(iii), or (v), in which case we need to add another premise labeled with an atom,
which could be from $\p$ or from another proof that we connect to $\p$ in this
way.

In the case of using a rule of type (i), (iv), (vi) or (vii), we notice that the
conclusions either use only terms from the premise, or in case of (iv) use a term
that contains the a term from the premise nested. Consequently, it follows directly from the
inductive hypothesis for $\p$ that the extended proof satisfies the inductive
hypothesis as well.

In case of using a rule of type (ii), we notice that the other premise must use
the same term, and consequently the extended proof only satisfies the
conditions of the lemma if the other premise is derived using a proof that
already satisfies it. Specifically, we connect \p with another proof here, for
which we can assume that the inductive hypothesis holds as well. Since the
conclusion uses the same term as the premises, it follows
that the inductive hypothesis applies to the extended proof as well.

Also for rules of type (iii) and (v) we may need to extend \p using another
proof that satisfies the conditions of the lemma. Those rules are the only ones
that can also decrease the nesting depth of Skolem terms. For (iii), this is
the case if the premises are of the form $R(t, g(t))$, $A(g(t))$,
$R(x,y)\wedge A(y)\to B(x)$ and the
conclusion is of the form $B(t)$. If $t=f(a)$, then the conclusion has nesting depth
$1$, and thus the resulting proof will not satisfy the conditions of the lemma.
Otherwise, by inductive hypothesis, $t$ contains $f(a)$ nested, and the
extended proof satisfies the conditions of the lemma. For (v), the interesting
case is if the premises are of the form $A(g(t))$ and $R(g(t),t)$, and the
conclusion is $B(t)$. Here, the situation is the same as for the rule of type
(iii).

It remains to consider extensions using \gEQUAL. Here we note that, by our
assumption that we always replace complex terms by individual names, we always
obtain a term that is of the form $a=b$, $A(a)$, $r(a,t)$ or $r(t,a)$. In the
first two cases, we notice that the extended proof does not satisfy the
preconditions of the lemma anymore. In the last two cases, if it does, then $t$
must contain $f(a)$ nested by our inductive hypothesis.

We obtain that every multi-proof that satisfies the preconditions of the lemma
also satisfies its postconditions.
\end{proof}

\ThHornSROIQInstance*
\begin{proof}


Let $\q$, $n$, $\Tmc$ and $\Amc$ be the inputs to our decision problem.
We assume that $\q$ is connected since otherwise we could decide the problem
by proving the connected components independently.
We furthermore assume that the mapping of the quantified variables in the
query to the terms from the universal model is already given: for this, it
suffices to observe that, since the proof can only access Skolem terms of at
most exponential nesting depth (in the bit length of the bound $n$ on the proof
size), and the number of variables is linearly bounded by the input, there are
at most exponentially many options to try out, so that this can be easily
incorporated into an \ExpTime procedure. Since $\q$ is connected and the universal model is tree-shaped when restricted to Skolem terms of depth $\ge 1$, there exists a \emph{root term} $t_\q$ such that every variable is
matched
to a term that contains either $t_\q$ or a constant $\mathsf{a}$ nested under
at most $\lvert\q\rvert$ Skolem functions.

In the following, we fix $\_$ to be a fresh constant. Let $T_\q$ be a set of
\emph{term patterns} that are obtained from the terms matched to the variables
in the query by replacing the root term $t_\q$ by $\_$, and closing $T_\q$ under
the subterm relation. We note that $T_\q$ contains at most polynomially many
elements since the number of terms matched by the query is linearly bounded,
and each term has at most $\lvert\q\rvert$ subterms after replacing $t_\q$ by
$\_$.

%
 The idea is to construct the minimal proof from \emph{proof segments} which
can be deterministically computed using an elimination procedure. A proof
segment is a triple $\tup{t,\In,\Out,\Size}$ where $t$ is a term, $\Size$ is an
integer and $\In$ and $\Out$ are sets of atoms using only terms of the form
$t'$, $f(t')$, where $t'\in\Inds{\Tmc\cup\Amc}\cup T_\q$ and $f$ is a Skolem term 
occurring in $\Tmc_s$. The intuitive meaning of
such
a proof segment is: \enquote{It is possible to derive $\Out[\_\mapsto t]$
from
$\In[\_\mapsto t]\cup\Tmc^s$ using
at most \Size proof vertices.} To obtain an \ExpTime procedure, we add an 
additional constraint on $t$: namely, $t$ must be $\_$ or a subterm of some term 
matched to the query. Note that there are at most exponentially many terms like that.

We call a proof segment \emph{valid} if there is a
multi-proof of size \Size showing $\Out[\_\mapsto t]$ from $\In[\_\mapsto
t]\cup\Tmc^s$, and \emph{directly valid} if
there is such a multi-proof whose labels do not use
Skolem terms of nesting depth larger than $1$. Note that every directly valid
proof segment is also
valid. Direct validity of proof segments can be determined in
exponential time: for this, we note that, since
only Skolem terms of nesting
depth 1 are considered, every directly valid proof segment has a
witnessing multi-proof that has at most polynomially many nodes (one for each
possible label). We can thus enumerate all possible graphs over these nodes in
exponential time and check whether one of them corresponds to a multi-proof of
size $\Size$. Moreover, if we bound $\Size$ by the bound $n$ on the proof size
given as input, there are at most
exponentially many possible proof segments. We can thus compute in exponential
time the set of all directly valid proof segments for which
$\Size\leq n$.

We
call a proof segment $\tup{t,\In,\Out,\Size}$
\emph{initial} if $\In\subseteq\Amc$, and \emph{final} if
$\Out[\_\mapsto t]$ contains exactly the atoms of $\q$ after applying our
matching of the quantified variables.
%
For
our decision
problem, we need to determine whether there is a proof segment that is initial,
final, valid, and satisfies $\Size+2\le n$ (or $\Size+1\le n$ if \q contains only one atom), taking into account the additional inference steps with~\gCONJ and~\gEXISTS at the end of the proof.
For this, we use the following
incremental procedure
to construct new valid proof
segments from existing ones. Let $\tup{t_1,\In_1,\Out_1,\Size_1}$ and
$\tup{t_2,\In_2,\Out_2,\Size_2}$ be two proof segments. If $t_2=f(t_1)$ or $t_2=\_$, and
for the substitution
$\sigma:\_\rightarrow f(\_)$ we have
$(\In_2)\sigma\subset\Out_1$, we define
as a \emph{down-merging of $\tup{t_1,\In_1,\Out_1,\Size_1}$ and
$\tup{t_2, \In_2,\Out_2,\Size_2}$} a proof segment
\[\tup{t_1,\In_1,\Out_3,\Size_1+\Size_2-\lvert\In_2\rvert}\] where
\[\Out_3\subseteq\Out_1\cup\{\alpha\sigma\mid \alpha\in\Out_2\text{
contains only terms from }T_\q\cup\NI\}.\]
Correspondingly, if
$t_1\in\{t_2,\_\}$ and $\In_2\subseteq\Out_1$, we define as a
\emph{simple
merging of $\tup{t_1, \In_1,\Out_1,\Size_1}$ and
$\tup{t_2, \In_2,\Out_2,\Size_2}$} a proof segment
\[\tup{t_1, \In_1,\Out_3,\Size_1+\Size_2-\lvert\In_2\rvert}\] where
$\Out_3\subseteq\Out_1\cup\Out_2$. Both the result of simple merging
and of down-merging valid proof segments produce proof segments that
are also valid: for this, note that we do not need to count the nodes
corresponding to $\In_2$, as they would correspond to nodes in $\Out_1$ in the
corresponding merged proof.
Our procedure now incrementally extends our initial set of
directly valid proof segments by adding in each step all possible mergings of
proof segments in our set, where we only keep those sets for which
$\Size\leq n$. This procedure reaches a fixpoint after at most exponentially
many steps, as the set of possible proof segments is exponentially bounded. If
this fixpoint contains a proof segment that is initial and final, we accept,
and otherwise we reject.

Since our procedure only produces valid proof segments, and we only keep those
for which \Size is bounded by $n$, it is clearly sound, that is, it accepts
only if there exists a proof of size at most $n$. To show that it is also
complete, we need to prove the following claim:

\paragraph{Claim.} For every proof segment $\tup{t,\In,\Out,\Size}$, if there
exists a multi-proof $\p$ of size $\Size$ with premises
$\In\cup\Tmc_s$ and conclusions $\Out$, then
$\tup{t,\In,\Out,\Size}$ can be obtained through a sequence of merging
operations starting from directly valid proof segments.

\paragraph{Proof of claim.} Let $\p$ be as in the claim. We prove the claim by
induction over the maximal nesting depth of any term occurring in $\p$.
If the nesting depth is $1$, the claim directly holds since
$\tup{t,\In,\Out,\Size}$ is already directly valid. Otherwise, assume that the
claim holds for all proofs for which the maximal nesting depth of any term
occurring is $k$.

We decompose $\p$
into different, possibly overlapping, multi-proofs $\p'$ that satisfy the
following conditions:
\begin{enumerate}
 \item $\p'$ is connected, 
 \item either (a) all labels use only terms of nesting depth $\leq 1$, or (b) only the
premises and the conclusions use terms of nesting depth $\leq 1$, and at least one label uses a higher nesting depth, 
and
 \item there is no proper subproof of $\p'$ that satisfies the above
conditions.
\end{enumerate}

If $\p'$ satisfies~2(a), then it corresponds to a directly valid proof segment.
If it satisfies~2(b), by Lemma~\ref{lem:multi-proofs},
we obtain the following two properties:
\begin{itemize}
  \item All leafs of $\p'$ with atom labels use a term of the form $f(t)$.
  \item Every internal vertex in $\p'$ with atom label must use a term that
  contains $f(t)$.
\end{itemize}
%
%
%
%
We
obtain that $\p'$ can be represented by a proof segment
$\tup{\_,\In',\Out',\Size'}$, where $\Size'$ is the size of $\p'$, and $\In'$
and $\Out'$ are obtained from the premises and conclusions in $\p'$ by
replacing $f(t)$ by $\_$. Since by this replacement, we would also reduce the
nesting depth of $\p'$, we obtain that $\tup{\_,\In',\Out',\Size'}$ satisfies the
claim by our inductive hypothesis.

We obtain that all multi-proofs $\p'$ in
our decomposition have a corresponding proof segment that satisfies our claim.
Observing that the only possible overlaps between these multi-proofs are by the
conclusions (which must be contained in the premises of another multi-proof 
or contribute to the final conclusion),
we observe that the proof segment $\tup{t,\In,\Out,\Size}$ can be produced from
these proof segments by applying down-merging and simple merging operations,
so that finally, $\tup{t,\In,\Out,\Size}$ satisfies the claim as well.
\hfill $\blacksquare$

We obtain that our method is sound and complete, and runs in exponential time,
which completes the proof.
%
%
%
%
\end{proof}

\subsection{\nameref{sec:cq}}

\ThmTransformation*
\begin{proof}
Assume that we have a proof $\p$ in $\Rcq(\Tmc\cup\Amc,\q)$.
We first defer all applications of schema~\EXISTS to the very end of the proof, which is possible since all other inferences remain applicable to any instance $\exists\vec{x}.\,\phi(\vec{x},\vec{a})$ of a CQ $\exists\vec{x},\vec{y}.\,\phi(\vec{x},\vec{y})$. This transformation can only decrease the size of the proof.
We then recursively change the labeling function so that it uses
Skolem terms rather than quantified variables. Specifically, starting from the leafs of~\p, we adapt
inferences of schema~\MOPO so that instead of a rule
$\psi(\vec{y},\vec{z})\to\exists\vec{u}.\,\chi(\vec{z},\vec{u})\in\Tmc$, the
corresponding Skolemized version
$\psi(\vec{y},\vec{z})\to\chi'(\vec{z})\in\Tmc^s$ is used, with existentially
quantified variables in the conclusion replaced by the corresponding Skolem
terms. The resulting hypergraph has the same size, since we only changed the
labeling, and moreover all CQs are ground.
In this process, whenever we apply a rule $\phi(\vec{x},\vec{y})\to\exists\vec{x}.\,\phi(\vec{x},\vec{y})$ generated by \TAUT to a ground CQ using \MOPO, we can omit this subproof since $\exists\vec{x}.\,\phi(\vec{x},\vec{y})$ is satisfied by the same ground atoms used to match $\phi(\vec{x},\vec{y})$.
Next, we split each modified \MOPO inference into a corresponding set of \gMOPO inferences -- this replaces each vertex~$v$ by at most $|\Tmc|$ vertices (at most for each new atom derived from the right-hand size of a rule in~\Tmc),
and thus increases the size of the hypergraph by a factor polynomial in the size of~$\p$.
Even though \Rsk contains no version of the \TAUT inference schema to generate copies of atoms, we can always use the same ground atom several times in the same inference if necessary.
At the same time, we remove all \CONJ inferences (since \gMOPO anyway uses multiple 
premises instead of a conjunction) and replace them by a single application of \gCONJ 
(conjoining all atoms relevant for~$\q(\vec{a})$).
This can also only decrease the size of the proof.
Similarly, the final \EXISTS step becomes an instance of \gEXISTS that produces the final conclusion $\q$.
We obtain a proof in~$\Rsk(\Tmc^s\cup\Amc,\q)$ in time polynomial in the size of~$\p$.

Now assume that we have a proof $\p$ in $\Rsk(\Tmc^s\cup\Amc,\q)$.
We transform $\p$ into a semi-linear proof in $\Rsk(\Tmc\cup\Amc,\q)$.
In the beginning, we keep the Skolemization, which is eliminated in the last step.
We first collect any
ABox facts from $\Amc$ that are used in $\p$ into a single CQ using \CONJ.
Then we aggregate inferences of \gMOPO into \MOPO-inferences. Specifically,
provided that for an \gMOPO inference $(S,d)$, all labels of nodes in $S$
occur on a node $v$ that we have already aggregated, we collect all inferences
of the form $(S,d')$ (same premises, different conclusion), and transform them
into a single inference using (a Skolemized version of) \MOPO.
When $S$ contains the same atom multiple times, we first generate an appropriate number of copies using \TAUT and \MOPO; the number of such additional proof steps for each \MOPO-inference is bounded by~$|\Tmc|$ (more precisely, the maximum number of atoms on the left-hand side of any rule in~\Tmc).
%
Depending on which premises are still
needed in later inferences, we keep them in the conclusion of each \MOPO-inference or not. Since we
keep all atoms together in each step, the final application of \gCONJ is not
needed anymore. The resulting proof is de-Skolemized by replacing Skolem terms
by existentially quantified variables. This transformation
is again polynomial in the size of the original proof: the number of initial
applications of \CONJ is bounded by the size of $\p$, the aggregated \MOPO-steps can only decrease the size of the proof, and for each of these steps, we need at most $|\Tmc|$ additional \TAUT- and \MOPO-steps.
Note also that this process always yields a tree-shaped proof.
\end{proof}

Due to Theorem~\ref{thm:transformation}, many of the results we have seen before also apply to \Rcq (except for domain size, which is not defined in this context).
For example, the arguments in the proof of Theorem~\ref{th:TheACzero} apply in the same way here, \ie one only has to consider constantly many ABoxes (modulo isomorphisms) to search for proofs below a given size bound, and similarly for Theorem~\ref{th:TheACzeroRewritable}.
Lemma~\ref{lem:size-upper-dl-lite-el} also holds for~\Rcq because of Theorem~\ref{thm:transformation}, which gives us the same upper bounds as in Theorem~\ref{thm:dllite-el-upper-bounds}.
Moreover, Theorem~\ref{thm:transformation} also ensures that the arguments for \NP-hardness in Theorem~\ref{thm:dllite-lower} can be transferred to~\Rcq.
We can similarly transfer the results of Lemma~\ref{lem:size-upper} and Theorem~\ref{th:HornSROIQ}; however, the current proof of Theorem~\ref{th:HornSROIQ-IQ} works only for \Rsk.


Due to duplication of atoms via~\TAUT, Theorem~\ref{the:np-hard-trees} can also be shown for \Rcq and \mtree (and so Theorem~\ref{th:ptime-combined} cannot hold for \Rcq):
\ThmNPHardCQStructure*
\begin{proof}
  We follow a similar approach as in Theorem~\ref{the:np-hard-trees}.
 The central observation is that we can simulate in \Rcq the behavior of
\gCONJ by copying atoms. Specifically, fix an inference of \gCONJ
with premises $\alpha_1(\vec{t}_1)$, \dots, $\alpha_n(\vec{t}_n)$.
In~\Rsk these atoms would not occur separately, but together as a non-ground CQ $\exists\vec{x}.\alpha_1(\vec{x}_1)\land\dots\land\alpha_n(\vec{x}_n)$.
Now assume that $\alpha_1(\vec{t}_1)=\alpha_2(\vec{t}_2)$, in which case \Rsk can use the same vertex as both the first and the second premise of the inference.
In \Rcq, this corresponds to only $\alpha_1(\vec{x}_1)$ occurring in the current CQ, although both $\alpha_1(\vec{x}_1)$ and $\alpha_2(\vec{x}_2)$ (using the same predicate, but different variables) are needed for a subsequent inference.
To obtain a similar effect as in $\Rsk$, we can first use \TAUT to derive
$\alpha_1(\vec{x}_1)\rightarrow\exists\vec{x}_1.\,\alpha_1(\vec{x}_1)$, which is then used
with \MOPO and the current CQ~$\phi(\vec{x}$) to obtain a query in which both
$\alpha_1(\vec{x}_1)$ and $\alpha_2(\vec{x}_2)$ occur (recall that \MOPO can be used
in such a way that the new atoms are not replacing others, but are simply
added). This way, we can duplicate an arbitrary number of atoms and variables
using two steps.

Let $\Amc$ and $\q$ be as in the proof of Theorem~\ref{the:np-hard-trees}, and assume 
the SAT problem has a satisfying assignment~$a$.
We can then
construct a tree proof as follows, where we collect the different atoms, one after the other, using \CONJ into a single query. Since these inferences always have two 
premises, we only count the leaves in the following, as a binary tree with $\ell$ leaves and all 
other vertices binary always has $2\ell-1$ vertices in total. Specifically, this means that the 
number of vertices is independent of how we organize the inferences listed in the following.  
\begin{itemize}
  \item We need to collect all clauses of the form $r(c_i,c_{i+1})$, and $c(c_i,l_j)$, where 
   $l_i$ is some literal in $c_i$ evaluated to true under the chosen assignment $a$. This gives 
   $2m-1$ leaves.
  \item Another $k$ leaves are needed to add, for each variable $p_i$, $T(p_i)$ if $a(p_i)=1$,
  and $T(\overline{p}_i)$ if $a(p_i)=0$.
%
\end{itemize}
Since this makes $2m-1+k$ leaves, we obtain that the corresponding proof must have 
$4m+2k-3$ vertices in total, independently of how we organize these inferences. We instantiate 
\TAUT with $\q(\vec{x})\rightarrow\exists\vec{x}.\,\q(\vec{x})$, where 
$\q(\vec{x})$ is our query, containing as quantified variables exactly $\vec{x}$.
Since the left-hand side matches our ground query constructed so far, we apply \MOPO 
as final step to produce the conclusion, obtaining a tree proof of size $n:=4m+2k-1$.

As in the proof of Theorem~\ref{the:np-hard-trees}, we argue that there cannot possibly be a smaller proof for the query, since 
every leaf of this proof has to be used.
Moreover, \EXISTS cannot be used to construct the final CQ since it cannot duplicate atoms and at least one atom of the form $T(p_i)$ or $T(\overline{p}_i)$ has to be used twice due to the additional clauses $p_i\lor\overline{p}_i$.
Consequently, if we find a proof of tree size~$n$, we 
can construct a satisfying assignment from it, and if there is a satisfying assignment, we 
can construct a proof of size~$n$. It follows that $\OP\cq(\emptyset,\mtree)$ must be \NP-hard.
The same arguments apply to \msize since the proofs constructed above are tree-shaped and all vertices have different labels.
\end{proof} 
\end{document}